\newtheorem{lemma}{Lemma}
\newtheorem{corollary}{Corollary}
\newtheorem*{remark}{Remark}
\definecolor{cobalt}{rgb}{0.0, 0.28, 0.67}
\newcommand{\gz}[1]{\noindent \textcolor{blue}{{\textbf{GZ:} #1}}}
\newcommand{\hl}[1]{\noindent \textcolor{magenta}{{\textbf{HL:} #1}}}
\newcommand{\gw}[1]{\noindent \textcolor{red}{{\textbf{GW:} #1}}}
\newcommand{\rj}[1]{\noindent \textcolor{purple}{{\textbf{RJ:} #1}}}
\newcommand{\delgz}[1]{{\color{blue} \sout{#1}}}
\newcommand{\delgw}[1]{{\color{red} \sout{#1}}}
\newcommand{\delrj}[1]{{\color{purple} \sout{#1}}}
\newcommand{\todo}[1]{\noindent \textcolor{red}{{\textbf{TODO:} #1}}}
\newcommand{\rj}[1]{}
\newcommand{\delrj}[1]{}
\newcommand{\gw}[1]{}
\newcommand{\delgw}[1]{}
\newcommand{\hl}[1]{}
\newcommand{\delhl}[1]{}
\newcommand{\gz}[1]{}
\newcommand{\delgz}[1]{}
\newcommand{\todo}[1]{}
\newcommand{\ECdelete}[1]{}
\DeclareMathOperator{\Var}{Var}         %
\DeclareMathOperator{\Cov}{Cov}         %
\DeclareMathOperator*{\argmin}{argmin}  %
\DeclareMathOperator{\Unif}{Unif}
\DeclareMathOperator{\Binom}{Binom}
\DeclareMathOperator{\Poisson}{Poisson}
\newcommand{\E}{\mathbb{E}}
\newcommand{\R}{\mathbb{R}}
\renewcommand{\v}[1]{\ensuremath{\boldsymbol{#1}}}
\newcommand{\MSE}{\ensuremath{\mathsf{MSE}}}
\newcommand{\GTE}{\ensuremath{\mathsf{GTE}}}
\newcommand{\LR}{\ensuremath{\mathsf{LR}}}
\newcommand{\CR}{\ensuremath{\mathsf{CR}}}
\newcommand{\GC}{\ensuremath{\mathsf{GC}}}
\newcommand{\GT}{\ensuremath{\mathsf{GT}}}
\newcommand{\conprob}{\ensuremath{p}}
\newcommand{\tildeconprob}{\ensuremath{\tilde{p}}}
\newcommand{\appprob}{\ensuremath{q}}
\newcommand{\bookprob}{\ensuremath{r}}
\newcommand{\conrate}{\ensuremath{\phi}}
\newcommand{\conratematrix}{\ensuremath{\Phi}}
\newcommand{\tildeconrate}{\ensuremath{\tilde{\phi}}}
\newcommand{\tildeconratematrix}{\ensuremath{\tilde{\Phi}}}
\newcommand{\apprate}{\ensuremath{\psi}}
\newcommand{\appratematrix}{\ensuremath{\Psi}}
\newcommand{\tildeapprate}{\ensuremath{\tilde{\psi}}}
\newcommand{\tildeappratematrix}{\ensuremath{\tilde{\Psi}}}
\newcommand{\bookrate}{\ensuremath{\omega}}
\newcommand{\bookratematrix}{\ensuremath{\Omega}}
\newcommand{\tildebookrate}{\ensuremath{\tilde{\omega}}}
\newcommand{\tildebookratematrix}{\ensuremath{\tilde{\Omega}}}
\newcommand{\ginG}{\gamma\in\Gamma}
\newcommand{\tinT}{\theta\in\Theta}
\newcommand{\bsigma}{\v{\sigma}}
\newcommand{\gtehatlr}{\widehat{\GTE}_{\LR}}
\newcommand{\gtehatcr}{\widehat{\GTE}_{\CR}}
\begin{document}

\title{Interference, Bias, and Variance in Two-Sided Marketplace Experimentation: Guidance for Platforms}

\author{
Hannah Li \\
hannahli@stanford.edu
\\
\\
Ramesh Johari\\
rjohari@stanford.edu
\and 
Geng Zhao \\
gengz@stanford.edu
\\ \\
Gabriel Y. Weintraub \\
gweintra@stanford.edu
}

\maketitle

\begin{abstract}
Two-sided marketplace platforms often run experiments (or A/B tests) to test the effect of an intervention before launching it platform-wide. A typical approach is to randomize individuals into the treatment group, which receives the intervention, and the control group, which does not. The platform then compares the performance in the two groups to estimate the effect if the intervention were launched to everyone.  We focus on two common experiment types, where the platform randomizes individuals either on the supply side or on the demand side. For these experiments, it is well known that the resulting estimates of the treatment effect are typically biased: because individuals in the market compete with each other, individuals in the treatment group affect individuals in the control group (and vice versa), creating interference and leading to a biased estimate. 
       
We develop a simple, tractable market model to study bias and variance in these experiments with interference. We focus on two choices available to the platform: (1) Which side of the platform should it randomize on (supply or demand)? (2) What proportion of individuals should be allocated to treatment?  We find that both choices affect the bias and variance of the resulting estimators, but in different ways. The bias-optimal choice of experiment type depends on the relative amounts of supply and demand in the market, and we discuss how a platform can use market data to select the experiment type. Importantly, we find that in many circumstances, choosing the bias-optimal experiment type has little effect on variance, and in some cases can coincide with the variance-optimal type. On the other hand, we find that the choice of treatment proportion can induce a bias-variance tradeoff, where the bias-minimizing proportion actually increases variance. We discuss how a platform can navigate this tradeoff and best choose the treatment proportion, using a combination of modeling as well as contextual knowledge about the market, the risk of the intervention, and reasonable effect sizes of the intervention.
\end{abstract}

\section{Introduction}
\label{sec:intro}

Two-sided marketplace platforms often run experiments (or A/B tests) to test the effect of an intervention on a subset of the platform before launching it platform-wide. This experimentation approach allows platforms to make data-driven decisions in deciding what features to launch and allows platforms to try out risky, but potentially beneficial, ideas before committing to them \cite{kohavi2020trustworthy}. Our focus is on experiments in two-sided marketplaces, which include markets for freelancing, ridesharing, and lodging, among others. 

When a platform runs an experiment, the goal is to estimate the effect that an intervention would have on a metric of interest if it were launched to the entire platform, compared to the case when the intervention is not introduced to anyone; we call this effect the \textit{global treatment effect} or $\GTE$.
A typical experimental approach is to randomize individuals into the treatment group, which receives the intervention, and the control group, which does not. In the two-sided markets we consider, there are two natural types of experiments that are typically run in practice: one that randomizes on the supply side (which we call listing-side randomization, or $\LR$) and one that randomizes on the demand side (which we call customer-side randomization, or $\CR$). 

In both of these types of experiments, the standard difference-in-means estimators are often biased estimates of the true $\GTE$. Individuals in the market interact and compete with each other, creating interference between units and violating the typical Standard Unit Treatment Value Assumption (SUTVA) that guarantees unbiased estimators. Such interference can lead to biased estimates \cite{ImbensRubin15}.

To see how bias arises due to marketplace competition, consider a $\CR$ experiment where customers are randomized into treatment and control groups. In the experiment, {\em both} treatment and control customers interact with the same supply, and thus, if a treatment customer is able to make a purchase, that mechanically implies a reduction in effective supply for control customers.   %
These interactions lead to interference in a statistical sense, and create a bias in the resulting estimators. A similar argument applies to estimators resulting in $\LR$ experiments, which are biased because {\em both} treatment and control listings interact with the same customers.

The existence of interference on marketplace experiments is well documented (see related work below); indeed, previous studies have shown that the resulting bias can be as large as the $\GTE$ itself \cite{Blake14,holtz2020reducing,Fradkin2015SearchFA}. To address this problem, many platforms have adopted alternative experiment designs such as clustered experiments or switchback experiments, where the platform randomizes on geographical units or intervals of time instead of randomizing individuals \cite{Ugander13,chamandy16,sneider19,bojinov2021design}. These types of experiments can decrease bias but also increase variance. The implementation of these designs is also more complicated than designs that randomize on individuals and, for many platforms, can create significant engineering challenges; improperly implemented, such designs can suffer from bias as well \cite{glynn2020adaptive,Ugander13}. For these reasons, $\CR$ and $\LR$ experiments continue to be popular designs within two-sided platforms. 

The goal of this work is to investigate the bias and variance of these simpler designs, with the aim of providing guidance to platforms on the use of these designs.  We are particularly interested in the impact of two choices: (1) which side of the platform to randomize on (customers or listings); and (2) the proportion of individuals allocated to the treatment group (the treatment allocation).  Our work contributes to the toolkit of techniques available to two-sided platforms to reduce estimation error, and in particular guides platforms without the engineering resources available to implement more complicated designs.  We analyze bias and variance of both $\CR$ and $\LR$ estimators, and we discuss the practical implications for a platform. 

Below we describe our main contributions in more detail.

\noindent\textbf{Market model for study of experimental designs.} 
We develop a simple, tractable market model that captures the relevant competition effects leading to interference (Section \ref{sec:model}). %
The model consists of $N$ listings on the supply side and $M$ customers on the demand side. We allow for heterogeneity on both sides. Customers book (or buy) a listing through a one-shot model that involves three steps. First, each customer forms a consideration set from the set of listings. In this step, the customer includes each listing in its consideration set independently with some probability that depends on both the customer type and listing type. We call this probability the \textit{consideration probability}. Then, each customer applies to a listing in its consideration set at random. Finally, a listing sees the set of applications it received and, if it received at least one application, it accepts an application at random.  

This booking process captures competition among customers and among listings. Because a customer can apply to at most one listing, the listings are in competition with each other for applications. Likewise, because each listing can accept at most one application, customers are in competition with each other for resources.  These competition effects capture the interactions leading to interference in marketplace experiments.  

We describe how to use this model to study experimental designs (Section \ref{sec:experiment_design}).
We focus on interventions that change the consideration set formation process, specifically those that change the consideration probabilities. Such interventions include those that modify a platform’s interface, the amount of information shown about a listing, or the search and recommendation system. Changes in these consideration probabilities propagate in the booking process to also change the probability that a customer will apply to a given listing and the ultimate probability of booking. %
 We study experiments that randomize on the supply side, which we call listing-randomized ($\LR$) designs, and experiments that randomize on the demand side, which we call customer-randomized ($\CR$) designs.

\noindent\textbf{Characterization of bias and variance.} 
The competition that arises between customers and between listings creates interdependencies that complicate the analysis. To make the analysis tractable, we consider a large market regime in which both the number of listings and customers scale to infinity proportionally. Section \ref{sec:limiting_behavior} characterizes the booking behavior in this large market setting.
We use these characterizations to study bias and variance of estimators in our experiments.  Section \ref{sec:bias_and_var_of_lr_cr} derives expressions for the bias and variance of estimators in the $\LR$ and $\CR$ designs, as a function the experiment type, market conditions, the change in choice probability, and the proportion of customers allocated to treatment. %

The bias and variance of an experiment depend on both market conditions, such as the ratio of supply and demand in the market, as well as the decisions that that the platform makes when running an experiment. That is, there are some factors which affect the experiment outcomes that are \textit{beyond the control} of the platform (at least, in the short term) and there are other factors that the platform \textit{can} control.   In the remainder of the paper, we then focus on the factors that the platform can control, namely the experiment type and the proportion allocated to treatment.

\noindent\textbf{Optimizing choice of experiment type.}  For some (but not all) interventions, the platform can choose whether to run a $\CR$ or $\LR$ experiment. Consider an intervention that provides further information on a listing, e.g., lengthening its description. This intervention could be tested through either a $\CR$ or $\LR$ experiment. In a $\CR$ experiment, customers would be randomized into treatment and control;  treatment customers would see listings with lengthier descriptions and control customers would see listings with the original descriptions. In an $\LR$ experiment, listings would be randomized into treatment and control; all customers would see treatment listings with lengthier descriptions and control listings with the original descriptions.

Section \ref{sec:optimizing_exp_type} shows that, among $\CR$ and $\LR$ designs, the bias-optimal experiment type depends on the relative amounts of supply and demand in the market. We provide a theorem showing that the difference-in-means $\CR$ estimator becomes unbiased as market demand becomes small and that the difference-in-means $\LR$ estimator becomes unbiased as market demand grows large. Further, we find through calibrated simulations that choosing the bias-optimal design between $\CR$ and $\LR$ has little effect on variance, and in some cases can coincide with the variance-optimal design. Hence, in short, if the choice of experiment type is available to the platform, a relatively demand-constrained market should run a $\CR$ experiment and a relatively-supply constrained market should run an $\LR$ experiment to minimize estimation error.

\noindent\textbf{Optimizing choice of treatment allocation proportion.} In other settings, however, the choice of experiment type ($\CR$ or $\LR$) may not be available to the platform.  E.g., in the preceding example where treatment increases listing description length, perhaps the difference between the two description versions is so stark that running an $\LR$ design and thus showing a given customer both the original and extended descriptions would create a disruptive customer experience, leaving $\CR$ as the only option. In another example, suppose that the platform is testing a price reduction but for legal reasons is unable to show different customers different prices for a given listing. In such a scenario, $\LR$ is the only valid design. There may be a number of such reasons why a platform is constrained to using a single design, either $\LR$ or $\CR$.

This observation leads us to also study the choice of the proportion of individuals allocated to treatment as a variable the platform might optimize to reduce estimation error.  In Section \ref{sec:optimizing_allocation}, we find that in many circumstances the treatment allocation induces a bias-variance tradeoff in both $\CR$ and $\LR$ designs. We find that the variance-optimal decision is typically to allocate an equal number of customers to treatment and control, which is what is typically done in practice. %
However, we provide theorems showing that under appropriate conditions, the bias will change monotonically in the treatment proportion, and as a result more extreme allocations reduce bias. We discuss how a platform can navigate this bias-variance tradeoff using a combination of modeling and contextual knowledge.  We also compare the relative importance of changing the experiment type compared to changing the treatment allocation.

\subsection{Related Work}

{\bf SUTVA.} The interference described in these experiments are violations of the Stable Unit Treatment Value Assumption (SUTVA) in causal inference  \cite{ImbensRubin15}, which
requires that the (potential outcome) observation on one unit should be unaffected by the particular assignment of treatments to the other units.
A large number of recent works have investigated experiment design in the presence of interference, particularly in the context of markets and social networks. 

\noindent {\bf Interference in marketplaces.} Existing work has shown that bias from interference can be large. Empirical studies \cite{Blake14, holtz2020reducing} and simulation studies \cite{Fradkin2015SearchFA} show that the size of the bias can range from one third the size to the same size as the treatment effect itself. 
Recent work has developed methods to minimize this bias using modified randomization schemes \cite{Basse16, holtz2020limiting}, experiment designs where treatment is incrementally applied to a market (e.g., small pricing changes) \cite{Wager19}, and designs that randomize on both sides of the market \cite{bajari2019double, johari2021experimental}. Specialized designs have also been designed for particular interventions, such modifications in ranking algorithms \cite{hathuc2020counterfactual}. 

In practice, platforms looking to minimize interference bias generally run clustered randomized designs \cite{chamandy16}, in which the unit of observation is changed, or switchback testing \cite{sneider19}, where the treatment is turned on and off over time. Both approaches create a large increase in variance due to the reduction in sample size, and recent work has aimed to minimize this variance in switchback designs \cite{glynn2020adaptive, bojinov2021design}. Still, due to variance concerns and ease of implementation, simpler $\CR$ and $\LR$ designs are often used in practice despite the bias that can arise. 

Our work focuses on these $\CR$ and $\LR$ designs, and how the choice of design and the proportion allocated to treatment affects bias and variance. Similar results about the choice of design and the bias were shown in a dynamic market model \cite{johari2021experimental}, although the choice of allocation and the variance is not explored in this work. %

\noindent {\bf Interference in social networks.}  A bulk of the literature in experimental design with interference considers an interference that arises through some underlying social network: e.g., \cite{Manski13, Ugander13, Athey18, Basse19, Saveski17}. In particular, \cite{pouget2019variance} and \cite{zigler2018bipartite} consider interference on a bipartite network, which is closer to a two-sided marketplace setting.

\noindent \textbf{Two-sided market model.} Our model is adopted from the work in \cite{burdett2001} which develops a clean market model that captures competition among supply and among demand, in order to study pricing. We apply the model to study experiment design. 
\section{Model}
\label{sec:model}

In this section, we describe our stylized static model for bookings in two-sided marketplaces. The supply side consists of $N$ \emph{listings} and the demand side consists of $M$ \emph{customers}. We consider a sequence of markets as we scale up both $N$ and $M$ and study the performance of experimental designs as the market grows large. 

\textbf{Listings.}
The market consists of $N$ listings that can each be matched to at most one customer.

We allow for heterogeneity of listings. Each listing $l$ has a type $\theta_l \in \Theta$ where $\Theta$ is a finite set. Let $t^{(N)}(\theta)$ denote the number of listings of type $\theta$ in the $N$'th system. For each $\theta$ assume that $\lim_{N \rightarrow \infty} t^{(N)}(\theta) / N = \tau(\theta)>0$. 
Let $\v{t}^{(N)} =\left( t^{(N)}(\theta) \right)_{\theta \in \Theta} $ and $\v{\tau} = \left( \tau(\theta) \right)_{\theta \in \Theta}$. 

For future reference, if all listings have the same type, we say that listings are {\em homogeneous}.

\textbf{Customers.}
There are $M^{(N)}$ customers in the $N$'th system. Each customer $c$ has a type $\gamma_c \in \Gamma$, where $\Gamma$ is a finite set. Let $s^{(N)}(\gamma)$ denote the number of customers of type $\gamma$ in the $N$'th system. Assume that $\lim_{N \rightarrow \infty} s^{(N)}(\gamma)/N = \sigma(\gamma)$.
Let $\v{s}^{(N)} = \left( s^{(N)}(\gamma) \right)_{\gamma \in \Gamma}$ and $\v{\sigma} = \left( \sigma(\gamma) \right)_{\gamma \in \Gamma}$. 

We scale the number of customers proportionally to the number of listings, and assume that $\lim_{N \rightarrow \infty} M^{(N)} / N = \lambda$. We refer to $\lambda$ as the ratio of \textit{relative demand} in the market. 

For future reference, if all customers have the same type, we say that customers are {\em homogeneous}.  When both customers and listings are homogeneous, we say the market is homogeneous.

\textbf{Booking procedure.}
Customers book listings through a one-shot process that captures notions of competition between listings and competition between customers. The process unfolds through a sequence of three steps. First, each customer forms a \textit{consideration set} of listings that they deem desirable. Second, customers \textit{apply} to one listing in their consideration set at random (assuming the consideration set is non-empty). Finally, each listing sees the set of customers who applied to the given listing and \textit{accepts} one customer's application, at random. This process results in a matching between customers and listings.

\textbf{Consideration sets.} 
A customer begins their experience by forming a consideration set of listings to book. For a customer of type $\gamma$ and a listing of type $\theta$, the customer has a \textit{consideration probability}  $\conprob^{(N)}(\gamma, \theta)$ of considering the listing, independent across all listings. This probability may represent factors such as whether a listing meets a customer's search criteria and the probability of a platform's recommendation system showing the listing to the customer. Let $S^{(N)}_c$ denote the consideration set for customer $c$. 

In practice, a customer will spend a limited amount of time searching through options, even as the size of the market grows; e.g., on Amazon 70\% of customers do not go past the first page \cite{clavis2015definitive}. To capture this effect, we assume that 
\begin{equation} \label{Eqn_p_ij_convergence}
N\conprob^{(N)}(\gamma, \theta) \rightarrow \conrate(\gamma, \theta)
\end{equation}
for some constant $\conrate(\gamma, \theta) \ge 0$. That is, the consideration probability a customer of type $\gamma$ has for a listing of type $\theta$ is inversely proportional to the total number of listings of type $\theta$. This ensures that the expected size of a customer's choice set $\sum_\theta t^{(N)}(\theta) \conprob^{(N)}(\gamma, \theta) $ approaches a constant as $N \rightarrow \infty$.

\textbf{Customer applications.} 
Each customer $c$ with a non-empty consideration set $S_c^{(N)}$ then chooses one listing $l \in S_c$ uniformly at random and applies to the listing.  Note that although this application is made uniformly at random, our model of heterogeneous listings can capture instances in which more attractive listings have a larger presence in the consideration set, and therefore, are more likely to be chosen by the customer.  Customers with an empty consideration set do not apply to any listings. The constraint that a customer applies to at most one listing captures {\em competition between listings} on the marketplace. A given customer $c$ becomes less likely to apply to a listing $l$ as the number of other options in their consideration set grows.

\textbf{Listing acceptances.} 
Each listing that receives a nonzero number of applications then accepts one application uniformly at random. A listing that receives no applications does not accept any customers. The resulting allocation is a matching between the set of customers and the set of listings.
Since a listing can accept at most one customer's application, the acceptance process reflects the {\em competition between customers} that arises in actual marketplaces. 

Listings do not "screen" applicants in our model; this modeling choice captures settings such as "Instant Book" on Airbnb and related features on other platforms, as well as the fact in e-commerce platforms sellers do not typically have the opportunity to screen buyers.  Of course the assumption simplifies our technical development; incorporating the opportunity for listings to screen in this model is an interesting direction for future work.

\medskip
The process outlined above models the competition between supply and competition between demand through a simple, three-step process. We now utilize this model to study experimental designs.

\section{Experimental Designs}
\label{sec:experiment_design}

Now suppose the platform considers a new feature to introduce. Before introducing this feature to the entire platform, the platform estimates the effect of this feature by running an experiment where the intervention is introduced to some fraction of the platform. Two common designs, which we focus on in this paper, are a  customer-side randomization design ($\CR$) and a listing-side randomization design ($\LR$).\footnote{We note that our model also allows for the study of more flexible experiment designs, such as the two-sided randomization design proposed in \cite{johari2021experimental} and cluster-randomized randomized designs \cite{holtz2020reducing}.} We refer to $\CR$ and $\LR$ as \textit{experiment types}.

This section discusses how we can model such experiments and the introduction of the new intervention by modifying choice probabilities, customer types, and listing types in our market model. We then use the model to study the bias and variance of these commonly used estimators.

\textbf{Interventions.}
We consider interventions that change the consideration probabilities $p(\gamma, \theta)$ for a customer including a listing in their choice set. Interventions that change these probabilities include modifications to a platform's interface, choices to show more or less information about a listing, or changes in the search and recommendation system. The changes in these consideration probabilities will propagate in the booking process to also affect the probabilities that a customer applies to a listing and the ultimate probability of a customer booking.

The intervention is binary and can either be applied or not. For a customer of type $\gamma$ and listing of type $\theta$, $p(\gamma, \theta)$ denotes the consideration probability without the intervention, and $\tilde{p}(\gamma, \theta)$ denotes the consideration probability with the intervention.

\textbf{Global treatment effect.}
We assume that the platform's primary metric of interest is the fractional number of bookings made. We focus on this metric because other metrics of interest, such as revenue, can be modeled as a function of the number of bookings. Informally, the platform then wants to measure the overall change to the fractional number of bookings made if this intervention were introduced platform-wide (global treatment) compared to a world where this intervention did not exist (global control). We call this change the \textit{global treatment effect}, or $\GTE$. 

Fix the market parameters $N, \  M^{(N)}, \ \v{s}^{(N)}$, and $\v{t}^{(N)}$. 
Formally, the global control setting is when all customers have consideration probabilities given by $\v{\conprob}=\{p(\gamma,\theta)\}_{\gamma\in\Gamma,\theta\in\Theta}$ and the global treatment setting is when all customers have consideration probabilities  $\v{\tildeconprob}=\{\tilde{p}(\gamma,\theta)\}_{\gamma\in\Gamma,\theta\in\Theta}$. Let $Q^{(N)}_\GC$ denote the (random) number of bookings made in the global control setting and $Q^{(N)}_\GT$ the (random) number of bookings made in the global treatment setting. 

We define the global treatment effect to be
\begin{equation*}
    \GTE^{(N)} = \frac{1}{N}\E \left[Q^{(N)}_\GT - Q^{(N)}_\GC \right].
\end{equation*}

\textbf{Customer-side randomized (\CR) design. }
In a $\CR$ design, the platform will decide on a treatment proportion $a_C\in(0,1)$ and perform a completely randomized design that assigns a fraction $a_C \in (0,1)$, of the $M$ customers to a treatment condition.  We let $M_1 = \lfloor a_C M \rfloor$ denote the number of customers assigned to treatment, and $M_0 = M - M_1$ denote the number of customers assigned to control.  (Our results hold for any allocations $M_0, M_1$ such that $M_0/M \to 1-a_C$ and $M_1/M \to a_C$ as $N \to \infty$.)  We denote the assignment by random variable $Z_c$ for each customer $c$, where $Z_c=1$ if they receive the intervention and $Z_c=0$ otherwise.
The customers who receive the intervention are called the \textit{treatment} group and the remaining customers are called the \textit{control} group. %

We model this assignment into treatment and control groups with an extended type space on the customers. For a customer with type $\gamma$ before the experiment is launched, we denote the type as $(\gamma, 1)$ if they are in the treatment group and $(\gamma, 0)$ if they are in the control group. A treatment customer with type $(\gamma,1)$ will have the modified consideration probability $\tilde{p}(\gamma, \theta)$ for each $\theta$, whereas the control customer with type $(\gamma,0)$ will have consideration probabilities $p(\gamma, \theta)$ for each $\theta$.

Abusing notation, we let $s^{(N)}(\gamma,1)$ and $s^{(N)}(\gamma,0)$ denote the number of customers of type $(\gamma,1)$ and $(\gamma,0)$ in the experiment, respectively, and $\sigma(\gamma,1)$ and $\sigma(\gamma,0)$ denote the limiting proportions as $N \rightarrow \infty$. 

\textbf{Listing-side randomized (\LR) design.}
Likewise, in a $\LR$ design, the platform determines a treatment proportion $a_L\in(0,1)$, and performs a completely randomized design that assigns a fraction $a_L \in (0,1)$, of the $N$ listings to a treatment condition.  We let $N_1 = \lfloor a_C N \rfloor$ denote the number of listings assigned to treatment, and $N_0 = N - N_1$ denote the number of listings assigned to control.  (Our results hold for any allocations $N_0, N_1$ such that $N_0/N \to 1-a_L$ and $N_1/N \to a_L$ as $N \to \infty$.)
For each listing $l$, let $Z_l$ denote their treatment condition where $Z_l=1$ if they receive the treatment and $Z_l=0$ otherwise.

A listing of type $\theta$ has type $(\theta,1)$ if they are assigned to treatment and type $(\theta,0)$ otherwise. For a treatment listing with type $(\theta, 1)$ any customers of type $\gamma$ will have consideration probability $\tilde{p}(\gamma, \theta)$ for that listing. For a control listing with type $(\theta,0)$, any customers of type $\gamma$ will have consideration probability $p(\gamma, \theta)$ for that listing. %

Again abusing notation, we let $t^{(N)}(\theta,1)$ and $t^{(N)}(\theta,0)$ denote the number of listings of type $(\theta,1)$ and $(\theta,0)$ in the experiment, respectively, and $\tau(\theta,1)$ and $\tau(\theta,0)$ denote the limiting proportions as $N \rightarrow \infty$.

In this modified market with the extended type space and modified consideration probabilities, bookings are made with the same three step process of consideration, application, and acceptance as described in Section \ref{sec:model}.

\textbf{Estimating the global treatment effect.}
Intuitively, the platform estimates the $\GTE$ by comparing the difference in the behavior of the treatment and control group. 

First consider a $\CR$ experiment with treatment fraction $a_C$. Let $Q_{\CR}^{(N)}(1|a_C)$ denote the number of bookings made among the treatment customers and $Q_{\CR}^{(N)}(0|a_C)$ the number of bookings made among the control customers. We present a normalized version of the commonly used difference-in-means estimator. We denote this normalized estimator $\widehat{\GTE}^\CR(a_C)$, where
\begin{equation*}
    \widehat{\GTE}_\CR^{(N)}(a_C) = \frac{M}{N} \left[\frac{1}{M_1} Q_{\CR}^{(N)}(1|a_C) - \frac{1}{M_0}Q_{\CR}^{(N)}(0|a_C) \right].
\end{equation*}
We normalize by the ratio $M/N$ to estimate the total effect on the listing side booking probability if \textit{all} customers were treated.

Now consider an $\LR$ experiment with treatment fraction $a_L$. Let $Q_{\LR}^{(N)}(1|a_L), \ Q_{\LR}^{(N)}(0|a_L)$ denote the number of bookings made among the treatment and control listings, respectively. The difference-in-means estimator for the $\LR$ design is
\begin{equation*}
    \widehat{\GTE}_\LR^{(N)}(a_L) =  \frac{1}{N_1} Q_{\LR}^{(N)}(1|a_L) - \frac{1}{N_0}Q_{\LR}^{(N)}(0|a_L) .
\end{equation*}

We will refer to difference-in-means estimators $\gtehatcr$ and $\gtehatlr$ as the (naive) $\CR$ and $\LR$ estimators in subsequent sections when context is clear.

\textbf{SUTVA and no bias.}  A key concept in causal inference is the {\em stable unit treatment value assumption} (SUTVA) \cite{ImbensRubin15}.  Informally, SUTVA requires that the outcome of a single experimental unit depends only on its own treatment assignment, and not on the treatment assignment of any other experimental units.  In experimental settings in which SUTVA holds, both $\CR$ and $\LR$ difference in means estimators are unbiased estimates of the $\GTE$, that is:
\[
\E\left[\widehat{\GTE}^{(N)}_\CR\right] = \E\left[\widehat{\GTE}^{(N)}_\LR\right] = \GTE^{(N)}.
\]
As we will discuss, however, in the market model with customer and listing competition, SUTVA does not hold and the estimators will be biased in general. %

\section{Large Market Setting}
\label{sec:limiting_behavior}

In the previous section, we described a model where experiment interventions exogenously change consideration probabilities. This change in turn creates endogenous changes in the application behavior and the number of bookings made on the platform, the latter of which is the metric of interest and serves as the basis for the $\GTE$ and the $\CR$ and $\LR$ estimators. In this section, we characterize the application behavior and booking behavior on the platform and their dependence on consideration probabilities and other model primitives. This characterization will allow us to study the bias and variance of $\CR$ and $\LR$ estimators in the next section.

The competition that arises between customers and between listings creates an interdependence in the market that complicates the analysis of applications and bookings. In order to analyze the system, we study the behavior as the market grows large, that is, as the number of listings $N \rightarrow \infty$. Recall that we scale the number of customers such that $M^{(N)}/N \rightarrow \lambda$, where $\lambda$ is the relative demand in the market. When we take $N \rightarrow \infty$, we fix this ratio of relative demand and scale up both the supply side and demand side.

Consider a heterogeneous market where listing and customer type distributions approach the vectors $\bm{\tau}=(\tau(\theta))_{\tinT}$ and $\bm{\sigma}=(\sigma(\gamma))_{\ginG}$, respectively. We analyze the probability of booking by following the three steps in the booking procedure in Section~\ref{sec:model}: consideration, application, and acceptance. 

\textbf{Consideration sets.}
We first examine how the consideration sets are formed. As we have defined in the model, let $\conprob^{(N)}(\gamma,\theta)$ denote the probability that a (fixed) listing of type $\theta$ is included in a (fixed) type-$\gamma$ customer's consideration set, and we have
$   N \conprob^{(N)}(\gamma,\theta)\to \conrate(\gamma,\theta) $ , %
for all pairs of customer and listing. We refer to $\conrate(\gamma,\theta)$ as the limit \emph{rate} of consideration.
The inclusion of listings into customers' consideration sets is mutually independent.
For a customer $c$ of type $\gamma_c$, the number of type-$\theta$ listings in their consideration set $S_c$ follows a binomial distribution $\Binom(t(\theta), \conprob(\gamma_c,\theta))$, which, as is well known, converges to a Poisson distribution with rate parameter $\tau(\theta) \conrate(\gamma_c,\theta)$ as the market grows large. 

\textbf{Customer applications.}
Next, we describe the formation of application from consideration sets.
Let $\appprob(\gamma,\theta;\v{t})$
denote the probability that a customer of type $\gamma$ applies to a certain listing of type $\theta$. Again, this value does not depend on our choice of type $\gamma$ customer and type $\theta$ listing, as all heterogeneity in our model dependent only on types. Clearly, the applications are not mutually independent, because of the constraint that each customer can apply to at most one listing. However, from each listing's point of view, the applications they receive from all the customers are mutually independent.
For a listing $l$ of type $\theta_l$, the number of applications they receive from type $\gamma$ customers follows  $\Binom(s(\gamma), \appprob(\gamma,\theta_l;\v{t}))$. This approaches a Poisson distribution as $N\to\infty$ as long as $s(\gamma) \appprob(\gamma,\theta_l;\v{t})$ converges to a constant limit, which we will show below. %

\textbf{Bookings.}
In a similar way, we formulate the emergence of the final matching out of the applications. Let $\bookprob(\gamma,\theta;\v{t},\v{s})$ denote the probability that customer $c$ of type $\gamma$ applies to a certain listing $l$ of type $\theta$ and is accepted.

\medskip
Utilizing the property that the binomial distribution converges to the Poisson distribution, we then establish the following lemma on the behavior of applications and bookings in a large market. 
The proof is given in Appendix~\ref{Appendix:proofs}.

\begin{restatable}{lemma}{lemmaLimAppBookProb}
\label{Prop_limit_app_and_book_prob}
For any customer type $\gamma$ and any listing type $\theta$, as $N \to \infty$,
\begin{equation}\label{Eqn_limit_app_prob:Prop_limit_app_and_book_prob}
    N \appprob^{(N)}(\gamma,\theta;\v{t}^{(N)}) \to \apprate(\gamma,\theta),
\end{equation}
\begin{equation}\label{Eqn_limit_book_prob:Prop_limit_app_and_book_prob}
    N \bookprob^{(N)}(\gamma,\theta;\v{t}^{(N)},\v{s}^{(N)}) \to \bookrate(\gamma,\theta),
\end{equation}
with the application rate matrix $\appratematrix = \{\apprate(\gamma,\theta)\}_{\ginG,\tinT}$ and booking rate matrix $\bookratematrix = \{\bookrate(\gamma,\theta)\}_{\ginG,\tinT}$ given by
\begin{equation}\label{Eqn_def_phi_app:Prop_limit_app_and_book_prob}
    \apprate(\gamma,\theta) = \conrate(\gamma,\theta) \frac{1-\exp(-\v{\tau}\cdot\v{\conratematrix}(\gamma,\cdot))}{\v{\tau}\cdot\v{\conratematrix}(\gamma,\cdot)},
\end{equation}
\begin{equation}\label{Eqn_def_phi_book:Prop_limit_app_and_book_prob}
    \bookrate(\gamma,\theta) = \apprate(\gamma,\theta) \frac{1-\exp(-\lambda\bsigma\cdot\v{\appratematrix}(\cdot,\theta))}{\lambda\bsigma\cdot\v{\appratematrix}(\cdot,\theta)},
\end{equation}
where $\bm{A}(i,\cdot)$ and $\bm{A}(\cdot, j)$ denote the $i$'th row and $j$'th column (as vectors) of a matrix $A$, respectively.\footnote{In the case when $\v{\tau}\cdot\v{\conratematrix}(\gamma,\cdot)=0$ for some $\ginG$ or $\v{\sigma}\cdot\v{\appratematrix}(\cdot,\theta)=0$ for some $\tinT$, \eqref{Eqn_def_phi_app:Prop_limit_app_and_book_prob} and \eqref{Eqn_def_phi_book:Prop_limit_app_and_book_prob} will take value zero by continuity.}
\end{restatable}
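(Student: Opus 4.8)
The plan is to compute the application and booking probabilities by conditioning on the relevant competing set, reducing each to an expectation of the form $\E[1/(1+K)]$ for a count $K$ that converges to a Poisson limit, and then invoking the elementary identity
\[
\E\!\left[\frac{1}{1+K}\right] = \frac{1-e^{-\mu}}{\mu} \qquad \text{when } K \sim \Poisson(\mu),
\]
which follows by reindexing the Poisson series. The whole proof is two applications of this idea: once on the supply side to get $\apprate$, and once on the demand side to get $\bookrate$.

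For the application rate, fix a type-$\gamma$ customer $c$ and a type-$\theta$ listing $l$. Customer $c$ applies to $l$ exactly when $l$ lies in $c$'s consideration set and $c$ selects it uniformly among the listings it considers. Conditioning on the consideration set and using that the uniform choice is independent of the inclusion indicators,
\[
\appprob^{(N)}(\gamma,\theta;\v{t}^{(N)}) = \conprob^{(N)}(\gamma,\theta)\,\E\!\left[\frac{1}{1+X^{(N)}}\right],
\]
where $X^{(N)}$ counts the listings \emph{other than} $l$ that $c$ includes. By mutual independence of inclusions, $X^{(N)}$ is a sum over $\tinT$ of independent $\Binom(t^{(N)}(\theta'),\conprob^{(N)}(\gamma,\theta'))$ counts (with $l$ itself removed), and it is independent of the event $\{l\in S_c\}$. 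Under the scalings $t^{(N)}(\theta')/N\to\tau(\theta')$ and $N\conprob^{(N)}(\gamma,\theta')\to\conrate(\gamma,\theta')$, each summand satisfies the law-of-small-numbers hypotheses, so $X^{(N)}$ converges in distribution to $\Poisson(\v{\tau}\cdot\v{\conratematrix}(\gamma,\cdot))$ (the single-unit self-type correction is asymptotically negligible). Since $x\mapsto 1/(1+x)$ is bounded on the nonnegative integers, the identity above yields $\E[1/(1+X^{(N)})]\to (1-\exp(-\v{\tau}\cdot\v{\conratematrix}(\gamma,\cdot)))/(\v{\tau}\cdot\v{\conratematrix}(\gamma,\cdot))$, and multiplying by $N\conprob^{(N)}(\gamma,\theta)\to\conrate(\gamma,\theta)$ gives \eqref{Eqn_limit_app_prob:Prop_limit_app_and_book_prob} and \eqref{Eqn_def_phi_app:Prop_limit_app_and_book_prob}.

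The booking rate is handled by the symmetric argument on the demand side. Given that $c$ applies to $l$, listing $l$ accepts $c$ with probability $1/(1+Y^{(N)})$, where $Y^{(N)}$ is the number of customers \emph{other than} $c$ that apply to $l$; because consideration sets are independent across customers, the applications arriving at $l$ are mutually independent and independent of $\{c\text{ applies to }l\}$, so
\[
\bookprob^{(N)}(\gamma,\theta;\v{t}^{(N)},\v{s}^{(N)}) = \appprob^{(N)}(\gamma,\theta;\v{t}^{(N)})\,\E\!\left[\frac{1}{1+Y^{(N)}}\right],
\]
with $Y^{(N)}$ a sum over $\ginG$ of independent binomials with parameters $s^{(N)}(\gamma')$ and $\appprob^{(N)}(\gamma',\theta;\v{t}^{(N)})$ (with $c$ removed). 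Feeding in the first part, $N\appprob^{(N)}(\gamma',\theta)\to\apprate(\gamma',\theta)$ together with the customer-side scaling makes the binomial means $s^{(N)}(\gamma')\appprob^{(N)}(\gamma',\theta)$ converge to the entries $\lambda\sigma(\gamma')\apprate(\gamma',\theta)$, so $Y^{(N)}$ converges to $\Poisson(\lambda\bsigma\cdot\v{\appratematrix}(\cdot,\theta))$; the same bounded-convergence step and one more multiplication by $N\appprob^{(N)}(\gamma,\theta)\to\apprate(\gamma,\theta)$ produce \eqref{Eqn_limit_book_prob:Prop_limit_app_and_book_prob} and \eqref{Eqn_def_phi_book:Prop_limit_app_and_book_prob}.

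Two points require care. First, the exchange of limit and expectation in $\E[1/(1+X^{(N)})]$ and $\E[1/(1+Y^{(N)})]$: here boundedness of $1/(1+\cdot)$ combined with convergence of integer-valued laws (equivalently, Scheff\'e's lemma applied to the pmfs) suffices, so no uniform-integrability subtlety arises. Second, and I expect this to be the main obstacle, the booking step is \emph{nested}: the summands defining $Y^{(N)}$ carry the already-approximated probabilities $\appprob^{(N)}$ rather than exact rates, so one must confirm that substituting the first-order limit of $\appprob^{(N)}$ into the demand-side binomials still meets the law-of-small-numbers hypotheses and preserves Poisson convergence. Finally, the degenerate cases where $\v{\tau}\cdot\v{\conratematrix}(\gamma,\cdot)=0$ or $\lambda\bsigma\cdot\v{\appratematrix}(\cdot,\theta)=0$ are covered by continuity, using the limit $1$ of $(1-e^{-x})/x$ as $x\downarrow 0$, matching the stated footnote convention.
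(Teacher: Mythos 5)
Your proof is correct, but it takes a genuinely different route from the paper's. For the application step, the paper first computes the aggregate probability that a customer applies at all, via $\sum_\theta t(\theta)\appprob(\gamma,\theta) = 1-\prod_\theta(1-\conprob(\gamma,\theta))^{t(\theta)} \to 1-\exp(-\v\tau\cdot\v\conratematrix(\gamma,\cdot))$, and then separately pins down how this mass splits across listing types by showing $\appprob(\gamma,\theta_1)/\appprob(\gamma,\theta_2)\to\conrate(\gamma,\theta_1)/\conrate(\gamma,\theta_2)$ through an auxiliary construction (independent uniform scores $X_{cl}\sim\Unif(0,\conprob^{-1})$ and an exact integral formula for $\appprob$, whose integrand ratio tends to $1$ uniformly). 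You instead condition directly on the size of the competing set, writing $\appprob = \conprob\,\E[1/(1+X)]$ and evaluating the limit via the identity $\E[1/(1+K)]=(1-e^{-\mu})/\mu$ for $K\sim\Poisson(\mu)$; the acceptance step is the mirror image with $\bookprob=\appprob\,\E[1/(1+Y)]$. Your decomposition is more direct and arguably more transparent about where the factor $F(\cdot)$ comes from — it is literally the expected reciprocal of one plus a Poisson count of competitors — at the cost of having to justify the interchange of limit and expectation, which you correctly dispatch by boundedness of $1/(1+\cdot)$ together with weak convergence. The paper's aggregate-plus-ratio decomposition avoids that interchange for the expectation of a nonlinear functional but pays for it with the integral representation. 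The two points you flag as needing care are the right ones, and your treatment of the nested booking step (the demand-side Bernoulli parameters are the already-limiting $\appprob^{(N)}$, so one only needs $\max_{c'}\appprob^{(N)}(\gamma_{c'},\theta)\to 0$ and convergence of the summed means, both supplied by the first part) matches in substance how the paper bootstraps the acceptance step from the application step.
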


\begin{remark}
So far, we have been explicit about showing the dependencies between variables, namely on $N$ and how the application and booking probabilities $q$ and $r$ depend on the distributions of listing and customer types $\v{t}$ and $\v{s}$. In subsequent discussions, we may omit the $(N)$ superscripts and the dependency of $q$ and $r$ on $\v{t}$ and $\v{s}$; these dependencies should always be implicitly assumed.
\end{remark}

To compactify the expressions, we introduce the following function $F:[0,\infty) \to \R^+$ defined as
\[
    F(x) = \frac{1-\exp(-x)}{x} \;\text{ for }\; x>0,
\]
with $F(0)=1$ by continuity.
It is straightforward to verify that $F(x)\in(0,1]$ for any $x\ge 0$ and is monotonically decreasing on $[0,\infty)$. To interpret this, imagine a $\Poisson(x)$ sequence of jobs arriving at a server that serves exactly one job (if any) during each unit time interval (other arrivals are dropped). Then, $F(x)$ is the probability that an incoming job will be served.

Using this notation, we can rewrite equations \eqref{Eqn_def_phi_app:Prop_limit_app_and_book_prob} and \eqref{Eqn_def_phi_book:Prop_limit_app_and_book_prob} as
\begin{equation}\label{Eqn_redef_app_rate_using_F}
    \apprate(\gamma,\theta) = \conrate(\gamma,\theta) F(\v{\tau}\cdot\v{\conratematrix}(\gamma,\cdot)),
\end{equation}
\begin{equation}\label{Eqn_redef_book_rate_using_F}
    \bookrate(\gamma,\theta) = \apprate(\gamma,\theta) F(\lambda\bsigma\cdot\v{\appratematrix}(\cdot,\theta)).
\end{equation}
Equation \eqref{Eqn_redef_app_rate_using_F} gives an interpretation for the term $F(\v{\tau}\cdot\v{\conratematrix}(\gamma,\cdot))$ as the average conversion probability of consideration to applications for a type-$\gamma$ customer. Similarly, from equation \eqref{Eqn_redef_book_rate_using_F}, $F(\lambda\bsigma\cdot\v{\appratematrix}(\cdot,\theta))$ can be interpreted as the application-to-booking conversion probability for listings of type $\theta$.

An immediate corollary of the previous lemma is the convergence of the global booking rate (of listings) to the following limit. 

\begin{restatable}[Limit of booking rate]{corollary}{corLimBookingRate}
\label{Corollary_limit_booking_rate}
Recall that $Q=Q^{(N)}$ denotes the total number of bookings. As $N \to \infty$,
\[
   N^{-1} \E \left[Q\right] \to \lambda \bm{\sigma}^T \bookratematrix \bm{\tau}.
\]
\end{restatable}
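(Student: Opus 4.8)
The plan is to reduce the claim to a direct application of Lemma~\ref{Prop_limit_app_and_book_prob} via linearity of expectation. First I would write the total number of bookings as a sum of indicators over customer--listing pairs: $Q = \sum_{c}\sum_{l} \mathbbm{1}\{c \text{ is matched to } l\}$, where the sums range over all $M^{(N)}$ customers and all $N$ listings. Since a booking is exactly a matched pair, and since by definition $\bookprob^{(N)}(\gamma,\theta)$ is the (unconditional) probability that a fixed type-$\gamma$ customer applies to a fixed type-$\theta$ listing and is accepted, we have $\E[\mathbbm{1}\{c \text{ matched to } l\}] = \bookprob^{(N)}(\gamma_c,\theta_l)$ exactly, for each finite $N$. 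Because this probability depends on $c$ and $l$ only through their types, grouping the pairs by type yields
\[
    \E[Q] = \sum_{\ginG}\sum_{\tinT} s^{(N)}(\gamma)\, t^{(N)}(\theta)\, \bookprob^{(N)}(\gamma,\theta).
\]

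Next I would normalize by $N$ and arrange each summand so that Lemma~\ref{Prop_limit_app_and_book_prob} applies. Dividing by $N$ and inserting a compensating factor of $N$ into the booking probability,
\[
    N^{-1}\E[Q] = \sum_{\ginG}\sum_{\tinT} \frac{s^{(N)}(\gamma)}{N}\cdot\frac{t^{(N)}(\theta)}{N}\cdot \Big(N\,\bookprob^{(N)}(\gamma,\theta)\Big).
\]
Each of the three factors converges: $t^{(N)}(\theta)/N \to \tau(\theta)$ by assumption; $s^{(N)}(\gamma)/N = (M^{(N)}/N)\,(s^{(N)}(\gamma)/M^{(N)}) \to \lambda\,\sigma(\gamma)$, which is precisely where the leading factor $\lambda$ in the statement originates; and $N\,\bookprob^{(N)}(\gamma,\theta)\to\bookrate(\gamma,\theta)$ by Lemma~\ref{Prop_limit_app_and_book_prob}. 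Since $\Gamma$ and $\Theta$ are finite, the double sum has a fixed finite number of terms, so the limit passes through the summation term by term with no need for any uniform-integrability or tail estimate:
\[
    N^{-1}\E[Q] \to \sum_{\ginG}\sum_{\tinT} \lambda\,\sigma(\gamma)\,\tau(\theta)\,\bookrate(\gamma,\theta).
\]

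Finally I would recognize the double sum as the quadratic form in the statement: $\sum_{\gamma,\theta}\sigma(\gamma)\bookrate(\gamma,\theta)\tau(\theta) = \bsigma^{T}\bookratematrix\btau$, giving $N^{-1}\E[Q]\to\lambda\,\bsigma^{T}\bookratematrix\btau$, as claimed. I expect essentially no analytic obstacle: the only substantive input is the convergence $N\bookprob^{(N)}(\gamma,\theta)\to\bookrate(\gamma,\theta)$ already furnished by Lemma~\ref{Prop_limit_app_and_book_prob}, and the finiteness of the type spaces trivializes the interchange of limit and sum. The one point warranting care is the bookkeeping in the first step---ensuring each booking is counted exactly once and that the expectation of the pair indicator is genuinely the unconditional probability $\bookprob^{(N)}(\gamma_c,\theta_l)$ rather than some conditional quantity---but this follows immediately from the definition of $\bookprob^{(N)}$ as the matched-pair probability together with linearity of expectation.
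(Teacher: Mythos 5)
Your proposal is correct and follows essentially the same route as the paper's proof: decompose $\E[Q]$ by linearity of expectation into a sum of $\bookprob^{(N)}(\gamma,\theta)$ over customer--listing pairs grouped by type, rewrite $s^{(N)}(\gamma)\,t^{(N)}(\theta)\,\bookprob^{(N)}(\gamma,\theta)/N$ as a product of three convergent factors (using $M/N\to\lambda$ to extract the leading $\lambda$), and pass the limit through the finite double sum via Lemma~\ref{Prop_limit_app_and_book_prob}. No gaps.
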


\section{Bias and Variance of LR and CR Estimators}
\label{sec:bias_and_var_of_lr_cr}

As discussed at the end of Section \ref{sec:experiment_design}, in general the difference-in-means estimators used with CR and LR designs are biased.  This is a well-known observation in the literature that is traceable to the fact that each experiment design creates interference through common interactions with the opposite side of the market.  In a CR design, {\em both} treatment and control customers interact with the same supply, and thus, if a treatment customer is able to book a listing, that mechanically implies a reduction in effective supply for control customers.  Similarly, in a LR design, {\em both} treatment and control listings interact with the same customers, and thus, if a treatment listing receives applications from customers, it mechanically means lower effective demand for control listings.  These interactions lead to interference in a statistical sense, and bias the resulting estimators\cite{Blake14, Fradkin2015SearchFA, holtz2020reducing, Wager19, bajari2019double, johari2021experimental}.  In this section, we quantify this bias and variance of estimators in such conditions.

\subsection{$\CR$ and $\LR$ bias}
\label{ssec:limiting_lr_cr_bias}
We apply the results on the limiting behavior of the booking rates to characterize the behavior of the $\CR$ and $\LR$ designs and estimators. Before the intervention is introduced, as stated in Section \ref{sec:model} the customers have type space $\Gamma$ and the listings have type space $\Theta$, with the number of customers of each type given by $\v{t}^{(N)}$ and the number of listings of each type given by $\v{s}^{(N)}$. Recall that the intervention changes the consideration probability matrices from $\v{\conprob}$ to $\v{\tildeconprob}$. 

First we calculate the $\GTE$. This result directly follows by applying Corollary \ref{Corollary_limit_booking_rate} to the global treatment setting and global control setting. With treatment applied to the entire market, i.e., under global treatment, we define application probability matrix $\tildeappratematrix = \{\tildeapprate(\gamma,\theta)\}_{\gamma\in\Gamma,\theta\in\Theta}$ and booking probability matrix $\tildebookratematrix = \{\tildebookrate(\gamma,\theta)\}_{\gamma\in\Gamma,\theta\in\Theta}$ analogous to \eqref{Eqn_def_phi_app:Prop_limit_app_and_book_prob} and \eqref{Eqn_def_phi_book:Prop_limit_app_and_book_prob}.

\begin{corollary}[Limit of GTE]\label{Corollary_limit_gte}
As $N \rightarrow \infty$,
\[
    \GTE^{(N)} = \frac{1}{N}\E\left[Q_{\GT}^{(N)} - Q_{\GC}^{(N)}\right] \to \lambda \bsigma^T \left(\tildebookratematrix - \bookratematrix\right) \v{\tau}.
\]
\end{corollary}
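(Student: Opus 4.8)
The plan is to use linearity of expectation to split $\GTE^{(N)}$ into its treatment and control pieces, and then apply Corollary~\ref{Corollary_limit_booking_rate} to each piece separately. Writing
\[
\GTE^{(N)} = \frac{1}{N}\E\left[Q_{\GT}^{(N)}\right] - \frac{1}{N}\E\left[Q_{\GC}^{(N)}\right],
\]
the two terms are the normalized expected number of bookings in two copies of the same market that differ only in their consideration probabilities: the global control copy uses $\v{\conprob}$ (hence consideration rates $\conratematrix$), while the global treatment copy uses $\v{\tildeconprob}$ (hence consideration rates $\tildeconratematrix$).

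First I would handle the control term. Since under global control every customer has consideration probabilities $\v{\conprob}$, Corollary~\ref{Corollary_limit_booking_rate} applies verbatim with booking rate matrix $\bookratematrix$, giving $N^{-1}\E[Q_{\GC}^{(N)}] \to \lambda \bsigma^T \bookratematrix \v{\tau}$. For the treatment term, I would observe that the global treatment market is an instance of exactly the same model, now with consideration rates $\tildeconratematrix$ in place of $\conratematrix$; feeding these through Lemma~\ref{Prop_limit_app_and_book_prob} produces the application and booking rate matrices $\tildeappratematrix$ and $\tildebookratematrix$ via the analogues of the fixed-point equations \eqref{Eqn_def_phi_app:Prop_limit_app_and_book_prob} and \eqref{Eqn_def_phi_book:Prop_limit_app_and_book_prob}. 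Corollary~\ref{Corollary_limit_booking_rate} then yields $N^{-1}\E[Q_{\GT}^{(N)}] \to \lambda \bsigma^T \tildebookratematrix \v{\tau}$. Since both limits exist, the limit of the difference is the difference of the limits, and factoring out the common $\lambda\bsigma^T(\cdot)\v{\tau}$ produces the claimed expression $\lambda \bsigma^T (\tildebookratematrix - \bookratematrix) \v{\tau}$.

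There is essentially no obstacle beyond bookkeeping: this corollary is a direct specialization of Corollary~\ref{Corollary_limit_booking_rate}. The only point worth checking is that the global treatment setting is genuinely covered by that corollary, i.e.\ that replacing every customer's consideration probabilities by their treated versions leaves all hypotheses of Lemma~\ref{Prop_limit_app_and_book_prob} intact. Concretely, the type distributions $\v{\tau},\bsigma$ and the demand ratio $\lambda$ are unchanged by the intervention, and the treated analogue of the scaling assumption, $N\tildeconprob^{(N)}(\gamma,\theta)\to\tildeconrate(\gamma,\theta)$, holds by the same hypothesis \eqref{Eqn_p_ij_convergence}; hence the entire derivation of Lemma~\ref{Prop_limit_app_and_book_prob} and Corollary~\ref{Corollary_limit_booking_rate} applies identically with tildes throughout. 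Once this is noted, the result follows immediately.
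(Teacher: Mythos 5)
Your proposal is correct and matches the paper's own (very brief) justification: the paper simply notes that the corollary "directly follows by applying Corollary~\ref{Corollary_limit_booking_rate} to the global treatment setting and global control setting," which is exactly your decomposition via linearity of expectation plus the observation that the treated market satisfies the same hypotheses with tildes throughout. Your extra check that the scaling assumption \eqref{Eqn_p_ij_convergence} and the limits $\v{\tau},\bsigma,\lambda$ are unaffected by the intervention is the right (and only) point of care, and the paper leaves it implicit.
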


Now consider an experimental setting where the platform allocates either listings or customers to treatment and control groups. 
Recall that the allocation of a fraction of customers or listings can be considered as a modification of customer or listing types, respectively. Hence, we can similarly establish limits for the expectations of our $\CR$ and $\LR$ estimators through an application of Lemma \ref{Prop_limit_app_and_book_prob}. We provide the full characterization for general, heterogeneous markets in Proposition \ref{Prop_limit_estimator_expectation} (Appendix \ref{Appendix:proofs}). For brevity, we discuss the result in a homogeneous market, though all intuition extends to a heterogeneous market. When the market is homogeneous, Proposition~\ref{Prop_limit_estimator_expectation} reduces to the following.

\begin{corollary}\label{Corollary_homogen_limit_estimator_expectation}
Assume that the listings and customers are homogeneous, with limit consideration rate $N^{-1}\conprob \to \conrate$ under control and $N^{-1}\tildeconprob \to \tildeconrate$ under treatment. Then
\begin{equation}\label{Eqn_homogen_limit_cr_est_expectation:Corollary_homogen_limit_estimator_expectation}
    \lim_{N\to\infty} \gtehatcr(a_C) = \lambda (\tildeconrate F(\tildeconrate) - \conrate F(\conrate)) F(\lambda (a_C\tildeconrate F(\tildeconrate) + (1-a_C)\conrate F(\conrate))),
\end{equation}
\begin{equation}\label{Eqn_homogen_limit_lr_est_expectation:Corollary_homogen_limit_estimator_expectation}
    \lim_{N\to\infty} \gtehatlr(a_L) = \exp(- \lambda \tildeconrate F(a_L\tildeconrate + (1-a_L)\conrate)) - \exp(- \lambda \conrate F(a_L\tildeconrate + (1-a_L)\conrate)).
\end{equation}
\end{corollary}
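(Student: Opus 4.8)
The plan is to specialize Lemma~\ref{Prop_limit_app_and_book_prob} to the two-point type spaces that encode the treatment/control split, and then to substitute the resulting limiting booking rates into the definitions of $\gtehatcr$ and $\gtehatlr$, passing to the limit $N\to\infty$ termwise via linearity of expectation (these are statements about limits of expectations, so no concentration argument is needed). Throughout I abbreviate the control- and treatment-side application rates against a homogeneous listing pool as $\apprate := \conrate F(\conrate)$ and $\tildeapprate := \tildeconrate F(\tildeconrate)$, and I use the elementary identity $xF(x) = 1-\exp(-x)$.

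For the $\CR$ design, model the experiment as a homogeneous listing pool ($\v{\tau}=(1)$) together with a two-point customer type space: a treatment type of mass $a_C$ with consideration rate $\tildeconrate$, and a control type of mass $1-a_C$ with rate $\conrate$. By \eqref{Eqn_redef_app_rate_using_F} a customer's application rate depends only on its own consideration rate and on $\v{\tau}$, so treatment and control customers have application rates $\tildeapprate$ and $\apprate$, unaffected by the other group. The interference enters only at the acceptance stage: by \eqref{Eqn_redef_book_rate_using_F} both customer types see the \emph{same} listing-side congestion factor $F(\Lambda_C)$, where $\Lambda_C := \lambda\bigl(a_C\tildeapprate + (1-a_C)\apprate\bigr)$, since a listing's application load aggregates over both groups. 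Hence the limiting booking rates are $\tildeapprate F(\Lambda_C)$ and $\apprate F(\Lambda_C)$. Because the expected bookings per treatment (resp.\ control) customer equals $\sum_\theta t^{(N)}(\theta)\,\bookprob(\cdot,\theta) = N\bookprob(\cdot)$, which converges to the corresponding booking rate, and $M/N\to\lambda$ with $M_1/M\to a_C$, $M_0/M\to 1-a_C$, the expectation of $\gtehatcr(a_C)$ converges to $\lambda\bigl(\tildeapprate - \apprate\bigr)F(\Lambda_C)$, which is \eqref{Eqn_homogen_limit_cr_est_expectation:Corollary_homogen_limit_estimator_expectation}.

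For the $\LR$ design the roles are reversed: customers are homogeneous ($\v{\sigma}=(1)$) and listings carry a two-point type space with treatment mass $a_L$ (rate $\tildeconrate$) and control mass $1-a_L$ (rate $\conrate$). Now the interference appears already at the application stage: by \eqref{Eqn_redef_app_rate_using_F} every customer thins its consideration set by the shared factor $F(\mu_L)$ with $\mu_L := a_L\tildeconrate + (1-a_L)\conrate$ (the expected consideration-set size, which mixes treatment and control listings), so the application rates to a treatment and a control listing are $\tildeconrate F(\mu_L)$ and $\conrate F(\mu_L)$. The acceptance factor in \eqref{Eqn_redef_book_rate_using_F} is now listing-specific. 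Translating to per-listing matching probabilities, a fixed treatment listing receives a $\Poisson(\lambda\tildeconrate F(\mu_L))$ application load and is matched with probability $\lambda\tildeconrate F(\mu_L)\cdot F(\lambda\tildeconrate F(\mu_L)) = 1-\exp(-\lambda\tildeconrate F(\mu_L))$ by the identity $xF(x)=1-\exp(-x)$; the control listing gives $1-\exp(-\lambda\conrate F(\mu_L))$. Since $N_1/N\to a_L$ and $N_0/N\to 1-a_L$, the expectation of $\gtehatlr(a_L)$ converges to the difference of these two per-listing matching probabilities, and simplifying yields \eqref{Eqn_homogen_limit_lr_est_expectation:Corollary_homogen_limit_estimator_expectation}.

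The main obstacle is not the algebra but correctly locating the single shared $F$-factor that carries the interference in each design — at the acceptance stage for $\CR$ (both customer groups compete for the same listings) and at the application stage for $\LR$ (both listing groups compete for each customer's single application) — and verifying that the remaining factors are group-specific. Once the extended type spaces are set up, everything follows mechanically from Lemma~\ref{Prop_limit_app_and_book_prob}; the only nontrivial simplification is the collapse $xF(x)=1-\exp(-x)$, which turns the product of two rate factors in the $\LR$ booking rate into the exponential form of \eqref{Eqn_homogen_limit_lr_est_expectation:Corollary_homogen_limit_estimator_expectation}.
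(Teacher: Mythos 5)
Your proof is correct and follows essentially the same route as the paper: the paper proves the general heterogeneous-market Proposition~\ref{Prop_limit_estimator_expectation} by applying Lemma~\ref{Prop_limit_app_and_book_prob} to the product type spaces $\Gamma\times\{0,1\}$ and $\Theta\times\{0,1\}$ and then specializes to the homogeneous case, which is exactly your two-point-type-space computation carried out directly (and your observation that the interference enters through a single shared $F$-factor, at the acceptance stage for $\CR$ and the application stage for $\LR$, is the same structural point the paper makes). One small remark: your $\LR$ derivation correctly, and consistently with Proposition~\ref{Prop_limit_estimator_expectation} and the displayed $\GTE$ limit, yields $\exp(-\lambda\conrate F(a_L\tildeconrate+(1-a_L)\conrate))-\exp(-\lambda\tildeconrate F(a_L\tildeconrate+(1-a_L)\conrate))$, i.e.\ the two exponentials in the opposite order from \eqref{Eqn_homogen_limit_lr_est_expectation:Corollary_homogen_limit_estimator_expectation} as printed, which appears to be a typo in the corollary's statement rather than a gap in your argument.
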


To understand how these expressions capture competition and interference in the experiments, compare the expressions for the limiting $\GTE$ in the homogeneous case, which by Corollary~\ref{Corollary_limit_gte} is
\[
    \lim_{N\to\infty} GTE = \lambda (\tildeconrate F(\tildeconrate) F(\lambda \tildeconrate F(\tildeconrate)) - \conrate F(\conrate) F(\lambda \conrate F(\conrate))) = \exp(-\lambda \conrate F(\conrate)) - \exp(-\lambda \tildeconrate F(\tildeconrate)).
\]

Recall that in our model, the conversion probabilities from consideration to applications and from applications to bookings can both be expressed with the function $F$. Consider the homogeneous example, where $F(\conrate)$ is the consideration-to-application conversion probability under global control, and $F(\lambda \apprate)$ with $\apprate=\phi F(\conrate)$ is the application-to-booking conversion probability.

In an $\LR$ experiment when $a_L$ fraction of the listings are treated and treated listings have a consideration rate $\tildeconrate > \conrate$, a customer's consideration set will include a mix of control and treatment listings: that is, the intensity of consideration for each customer increases to $(1-a_L)\conrate + a_L\tildeconrate$. The rate at which consideration of listings converts to applications becomes $F((1-a_L)\conrate + a_L\tildeconrate) < F(\conrate)$. In other words, a control listing now must compete with treatments listings for customer applications relative to the global control condition. %
A similar argument applies to the treated listings.
This ``mixing''  in the consideration set exactly reflects the competition or interference between treatment and control listings in $\LR$ experiments, and is the source of the resulting estimation bias.
Assuming $\tildeconrate > \conrate$, the overall consideration rate is lower for the customers in $\LR$ experiments than in global treatment, and thus the conversion probability in $\LR$ experiments must be higher than that in global treatment, $F(\tildeconrate)$;
on the other hand, it must be lower than that in global control, $F(\conrate)$. 
A treatment listing is more likely to receive an application from a customer in $\LR$ experiments than in global treatment, conditioning on being considered, and similarly control listings are less likely to receive applications in the experiment than in global control. 
In other words, the mixed environment in $\LR$ experiments causes an undue advantage to the treatment listings, making them better off than they would be in global treatment, while making the control listings worse off than they would be in global control.

Similarly, in global control, the asymptotic application-to-bookings conversion probability is given by $F(\lambda\apprate)$, where $\apprate$ is the application intensity as discussed above. In a $\CR$ experiment, however, each listing will receive applications from a mix of treatment and control customers, now with the blended intensity of $(1-a_C)\apprate + a_C \tildeapprate > \apprate$. This means that each of the applications from control customers now have to compete for acceptance with some additional number of applications from treated customers, intensifying competition relative to global control. Meanwhile, each treatment customer will experience less intense competition in the $\CR$ experiment than they would under global treatment. 
The intensity of competition is captured by the rate at which an application is accepted $F(\lambda ((1-a_C) \apprate + a_C \tildeapprate))$. This acceptance rate is lower than that in global control, due to more applications and hence more intense competition among the customers, and higher than that in global treatment for the analogous reason. Such mixed competition between the control and treatment customers creates an advantage for treatment customers and meanwhile harms the control customers.

The discussion of how bias arises motivates the following lemma, which asserts that under certain condition, the competition effect in $\CR$ and $\LR$ leads to a positive bias in the naive estimators. In subsequent discussion, we say that an intervention is \emph{multiplicative} in consideration probabilities $\conprob(\gamma,\theta)$ if there exists $\alpha > 0$ such that $\tildeconprob(\gamma,\theta) = \alpha \conprob(\gamma,\theta)$ for all customer types $\ginG$ and listing types $\tinT$. We call a multiplicative intervention \emph{positive} or \emph{upward} if $\alpha > 1$, and \emph{negative} or \emph{downward} if $\alpha < 1$.

\begin{restatable}{lemma}{lemmaUpwardBias}
\label{Lemma_upward_bias}
When the intervention is positive and multiplicative in consideration probabilities, then the bias is asymptotically positive for both $\CR$ and $\LR$ estimators, i.e.
\[
    \lim_{N\to\infty} \left(\E\left[\gtehatcr^{(N)}(a_C)\right] - GTE^{(N)}\right) > 0 \;\text{ and }\; \lim_{N\to\infty} \left(\E\left[\gtehatlr^{(N)}(a_L)\right] - GTE^{(N)}\right) > 0
\]
for any fixed $a_C,a_L\in(0,1)$. (Analogously, when the intervention is negative and multiplicative, the bias is asymptotically negative for both $\CR$ and $\LR$ estimators.)
\end{restatable}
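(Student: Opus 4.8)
The plan is to work directly with the limiting expressions for the two estimators and for the $\GTE$, and to lean on two elementary monotonicity facts about $F$: the map $x \mapsto xF(x) = 1 - e^{-x}$ is strictly increasing, whereas $F$ itself is strictly decreasing. Throughout I write $c_\gamma := \v{\tau}\cdot\v{\conratematrix}(\gamma,\cdot)$ for the limiting consideration intensity of a type-$\gamma$ customer, and I use that a positive multiplicative intervention sets $\tildeconrate(\gamma,\theta) = \alpha\conrate(\gamma,\theta)$ with $\alpha > 1$. The entire argument reduces to pinning down the ordering of the ``competition intensities'' that appear as arguments of $F$; I would carry it out in the general heterogeneous case using Proposition~\ref{Prop_limit_estimator_expectation} and Corollary~\ref{Corollary_limit_gte}, the structure being already visible in the homogeneous reductions \eqref{Eqn_homogen_limit_cr_est_expectation:Corollary_homogen_limit_estimator_expectation} and \eqref{Eqn_homogen_limit_lr_est_expectation:Corollary_homogen_limit_estimator_expectation}.

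For the $\CR$ estimator, the key observation is that in the experiment every listing of type $\theta$ faces a single blended application intensity $\Lambda_\theta^{\CR} = a_C\Lambda_\theta^{\GT} + (1-a_C)\Lambda_\theta^{\GC}$, where $\Lambda_\theta^{\GT} = \lambda\sum_\gamma\sigma(\gamma)\tildeapprate(\gamma,\theta)$ and $\Lambda_\theta^{\GC} = \lambda\sum_\gamma\sigma(\gamma)\apprate(\gamma,\theta)$ are the application intensities at a type-$\theta$ listing under global treatment and global control; hence both treatment and control customers convert applications to bookings at the common rate $F(\Lambda_\theta^{\CR})$. Subtracting the $\GTE$ from the estimator limit, the bias becomes a $\sigma$- and $\tau$-weighted sum over $(\gamma,\theta)$ of
\[
\tildeapprate(\gamma,\theta)\bigl[F(\Lambda_\theta^{\CR}) - F(\Lambda_\theta^{\GT})\bigr] + \apprate(\gamma,\theta)\bigl[F(\Lambda_\theta^{\GC}) - F(\Lambda_\theta^{\CR})\bigr].
\]
First I would show $\tildeapprate(\gamma,\theta) > \apprate(\gamma,\theta)$ for every $(\gamma,\theta)$: multiplicativity gives the ratio $\tildeapprate/\apprate = (1 - e^{-\alpha c_\gamma})/(1 - e^{-c_\gamma})$, which exceeds $1$ precisely because $\alpha > 1$ and $x \mapsto 1 - e^{-x}$ is increasing. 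Summing over $\gamma$ then yields $\Lambda_\theta^{\GC} < \Lambda_\theta^{\CR} < \Lambda_\theta^{\GT}$, so monotone decrease of $F$ makes both bracketed differences positive and the whole sum strictly positive.

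For the $\LR$ estimator the analogous reduction uses multiplicativity in a second, essential way: because treated listings scale consideration by $\alpha$ and controls by $1$, each customer's \emph{total} consideration intensity in the experiment is exactly $\beta c_\gamma$ with $\beta = a_L\alpha + (1-a_L) \in (1,\alpha)$, uniformly across listing types. Thus every customer's consideration-to-application conversion is $F(\beta c_\gamma)$, which sits strictly between $F(\alpha c_\gamma)$ (global treatment) and $F(c_\gamma)$ (global control). Tracking this through the per-listing application intensities shows that a control listing of type $\theta$ receives strictly fewer applications than under global control while a treatment listing receives strictly more than under global treatment, i.e. $\Lambda_\theta^{\LR,0} < \Lambda_\theta^{\GC}$ and $\Lambda_\theta^{\LR,1} > \Lambda_\theta^{\GT}$. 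Writing each listing's booking probability as $1 - e^{-\Lambda}$ and grouping the per-type bias as
\[
\bigl[e^{-\Lambda_\theta^{\LR,0}} - e^{-\Lambda_\theta^{\GC}}\bigr] + \bigl[e^{-\Lambda_\theta^{\GT}} - e^{-\Lambda_\theta^{\LR,1}}\bigr],
\]
both brackets are positive by these orderings together with the monotone decrease of $e^{-x}$, so the total bias is again strictly positive.

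The hard part is the heterogeneous bookkeeping, and in particular establishing the two intensity orderings cleanly for \emph{every} type at once; this is exactly where the multiplicative assumption does its work, since it forces the blended consideration and application intensities to lie between their global-control and global-treatment counterparts uniformly in the type (for a general, non-multiplicative intervention this sandwiching can fail type-by-type, and the sign of the bias is no longer forced). The downward case $\alpha < 1$ follows from the identical decompositions with every inequality reversed, yielding negative bias. I would close by confirming strictness: since $\alpha \neq 1$ the relevant intensities are genuinely distinct, so at least one summand in each decomposition is strictly signed, which upgrades the term-wise ``$\ge 0$'' to the claimed strict inequality.
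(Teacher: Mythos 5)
Your proof is correct and follows essentially the same route as the paper's: both arguments establish that the blended consideration and application intensities in the experiment are sandwiched between their global-control and global-treatment counterparts (using that $x\mapsto 1-e^{-x}$ is increasing while $F$ is decreasing, with the multiplicative assumption delivering $\tildeappratematrix \geq \appratematrix$ componentwise), and then conclude that control units book less than under global control while treated units book more than under global treatment. Your explicit closing remark on strictness is a minor refinement over the paper, whose proof states the key comparisons only as weak inequalities.
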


The previous theorem relies on the assumption that the intervention has an equal and multiplicative effect on $\conrate(\gamma,\theta)$ for all $\ginG$ and $\tinT$. In general, of course, if treatment has different effects on different pairs of listing and customer types, the resulting sign of the treatment effect may depend on a complex way on the model primitives.  In particular, the bias of both $\CR$ and $\LR$ experiments will be affected by market conditions, such as the type distribution of listings and customers and the relative demand $\lambda$. The bias is also affected by the  magnitude and sign of $\tildeconprob - \conprob$, i.e., the lift that the intervention has on the consideration probability at the customer-listing pair level.  Nevertheless, in our subsequent development the case of multiplicative effects will be a valuable benchmark within which to develop intuition.

\subsection{Characterization of $\LR$ and $\CR$ variance}
\label{ssec:char_lr_cr_variance}

The simplicity of our model allows us to analyze the asymptotic behavior of variance in $\LR$ and $\CR$ experiments. If the bookings are made independently, i.e. the Bernoulli random variables $Y_l$ indicating whether listing $l$ is booked are mutually independent, then the standard error can be fully characterized by a classic binomial model. However, in the presence of competition in the market, we also need to account for the negative correlation between $Y_l$ and $Y_{l'}$ for different listings $l$ and $l'$. 

In Appendix \ref{sec_variance_appendix}, we study the variance of the sampling distribution of $\LR$ and $\CR$ estimators, in a homogeneous market.  The tractability of our model allows us to derive an explicit expression for the variance of the estimators in a large market, incorporating the correlation terms. We show that the expressions indeed correspond closely with the variance obtained through simulations. We then leverage this analysis and the resulting expressions for numerics to qualitatively study variance-optimal designs in Section~\ref{subsec:effect_of_allocation_on_var}.

\section{Optimizing experiments: Experiment type}
\label{sec:optimizing_exp_type}

We now turn our attention to two levers that a platform has when designing experiments: the choice of experiment type ($\CR$ or $\LR$) and the treatment allocation for a given experiment type ($a_C$ and $a_L$). In this section, we focus on the choice of experiment type for the platform.  We find that the bias-optimal type depends on market balance, with $\CR$ bias diminishing as relative demand $\lambda \to 0$ and $\LR$ bias diminishing as relative demand $\lambda \to \infty$.  (A similar result for the behavior of bias in market extremes was found in a dynamic market model in \cite{johari2021experimental}.) 
Moreover, we show through simulations that the bias-optimal type often coincides with the variance-optimal type, or that the two types have similar variance. In short, there is no pronounced bias-variance tradeoff in the choice of experiment type. 

\subsection{Bias of $\CR$ versus $\LR$}
Using Proposition    \ref{Prop_limit_estimator_expectation}, we can explicitly characterize the $\CR$ and $\LR$ bias as a function of the relative demand, customer and listing type distributions, and pre-treatment and post-treatment consideration probabilities.  In particular, we can show that the $\CR$ estimator becomes unbiased as the relative demand diminishes and the $\LR$ estimator becomes unbiased as the relative demand increases.

\begin{restatable}[Unbiasedness in market extremes]{theorem}{thrmUnbiasedExtremes}\label{ThrmUnbiasedExtremes}

\begin{enumerate}
    \item Consider a sequence of markets where $\lambda \to 0$.
    Along this sequence, for any $a_C \in (0,1)$ we have $\lim_{N\to\infty}\lambda^{-1}\big(\gtehatcr(a_C)-\GTE\Big) \to 0$ as $\lambda\to 0$. That is, the asymptotic bias of the $\CR$ estimator approaches 0 as the relative demand decreases. The asymptotic bias of the $\LR$ estimator $\lim_{N\to\infty}\lambda^{-1}\big(\gtehatlr(a_L)-\GTE\big)$, however, is bounded away from zero for any $a_L\in(0,1)$.
    \item Consider a sequence of markets where $\lambda \to \infty$. Along this sequence, for any $a_L \in (0,1)$ we have $\lim_{N\to\infty}\big(\gtehatlr(a_L) - \GTE\big) \to 0$. That is, the asymptotic bias of the $\LR$ estimator approaches 0 as the relative demand increases. The asymptotic bias of the $\CR$ estimator $\lim_{N\to\infty}\big(\gtehatcr(a_C)-\GTE\big)$, however, is bounded away from zero for any $a_C\in(0,1)$.
\end{enumerate}
\end{restatable}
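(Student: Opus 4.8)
The plan is to push each claim back to the closed-form large-market limits and then analyze them as functions of the single scalar $\lambda$. I would use the homogeneous expressions in Corollary~\ref{Corollary_homogen_limit_estimator_expectation} as the representative case (the heterogeneous case runs identically on the vector/matrix forms of Proposition~\ref{Prop_limit_estimator_expectation}), together with the limiting $\GTE$ from Corollary~\ref{Corollary_limit_gte}. The key simplification I would record first is the identity $xF(x)=1-e^{-x}$: setting $u=\tildeconrate F(\tildeconrate)=1-e^{-\tildeconrate}$ and $v=\conrate F(\conrate)=1-e^{-\conrate}$, the limiting $\GTE$ is $e^{-\lambda v}-e^{-\lambda u}$, the limiting $\CR$ estimate is $\lambda(u-v)\,F(\lambda w)$ with $w=a_C u+(1-a_C)v$, and the limiting $\LR$ estimate is the difference of treatment and control booking fractions, $e^{-\lambda \conrate F(m)}-e^{-\lambda \tildeconrate F(m)}$, with $m=a_L\tildeconrate+(1-a_L)\conrate$. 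Each assertion then reduces to a one-variable limit.

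For Part 1 ($\lambda\to 0$) I would Taylor expand in $\lambda$. Since $F(\lambda w)=1-\tfrac{\lambda w}{2}+O(\lambda^2)$, the $\CR$ limit is $\lambda(u-v)+O(\lambda^2)$, and expanding the exponentials gives $\GTE=\lambda(u-v)+O(\lambda^2)$; the $O(\lambda)$ terms coincide exactly, so the bias is $O(\lambda^2)$ and $\lambda^{-1}(\gtehatcr-\GTE)\to 0$. The conceptual content I would emphasize is that as $\lambda\to 0$ there is no acceptance-side competition, $F(\lambda\,\cdot)\to 1$, so the treatment/control mixing that $\CR$ induces at the acceptance stage is negligible to first order. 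For $\LR$ the mixing instead enters through the consideration-to-application factor $F(m)$, which carries no $\lambda$-dependence; expanding the exponentials gives $\gtehatlr=\lambda F(m)(\tildeconrate-\conrate)+O(\lambda^2)$, so $\lambda^{-1}(\gtehatlr-\GTE)\to F(m)(\tildeconrate-\conrate)-(u-v)$, a nonzero constant.

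The main obstacle is proving this $\LR$ constant does not vanish. I would rewrite it through the integral identities $u-v=\int_{\conrate}^{\tildeconrate}e^{-t}\,dt$ and $F(m)=\tfrac{1}{m}\int_0^m e^{-t}\,dt$, so that nonvanishing is exactly the strict comparison between the average of $e^{-t}$ over $[0,m]$ and over $[\conrate,\tildeconrate]$. Because $e^{-t}$ is strictly decreasing and $0\le\conrate<m<\tildeconrate$ for $a_L\in(0,1)$, the chain ``average over $[0,m]\ge$ average over $[\conrate,m]\ge$ average over $[\conrate,\tildeconrate]$'' holds with at least one strict inequality, giving $F(m)(\tildeconrate-\conrate)>u-v$; hence the coefficient is strictly positive and bounded away from zero. (A downward intervention is handled by the same monotonicity argument with reversed inequalities.)

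For Part 2 ($\lambda\to\infty$) the limits are immediate. Both $e^{-\lambda u}$ and $e^{-\lambda v}$ vanish, so $\GTE\to 0$, and the two exponentials in the $\LR$ limit vanish as well, so $\gtehatlr-\GTE\to 0$ --- intuitively, treatment and control booking fractions both saturate to $1$ and no gap survives. By contrast $F(\lambda w)=\tfrac{1-e^{-\lambda w}}{\lambda w}$ gives a $\CR$ limit of $\tfrac{(u-v)(1-e^{-\lambda w})}{w}\to\tfrac{u-v}{w}$, a nonzero constant, while $\GTE\to 0$; thus the $\CR$ bias converges to $\tfrac{u-v}{a_C u+(1-a_C)v}\ne 0$ and is bounded away from zero. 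I would close by noting that the heterogeneous case is identical in structure: the $\lambda\to 0$ cancellation for $\CR$ persists because the application rates $\apprate(\gamma,\theta)=\conrate(\gamma,\theta)F(\v{\tau}\cdot\v{\conratematrix}(\gamma,\cdot))$ do not depend on $\lambda$, whereas the $\LR$ mixing keeps distorting exactly these application rates at first order in $\lambda$.
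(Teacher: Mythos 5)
Your proposal is correct and follows essentially the same route as the paper's proof: substitute the mean-field expressions for the estimators and the $\GTE$ (Proposition~\ref{Prop_limit_estimator_expectation} / Corollary~\ref{Corollary_homogen_limit_estimator_expectation} and Corollary~\ref{Corollary_limit_gte}) and take the limits $\lambda\to 0$ and $\lambda\to\infty$, observing that the $\CR$ and $\GTE$ first-order terms in $\lambda$ coincide while the acceptance-stage factor $F(\lambda\,\cdot)$ saturates in the over-demanded limit. The one place you go beyond the paper is worth noting: the paper simply exhibits the two limiting expressions and asserts they differ ("in general nonzero"), whereas you actually prove strict non-vanishing of the residual $\LR$ bias, $F(m)(\tildeconrate-\conrate)-(u-v)>0$, by recognizing both sides as averages of $e^{-t}$ over the intervals $[0,m]$ and $[\conrate,\tildeconrate]$ and using strict monotonicity of $e^{-t}$ together with $m<\tildeconrate$; this is a clean argument the paper lacks. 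The only caveat is your closing claim that the heterogeneous case is "identical": the non-vanishing argument there requires summing type-indexed terms that could in principle cancel for interventions with mixed signs across type pairs, so strictness only transfers when the intervention acts in a common direction (e.g., multiplicative); the paper's own proof has the same gap, so this does not put you behind it.
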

\begin{remark}
In the first part of the proposition we normalize by $\lambda$, because if $\lambda \to 0$ then both the $\GTE$ and both estimators go to zero mechanically.  
In the second part of the proposition, however, the $\GTE$ does not mechanically go to zero, so the result does not include the normalization by $\lambda$.
\end{remark}

Intuitively, bias in the $\CR$ estimate arises when treatment and control customers apply to the same listing and thus compete with each other. Bias in the $\LR$ estimate arises when customers consider both control and treatment listings in their consideration set, creating competition between listings. In a demand constrained market as $\lambda \to 0$, there are few enough customers that customers are unlikely to apply to the same listings and so the $\CR$ estimator is unbiased. However, the existing customers will still have multiple listings in their consideration sets, so in an $\LR$ experiment, treatment and control listings will still compete and the $\LR$ estimate will be biased.

In over-demanded market as $\lambda \to \infty$, many customers will apply to a given listing, creating competition between customers, and thus biasing the $\CR$ estimator. The competition between listings, created by customers comparing multiple listings in their consideration set, persists in this extreme as well. However, as the number of customers grows, all listings will receive an application (and thus be booked), regardless of the competition created by multiple listings appearing in a customer's consideration set and of the treatment condition. Hence, both the $\GTE$ and the $\LR$ estimate approach zero, and thus the $\LR$ estimate becomes unbiased.

\subsection{Optimizing bias and variance}
Using simulations, we analyze the variance of the $\CR$ and $\LR$ estimators, and we find that the choice of experiment type does not induce a significant bias-variance tradeoff. 

\begin{figure}[h]
    \centering
    \includegraphics[scale=.55]{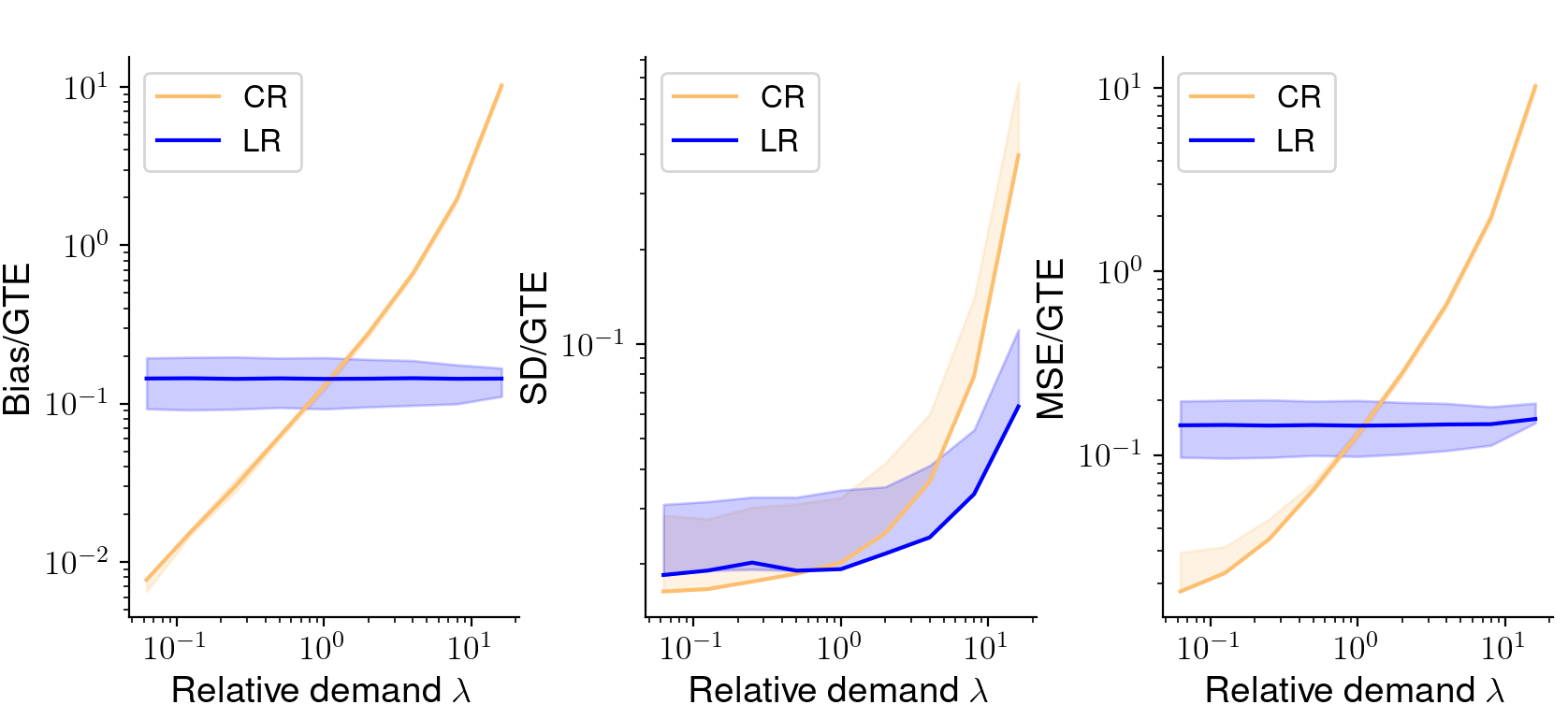}
    \caption{Behavior of bias, SD, and MSE as relative demand $\lambda$ changes. Simulations in homogeneous markets with $M=2^{22} \approx 4.1$ million customers, and number of listings $M$ varying between $2^{16} \approx 66$ thousand to $2^{28} \approx 268$ million at each integer power of 2. Solid curves represent a 50\% allocation to treatment. Shaded bands represent all possible values achieved with a treatment allocation in the range $[0.1, 0.9]$. Consideration probabilities are fixed across the market and are chosen such that the booking probability in a balanced market (i.e., $M = N$) is $20\%$ under global control and $22\%$ under global treatment, with $\conrate = N\conprob \approx 0.253$ and $\tildeconrate = N\tildeconprob\approx 0.286$.}
    \label{fig:vary_lambda_bias_sd_mse}
\end{figure}

In Figure \ref{fig:vary_lambda_bias_sd_mse}, we consider a homogeneous market and compare the performance of $\LR$ and $\CR$ estimators as $\lambda$ varies. The parameters are calibrated to reflect reasonable booking probabilities and treatment effects: in a balanced market, 20 percent of listings are booked in global control and 22 percent in global treatment. 

We see that the $\LR$ and $\CR$ estimators have similar variance for $\lambda<1$. In this range, the $\CR$ design has lower bias, and so a platform aiming to minimize $\MSE$ should run a $\CR$ experiment. For $\lambda>1$, an $\LR$ experiment leads to lower variance than a $\CR$ experiment. Thus for a market with higher demand, an $\LR$ experiment minimizes both bias and variance, and so minimizes the $\MSE$. In Appendix \ref{app:figures}, we find that these observations hold true in scenarios with varying $\GTE$ and heterogeneity. 

It is interesting to note that for the $\LR$ estimator, although the bias goes to zero in the supply-constrained limit in an {\em absolute} sense (from Theorem \ref{ThrmUnbiasedExtremes}), it does not go to zero in a {\em relative} sense (normalized by $\GTE$) cf.~Figure \ref{fig:vary_lambda_bias_sd_mse}.  In fact, the relative bias remains fairly flat for the $\LR$ estimator.  By contrast, both the absolute and relative bias of the $\CR$ estimator approach zero in the demand-constrained limit.

In short, if the choice of experiment type is available to the platform, then choosing the bias-minimizing design does not increase the variance of the design, and in some cases may even decrease the variance. A relatively demand constrained market should run a $\CR$ experiment and a relatively supply constrained market should run an $\LR$ experiment.

\section{Optimizing experiments: Treatment allocation }
\label{sec:optimizing_allocation}

Once the experiment type is fixed, the platform must choose what proportion of individuals (either listings or customers) to randomize to treatment. Typically, platforms will randomize individuals to receive treatment and control with equal probability; in settings without interference and with independent observations, this 50-50 split (i.e., treatment allocation of $0.5$) decreases the variance of the estimator and increases the statistical power of the experiment.  In our setting, however, there are two complicating factors: First,  $\CR$ and $\LR$ estimators are typically biased, and this bias can vary with the treatment allocation.  Second, in two-sided markets interference also creates correlation between observations, so the behavior of variance with treatment allocation is not immediately obvious.  In this section we investigate these issues.

We show that the choice of treatment allocation induces a bias-variance tradeoff. 
First, we show that the bias-minimizing treatment allocation probability often lies at an endpoint of the range $[0,1]$, for both $\CR$ and $\LR$. The variance-minimizing allocation, however, is roughly $0.5$ (for reasonably small treatment effect sizes). 
We discuss the factors contributing to whether a platform should optimize for bias or optimize for variance. Overall, we find through simulations that even though it is not always optimal, typically a 50-50 split is a relatively robust choice of allocation for minimizing the $\MSE$ in many practical scenarios. Finally, we compare the relative importance of optimizing the experiment type and optimizing the treatment allocation. While the treatment allocation can offer improvements, the effect of choosing the correct experiment type greatly outweighs the smaller gains from optimizing the allocation.

We note that in some practical settings, if an intervention is deemed too risky or expensive, or if statistical power is not a concern, then a platform may start with a relatively small initial treatment allocation. The platform then employs a ``ramp up'' process \cite{xu2018sqr}, where it waits to see initial effects on the metric of interest before incrementally increasing the treatment proportion. This process of waiting and increasing may happen several times until the intervention is eventually introduced to the entire population. Our work in this section is also valuable to platforms implementing such strategies, as the proportion allocated to treatment changes the bias in the resulting estimator. (For further discussion see Section \ref{sec:discussion}.)

\subsection{Effect of treatment allocation on $\CR$ bias}

We show for a subclass of intervention types that bias is monotonic in the treatment proportion, though we conjecture that the result holds more generally. Recall that we call an intervention \textit{multiplicative} if there exists an $\alpha$ such that $\v{\tildeconprob}=\alpha \v{\conprob}$. That is, the intervention has the same multiplicative lift on the consideration probability across all customer and listing types. The following result shows that for $\CR$ and a multiplicative intervention, the bias is decreasing in $a_C$ if $\alpha>1$ and increasing in $a_C$ if $\alpha<1$. For a $\CR$ design with treatment probability $a_C$, define the %
asymptotic bias of the experiment to be $B_\CR(a_C)$ where 
\[
    B_\CR(a_C) = \lim_{N \rightarrow \infty} \left( \E\left[\widehat{\GTE}_\CR^{(N)}(a_C)\right] - \GTE^{(N)} \right).
\]

\begin{restatable}{theorem}{thrmCROptBiasAC}
\label{Thm_opt_bias_ac}
Suppose that the intervention is multiplicative on $\v{\conprob}$ with parameter $\alpha$. If $\alpha>1$, then the asymptotic bias of the $\CR$ estimator i.e., $|B_\CR(a_C)|$, decreases as $a_C$ increases. If $\alpha<1$, then the asymptotic bias of the $\CR$ estimator increases as $a_C$ increases.
\end{restatable}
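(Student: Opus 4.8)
The plan is to reduce the claim to a single monotonicity statement about the limiting value of the $\CR$ estimator. By Lemma~\ref{Lemma_upward_bias}, a positive multiplicative intervention ($\alpha>1$) produces $B_\CR(a_C)>0$ and a negative one ($\alpha<1$) produces $B_\CR(a_C)<0$, so in each case $|B_\CR(a_C)|$ equals $B_\CR(a_C)$ up to a fixed sign. Since $\GTE$ is a market-level quantity that does not depend on the allocation, $B_\CR(a_C)=\lim_{N\to\infty}\gtehatcr(a_C)-\lim_{N\to\infty}\GTE^{(N)}$, and it suffices to show that $\lim_{N\to\infty}\gtehatcr(a_C)$ is \emph{strictly decreasing} in $a_C$ for every $\alpha\neq 1$. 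The two absolute-value statements then follow at once: a decreasing positive bias shrinks in magnitude (case $\alpha>1$), while a decreasing negative bias grows in magnitude (case $\alpha<1$).

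I would first carry out the computation in the homogeneous market, where Corollary~\ref{Corollary_homogen_limit_estimator_expectation} gives $\lim_{N\to\infty}\gtehatcr(a_C)=\lambda(v-u)\,F\big(\lambda(u+a_C(v-u))\big)$ with $u=\conrate F(\conrate)$ and $v=\tildeconrate F(\tildeconrate)$. The key algebraic fact is the identity $xF(x)=1-e^{-x}$, which is strictly increasing; hence $u=1-e^{-\conrate}$, $v=1-e^{-\tildeconrate}$, and $\alpha>1\iff\tildeconrate>\conrate\iff v>u$ (symmetrically for $\alpha<1$). Since the argument of $F$ is linear in $a_C$ with slope $\lambda(v-u)$, differentiating gives
\[
\frac{d}{da_C}\lim_{N\to\infty}\gtehatcr(a_C)=\lambda^2(v-u)^2\,F'\big(\lambda(u+a_C(v-u))\big).
\]
Because $F$ is strictly decreasing ($F'<0$) and $(v-u)^2>0$ whenever $\alpha\neq 1$, this derivative is strictly negative regardless of the sign of $v-u$. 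This ``perfect square'' cancellation is the crux: the same factor $v-u$ that fixes the sign of the bias also fixes the slope of the $F$-argument, so it enters the derivative squared and the monotonicity direction becomes sign-independent.

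To obtain the theorem in full generality from Proposition~\ref{Prop_limit_estimator_expectation}, I would show the square structure survives heterogeneity. The essential consequence of multiplicativity is that the treatment-to-control ratio of application rates, $\tildeapprate(\gamma,\theta)/\apprate(\gamma,\theta)=\beta_\gamma$ with $\beta_\gamma=(1-e^{-\alpha C_\gamma})/(1-e^{-C_\gamma})$ and $C_\gamma=\sum_\theta\tau(\theta)\conrate(\gamma,\theta)$, is \emph{independent of the listing type} $\theta$, and $\beta_\gamma>1\iff\alpha>1$. Writing the limiting $\CR$ estimator as $\lambda\sum_\theta\tau(\theta)\,\Delta_\theta\,F(D_\theta(a_C))$, where $\Delta_\theta=\sum_\gamma\sigma(\gamma)\apprate(\gamma,\theta)(\beta_\gamma-1)$ and $D_\theta(a_C)=\lambda\sum_\gamma\sigma(\gamma)\apprate(\gamma,\theta)\big(1+a_C(\beta_\gamma-1)\big)$ is the blended application intensity a type-$\theta$ listing faces, one checks that $dD_\theta/da_C=\lambda\Delta_\theta$ reproduces exactly the coefficient $\Delta_\theta$ multiplying $F$. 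The chain rule then yields $\frac{d}{da_C}\lim_{N\to\infty}\gtehatcr(a_C)=\lambda^2\sum_\theta\tau(\theta)\,\Delta_\theta^2\,F'(D_\theta(a_C))<0$, again a nonnegative sum against $F'<0$. Combining this with the sign of the bias from Lemma~\ref{Lemma_upward_bias} closes both cases.

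I expect the main obstacle to be bookkeeping rather than conceptual: correctly assembling the limiting $\CR$ estimator in the heterogeneous case (tracking the extended customer type space, the shared acceptance factor $F(D_\theta)$ felt by both treatment and control customers at a given listing, and the normalization $M/N\to\lambda$) so that the $\theta$-independence of $\beta_\gamma$ cleanly forces $dD_\theta/da_C\propto\Delta_\theta$. Minor care is also needed to guarantee $D_\theta(a_C)>0$ so that $F$ is differentiable (using $\conrate(\gamma,\theta)>0$ for the relevant types) and to justify differentiating the $N\to\infty$ limit, but these are routine. The one place multiplicativity is genuinely indispensable is precisely this square: for a non-multiplicative intervention $\beta_\gamma$ could depend on $\theta$, $dD_\theta/da_C$ would no longer be proportional to $\Delta_\theta$, and the derivative could change sign—consistent with the paper only conjecturing monotonicity beyond the multiplicative class.
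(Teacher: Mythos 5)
Your proposal is correct and follows essentially the same route as the paper: establish the sign of the bias via Lemma~\ref{Lemma_upward_bias}, then differentiate the limit from Proposition~\ref{Prop_limit_estimator_expectation} in $a_C$ and observe that the derivative is $\lambda^2\sum_{\theta}\tau(\theta)\,\Delta_\theta^2\,F'(\cdot)<0$ because the coefficient of $F$ and the slope of its argument coincide. One small correction to your closing remark: the perfect-square structure does \emph{not} depend on multiplicativity --- $D_\theta(a_C)$ is linear in $a_C$ with slope $\lambda\Delta_\theta$ by construction for \emph{any} intervention (even if your $\beta$ varies with $\theta$), so the limiting $\CR$ estimator is always decreasing in $a_C$, as the paper itself uses when bounding the bias differential; multiplicativity is indispensable only for fixing the sign of $B_\CR$ via Lemma~\ref{Lemma_upward_bias}, without which a decreasing estimator value tells you nothing about the monotonicity of $|B_\CR(a_C)|$.
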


This result may appear surprising, and so we provide some intuition to reveal why the bias of the $\CR$ estimator is decreasing in $a_C$.  Consider an example where the market is homogeneous with $M_1$ treatment customers, $M_0$ control customers, and $N$ listings.  Treatment (resp., control) customers choose a listing to consider with probability $\tilde{\conprob}$ (resp., $\conprob$), with $\tilde{\conprob} > \conprob$.  Now consider adding a new customer to this market.  Because we know that (a) the $\GTE$ is positive and (b) the $\CR$ estimator overestimates the $\GTE$ for any treatment allocation, the bias will go down if we reduce the value of the estimator.  Thus we want to add the customer to the group that leads the greatest reduction in the value of the estimator.

Let $B_1$ be the expected number of bookings in the treatment group, and let $B_0$ be the expected number of bookings in the control group.  Let $Y_0$ be the probability the new customer books as a control customer, and let $Y_1$ be the probability the new customer books as a treatment customer.  The key observation is that: $Y_1 < B_1/m_1$ and $Y_0 < B_0/m_0$; in other words, the new customer is less likely to book than the average booking rate of existing customers in either group.  This is because in order to book, the new customer has to apply to an entirely new listing that previously had no applications.  If the new customer is in treatment (resp., control), this is strictly less likely than any of the existing treatment (resp., control) customers.  The $\CR$ estimator is $B_1/M_1 - B_0/M_0$.  If we add the customer to the treatment group, the new estimator is $(B_1 + Y_1)/(M_1 + 1) - B_0/M_0$, while if we add the customer to the control group, the new estimator is $B_1/M_1 - (B_0 + Y_0)/(M_0 + 1)$.  It is straightforward to verify that the estimator, and thus the bias, is smaller if we add the customer to the treatment group.

We conjecture that a more general result than Theorem \ref{Thm_opt_bias_ac} holds for a class of interventions beyond multiplicative interventions, as long as the amount of heterogeneity across the differences $\tilde{\conprob}(\gamma, \theta) - \conprob(\gamma, \theta)$ is sufficiently small across pairs of $\gamma$ and $\theta$.  If $\tilde{\conprob}(\gamma, \theta) - \conprob(\gamma, \theta)$ is very heterogeneous across customer and listing types, the conjecture may fail; we have observed this in some examples where the change in consideration probability is positive for some pairs of customer and listing types, and negative for others. For interventions that platforms expect to have very heterogeneous effects, the direction of the bias is not obvious upfront.

Though the treatment allocation $a_C$ affects the magnitude of the bias in a $\CR$ experiment, we can show that the choice of $a_C$ does not affect the bias \textit{too} much. More specifically, we can bound the maximum difference in bias with respect to the choice of $a_C$. 
For a $\CR$ experiment, we call the difference between the highest possible and lowest possible bias attained by varying $a_C$
$$\sup_{a_C\in(0,1)} B_\CR(a_C) - \inf_{a_C\in(0,1)} B_\CR(a_C)$$ the \textit{$\CR$ bias differential} in $a_C$. 

From \eqref{Eqn_partial_derivative_ac:Thm_opt_bias_ac}, we immediately observe a bound on the $\CR$ bias differential given by the Lipshitz coefficient of $B_{\CR}$, as stated in the following corollary.

\begin{corollary}
In a $\CR$ experiment, the bias differential for the naive estimator is bounded by
\[
    \sup_{a_C\in(0,1)} B_\CR(a_C) - \inf_{a_C\in(0,1)} B_\CR(a_C) \leq \lambda^2 \left( \sup_{\ginG,\tinT}\left[ \tildeconrate(\gamma,\theta)-\conrate(\gamma,\theta) \right]\right)^2.
\]
\end{corollary}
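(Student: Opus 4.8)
The plan is to extract from the derivative formula \eqref{Eqn_partial_derivative_ac:Thm_opt_bias_ac}, established in the proof of Theorem \ref{Thm_opt_bias_ac}, a uniform Lipschitz constant $L$ for $B_\CR$ on $(0,1)$, and then observe that the bias differential is exactly the oscillation of $B_\CR$ over an interval of length one, hence at most $L$. Concretely, since $\GTE$ does not depend on the treatment allocation, $B_\CR'(a_C)$ coincides with the derivative of the limiting $\CR$ estimator. Writing $m_\theta(a_C) = \lambda \sum_\gamma \sigma(\gamma)\big(\apprate(\gamma,\theta) + a_C(\tildeapprate(\gamma,\theta) - \apprate(\gamma,\theta))\big)$ for the blended application intensity at a type-$\theta$ listing (so that $m_\theta$ alone carries the $a_C$-dependence), the formula \eqref{Eqn_partial_derivative_ac:Thm_opt_bias_ac} reads
\[
B_\CR'(a_C) = \lambda^2 \sum_\theta \tau(\theta)\, F'\big(m_\theta(a_C)\big) \Big(\sum_\gamma \sigma(\gamma)\big(\tildeapprate(\gamma,\theta) - \apprate(\gamma,\theta)\big)\Big)^2 .
\]

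Next I would bound this magnitude uniformly in $a_C$ using two ingredients. First, $|F'(x)| \le 1/2$ for all $x \ge 0$: one computes $-F'(x) = \big(1 - (x+1)e^{-x}\big)/x^2$, and the auxiliary function $g(x) = (x+1)e^{-x} - 1 + x^2/2$ has $g(0)=0$ and $g'(x) = x(1-e^{-x}) \ge 0$, so $g \ge 0$, which rearranges to exactly $-F'(x) \le 1/2$. Second, the application-rate gap is controlled by the consideration-rate gap, $|\tildeapprate(\gamma,\theta) - \apprate(\gamma,\theta)| \le |\tildeconrate(\gamma,\theta) - \conrate(\gamma,\theta)|$. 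Combining these with the triangle inequality inside the square and $\sum_\gamma \sigma(\gamma) = \sum_\theta \tau(\theta) = 1$ yields
\[
|B_\CR'(a_C)| \le \tfrac12 \lambda^2 \Big(\sup_{\gamma,\theta}\big[\tildeconrate(\gamma,\theta) - \conrate(\gamma,\theta)\big]\Big)^2 \le \lambda^2 \Big(\sup_{\gamma,\theta}\big[\tildeconrate(\gamma,\theta) - \conrate(\gamma,\theta)\big]\Big)^2 =: L ,
\]
uniformly in $a_C$. Finally, since $B_\CR$ is $C^1$ on $(0,1)$, for any $a, a'$ we have $|B_\CR(a) - B_\CR(a')| = |\int_{a'}^{a} B_\CR'(t)\,dt| \le L\,|a - a'| \le L$, and taking the supremum over $a, a'$ gives $\sup_{a_C} B_\CR - \inf_{a_C} B_\CR \le L$, the claimed bound.

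The main obstacle is the second ingredient in the heterogeneous case. Because $\apprate(\gamma,\theta) = \conrate(\gamma,\theta)\,F(\btau \cdot \conratematrix(\gamma,\cdot))$, the intervention perturbs both the leading factor $\conrate(\gamma,\theta)$ and the argument of $F$, so the gap $\tildeapprate - \apprate$ is not simply the image of $\tildeconrate - \conrate$ under a single $1$-Lipschitz map. For a multiplicative intervention $\tildeconrate = \alpha\conrate$---the setting of Theorem \ref{Thm_opt_bias_ac}, from which we borrow \eqref{Eqn_partial_derivative_ac:Thm_opt_bias_ac}---this is clean: writing $C_\gamma = \btau\cdot\conratematrix(\gamma,\cdot)$ and using the identity $xF(x) = 1 - e^{-x}$, one obtains $\tildeapprate(\gamma,\theta) - \apprate(\gamma,\theta) = \conrate(\gamma,\theta)\,(e^{-C_\gamma} - e^{-\alpha C_\gamma})/C_\gamma$, and the mean value theorem gives $(e^{-C_\gamma} - e^{-\alpha C_\gamma})/C_\gamma = e^{-\xi}(\alpha - 1)$ for some $\xi$ between $C_\gamma$ and $\alpha C_\gamma$, whence $|\tildeapprate - \apprate| \le |\alpha - 1|\,\conrate(\gamma,\theta) = |\tildeconrate - \conrate|$. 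For non-multiplicative interventions one would instead have to separately bound the contribution of the shift in the $F$-argument, but for the stated corollary this multiplicative identity supplies exactly what is needed.
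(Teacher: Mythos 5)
Your proof follows exactly the route the paper intends: the paper offers no argument beyond citing the derivative formula \eqref{Eqn_partial_derivative_ac:Thm_opt_bias_ac} and invoking the Lipschitz coefficient of $B_\CR$ over the unit interval, which is precisely what you carry out, and both of your ingredients (the bound $-F'(x)\le 1/2$ via the auxiliary function $g$, and the control of $\tildeapprate-\apprate$ by $\tildeconrate-\conrate$ via the identity $xF(x)=1-e^{-x}$ and the mean value theorem) check out, so your bound is valid and in fact stronger than the stated one by a factor of $1/2$. The caveat you raise about non-multiplicative interventions is well founded: the pointwise estimate $|\tildeapprate(\gamma,\theta)-\apprate(\gamma,\theta)|\le\sup_{\gamma',\theta'}|\tildeconrate(\gamma',\theta')-\conrate(\gamma',\theta')|$ can genuinely fail when the intervention perturbs the argument of $F$ much more than the leading factor --- for instance, with one customer type, a rare listing type carrying a large consideration rate, and a common listing type whose rate is raised from $0$ to $\delta$, the application rate to the rare type shifts by a large multiple of $\delta$ --- so the paper's remark that the corollary ``holds for any intervention'' is not actually delivered by this Lipschitz argument, and a general proof would have to exploit the compensating decay of $F'$ at the correspondingly large argument. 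Within the multiplicative setting of Theorem \ref{Thm_opt_bias_ac}, from which the derivative formula is taken, your proof is complete and matches the paper's.
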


Note that this result holds for any intervention, even when the intervention is not multiplicative on $\Phi$. This bound depends on the relative demand $\lambda$ as well as the size of the lift that the intervention has on the consideration probabilities. Figures \ref{fig:vary_lambda_bias_sd_mse} and \ref{fig:lrcr_bias_vary_treatment} show how the bias differential behaves when varying demand $\lambda$ and the size of the treatment effect.

\subsection{Effect of treatment allocation on $\LR$ bias}

For an $\LR$ design with treatment probability $a_L$, define the asymptotic bias to be 
$B_\LR(a_L)$ where 
$$B_\LR(a_L) = \lim_{N \rightarrow \infty} \left( \E\left[\widehat{\GTE}_\LR^{(N)}(a_L)\right] - \GTE^{(N)} \right).$$

We similarly show that for the $\LR$ design, the bias is monotonically changing in $a_L$. However, the result here is more subtle: for a fixed intervention, the bias may be monotonically increasing or monotonically decreasing in $a_L$, depending on both the sign of the change in the consideration probability and the relative demand $\lambda$.

We prove this result for a homogeneous market (where trivially the intervention is multiplicative).  
\begin{restatable}{theorem}{thrmLROptBiasAL}
\label{thm:opt_bias_aL}
Consider a homogeneous market (with only one type of listings and customers), where the consideration probability changes from $\conprob$ to $\tilde{\conprob}$. Then the asymptotic bias of the LR estimator, as a function of $a_L$, has no local minimum inside $(0,1)$. In other words, it achieves infimum either as $a_L\to 0$ or as $a_L\to 1$. Furthermore, there exists a $\lambda^*$ such that 
\begin{enumerate}[label=\alph*)]
    \item If $\tilde{\conprob}>\conprob$, then  $B_\LR(a_L)$ is decreasing in $a_L$ for $\lambda < \lambda^*$ and  $B_\LR(a_L)$ is increasing in $a_L$ for $\lambda>\lambda^*$ . 
    \item If $\tilde{\conprob}<\conprob$, then $B_\LR(a_L)$ is increasing in $a_L$ for $\lambda < \lambda^*$  and $B_\LR(a_L)$ is decreasing in $a_L$ for $\lambda>\lambda^*$.
\end{enumerate}
\end{restatable}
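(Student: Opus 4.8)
The plan is to reduce everything to the closed form of the limiting $\LR$ estimator. By Corollary \ref{Corollary_limit_gte} the limiting $\GTE$ does not depend on $a_L$, so $B_\LR(a_L)$ differs from $\lim_{N\to\infty}\gtehatlr(a_L)$ by an $a_L$-independent constant, and it suffices to analyze the monotonicity of $g(a_L) := \lim_{N\to\infty}\gtehatlr(a_L)$. Using the homogeneous expression in Corollary \ref{Corollary_homogen_limit_estimator_expectation}, I would write the blended rate $w(a_L) = a_L\tildeconrate + (1-a_L)\conrate$ and set $\Psi(v) = \exp(-\lambda\conrate v) - \exp(-\lambda\tildeconrate v)$, so that $g(a_L) = \Psi\big(F(w(a_L))\big)$ is the composition of $\Psi$ with the affine map $a_L \mapsto w$ and the strictly decreasing map $F$.

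First I would prove the ``no interior minimum'' claim by showing $g$ is unimodal on $(0,1)$. The key point is that the stationarity condition for $\Psi$ linearizes under logarithms: $\Psi'(v) = \lambda\big(\tildeconrate e^{-\lambda\tildeconrate v} - \conrate e^{-\lambda\conrate v}\big)$ vanishes iff $\log\tildeconrate - \lambda\tildeconrate v = \log\conrate - \lambda\conrate v$, which is linear in $v$ and hence has the single root $v^* = \log(\tildeconrate/\conrate)/\big(\lambda(\tildeconrate-\conrate)\big)$. Together with the boundary values $\Psi(0)=0$, $\Psi(v)>0$ for $v>0$, and $\Psi(v)\to 0$ as $v\to\infty$, this shows $\Psi$ increases then decreases, i.e.\ is unimodal. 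Since $a_L \mapsto v = F(w(a_L))$ is strictly monotone, $g$ is a monotone reparametrization of a unimodal function and is therefore unimodal (quasiconcave) in $a_L$; such a function has no interior local minimum and attains its infimum at an endpoint, which is exactly the first claim.

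For the threshold I would track the unique peak $v^*$ relative to the range of $v$ swept out as $a_L$ ranges over $(0,1)$, namely the interval with endpoints $F(\conrate)$ and $F(\tildeconrate)$. Because $v^* = \log(\tildeconrate/\conrate)/\big(\lambda(\tildeconrate-\conrate)\big)$ is strictly decreasing in $\lambda$, the peak slides monotonically as $\lambda$ grows. In case (a), $\tildeconrate>\conrate$: for small $\lambda$ the peak lies above the whole range, so $\Psi$ is increasing there, and since $v=F(w)$ decreases in $a_L$, $g=B_\LR$ is decreasing in $a_L$; for large $\lambda$ the peak lies below the range and the conclusion reverses to increasing. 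Declaring $\lambda^*$ the transition between these regimes gives the dichotomy. Case (b), $\tildeconrate<\conrate$, follows identically once one flips the sign of $w'(a_L)=\tildeconrate-\conrate$ and the boundary signs of $\Psi'$, which reverses each monotonicity conclusion.

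The main obstacle is pinning down a single clean threshold. Strictly speaking the peak exits the range $\big(F(\tildeconrate),F(\conrate)\big)$ at two distinct values of $\lambda$ --- the solutions of $v^*(\lambda)=F(\conrate)$ and of $v^*(\lambda)=F(\tildeconrate)$ --- and $g$ is globally monotone on $(0,1)$ only outside this window; inside it $g$ has an interior maximum (still consistent with the no-interior-minimum part). I would control this by observing that the window has width of order $F(\conrate)-F(\tildeconrate) = O(\tildeconrate-\conrate)$ and collapses to the single point $\lambda^* = 1/\big(\conrate F(\conrate)\big)$ as $\tildeconrate\to\conrate$, so the single-threshold statement is exact in the limit of small treatment effect and holds to leading order generally. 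All remaining steps are routine differentiation of the closed forms.
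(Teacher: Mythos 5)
Your argument is, in substance, the paper's: both proofs work from the homogeneous closed form of $\lim_{N\to\infty}\gtehatlr(a_L)$ (the limiting $\GTE$ being independent of $a_L$, as you note), and both reduce the no-interior-minimum claim to a single-crossing property of the derivative. You package this as the unimodal $\Psi$ having a unique peak $v^*=\log(\tildeconrate/\conrate)/(\lambda(\tildeconrate-\conrate))$ composed with the strictly monotone map $a_L\mapsto F(a_L\tildeconrate+(1-a_L)\conrate)$; the paper instead writes the sign condition for $\partial B_\LR/\partial a_L$ as $\tildeconrate/\conrate$ versus $\exp\left(\lambda(\tildeconrate-\conrate)F(\cdot)\right)$, with the left side constant and the right side monotone in $a_L$. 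These are the same computation, and your version of the first claim is complete and correct. (Minor note: the displayed formula in Corollary~\ref{Corollary_homogen_limit_estimator_expectation} has its two exponentials transposed relative to the $\GTE$ expression; your $\Psi(v)=e^{-\lambda\conrate v}-e^{-\lambda\tildeconrate v}$ is the consistent choice.)

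The substantive issue is the one you flag yourself in the last paragraph, and you are right to flag it: this argument produces \emph{two} thresholds, $\lambda_1=\log(\tildeconrate/\conrate)/\big((\tildeconrate-\conrate)F(\conrate)\big)$ and $\lambda_2=\log(\tildeconrate/\conrate)/\big((\tildeconrate-\conrate)F(\tildeconrate)\big)$ with $\lambda_1<\lambda_2$, and for $\lambda\in(\lambda_1,\lambda_2)$ the peak $v^*$ lies strictly inside the swept interval $\big(F(\tildeconrate),F(\conrate)\big)$, so $B_\LR$ has an interior maximum and is not globally monotone on $(0,1)$. A single $\lambda^*$ separating "decreasing" from "increasing" therefore does not follow from this computation. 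The paper's own proof does not close this gap either: it proves only "at most one interior local maximum" and then asserts the $\lambda^*$ statement via "an algebraic comparison of $B_\LR(0)$ and $B_\LR(1)$," which determines at which endpoint the infimum is attained (and yields a single crossing $\lambda^*$ for which endpoint wins) but not monotonicity throughout $(0,1)$. Your fallback --- exactness as $\tildeconrate\to\conrate$ with $\lambda^*\to 1/(\conrate F(\conrate))$, leading order otherwise --- is an honest account of what the computation delivers; if parts (a)--(b) are read as statements about the location of the infimum rather than global monotonicity, your peak-versus-range analysis gives precisely the paper's conclusion with nothing lost.
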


We conjecture that a similar, more general statement holds for multiplicative interventions in a heterogeneous market as well, and also for interventions where the lifts on the consideration probabilities $\conprob(\gamma, \theta)$ are not ``too'' heterogeneous across $\gamma$ and $\theta$.  More specifically, we conjecture that on a broader class of interventions, there exists a cutoff $\lambda^*$ such that when the $\GTE>0$, the $\LR$ asymptotic bias is decreasing in $a_L$ for $\lambda<\lambda^*$ and increasing in $a_L$ for $\lambda>\lambda^*$, and vice versa for $\GTE<0$.

\subsection{Effect of treatment allocation on variance}\label{subsec:effect_of_allocation_on_var}

In Appendix \ref{app:figures} Figures \ref{fig:vary_alloc_small_market}-\ref{fig:vary_alloc_large_market}, we observe that in a homogeneous market when $\conprob$ and $\tildeconprob$ are close to each other, the 
variance of the LR estimator is nearly symmetric about $a_L$ and $1-a_L$ and convex. Thus it is minimized at  $a_L\approx 0.5$. This observation agrees with the intuition that the choice of $a_L\approx 0.5$ balances the variances in the estimates of booking rates for the treatment and control groups. Similarly for $\CR$ experiments, we again notice that when the difference between $\conrate$ and $\tildeconrate$ is small, the $\CR$ variance is nearly symmetric about $a_C$ and $1-a_C$, and so the variance of the estimator will be minimized with $a_C\approx 0.5$.

In the case that the treatment effect is more pronounced, then the variance-optimal choice of treatment allocation may deviate from $0.5$, intuitively to balance the variance between the control estimate and the now increased treatment estimate. In most practical scenarios, however, the difference is typically small and $0.5$ should remain a reasonably near-variance-optimal choice. In
Figures \ref{fig:cr_var_approx_ratio} and \ref{fig:lr_var_approx_ratio} in Appendix \ref{app:figures}, we show the approximation ratio with the $0.5$ treatment allocation compared with the variance-optimal allocation in $\CR$ and $\LR$ designs, respectively, under different market conditions. In both designs, the variance associated with $50\%$ treatment allocation is close to the minimal variance under optimal allocation for most practical values of the $\GTE$ and market balance.

\subsection{Bias-variance tradeoff in treatment allocation}

\begin{figure}
    \centering
    \includegraphics[scale=0.5]{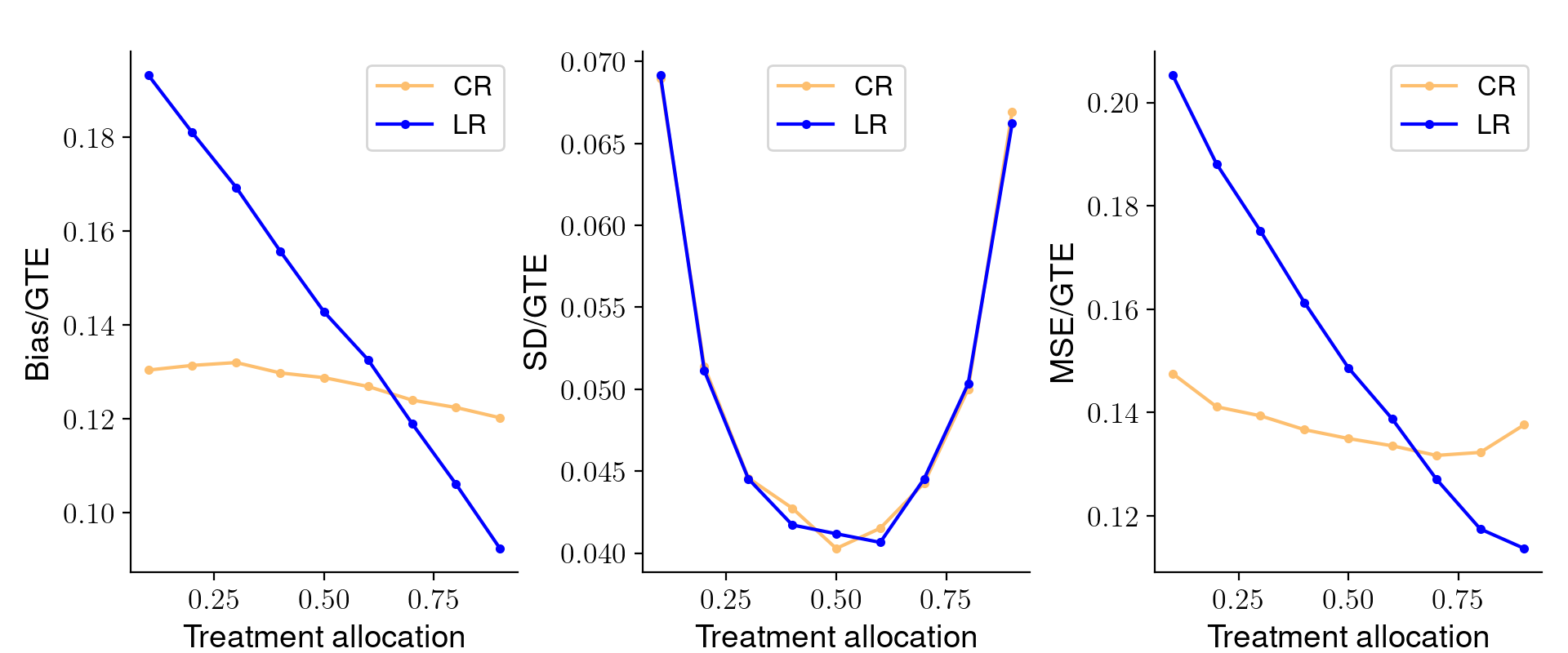}
    \caption{
    Change in bias, SD, and MSE as allocation varies. Market consists of $N=M=10^6$ homogeneous listings and customers, with parameters calibrated so that booking rate is $20\%$ in global control and $22\%$ in global treatment ($\conrate=N\conprob\approx 0.254$ and $\tildeconrate=N\tildeconprob\approx 0.286$). For each allocation ratio, bias, SD, and MSE are computed over over $3,000$ runs.}%
    \label{fig:vary_alloc_lr_cr_bias_sd}
\end{figure}

In Section \ref{sec:optimizing_exp_type}, we observe that the choice of experiment type ($\CR$ or $\LR$) does not introduce a meaningful bias-variance tradeoff; however, here show that the choice of treatment allocation does.  In particular, for both $\CR$ and $\LR$, we previously found (in a homogeneous market) that the variance minimizing allocation lies near $0.5$, but that more extreme allocations may help to minimize bias (see Figure \ref{fig:vary_alloc_lr_cr_bias_sd}). 
For a platform aiming to minimize $\MSE$, the question of whether to run a more extreme allocation to treatment and control (e.g. 75 percent in treatment and 25 percent in control) instead of a 50-50 allocation will depend on the magnitudes of the increase in standard deviation and the decrease in bias when moving to the more extreme allocation. 

\begin{figure}
    \centering
    \includegraphics[scale=0.5]{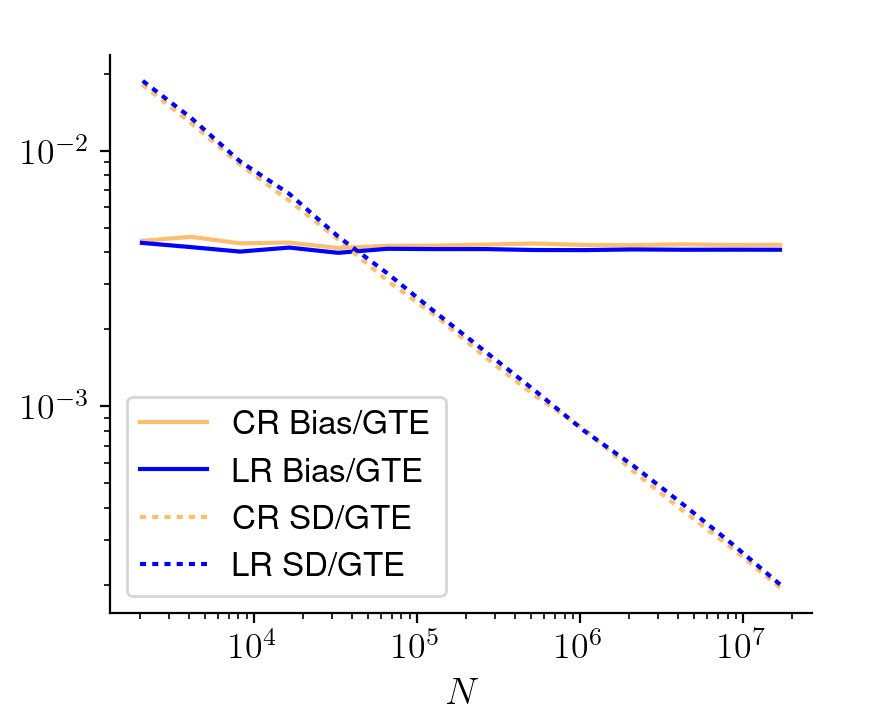}
    \caption{ The bias and SD, normalized by the $\GTE$, in balanced markets of growing sizes ranging from $2^{11}\approx 2$ thousand to $2^{24}\approx 17$ million, where both listings and customers are homogeneous. The consideration probabilities $\conprob$ and $\tildeconprob$ is such chosen that $Q_\GC=20\%$ and $Q_\GT=22\%$, namely with $\conrate=N\conprob\approx 0.254$ and $\tildeconrate=N\tildeconprob\approx 0.286$.}
    \label{fig:vary_n_bias_sd}
\end{figure}

The variance and standard deviation in the estimators is driven by the size of the market. Figure \ref{fig:vary_n_bias_sd} fixes a 50-50 allocation and shows how, as the size of the market (parameterized by $N$) increases, the standard deviation decreases while the bias remains relatively stable. In a regime where $N$ is small enough such that the standard deviation is larger than the bias even at a 50-50 split, then a platform should optimize for variance. Otherwise, the platform may wish to tradeoff some increase in variance for a decrease in bias. Figures \ref{fig:vary_alloc_small_market} - \ref{fig:vary_alloc_large_market} show how the $\MSE$ minimizing allocation lies between the variance-minimizing allocation and the bias-minimizing allocation, and is closer to the variance-minimizing allocation in a small market and closer to the bias-optimizing allocation in a large market.

\begin{figure}
    \centering
    \includegraphics[scale=0.45]{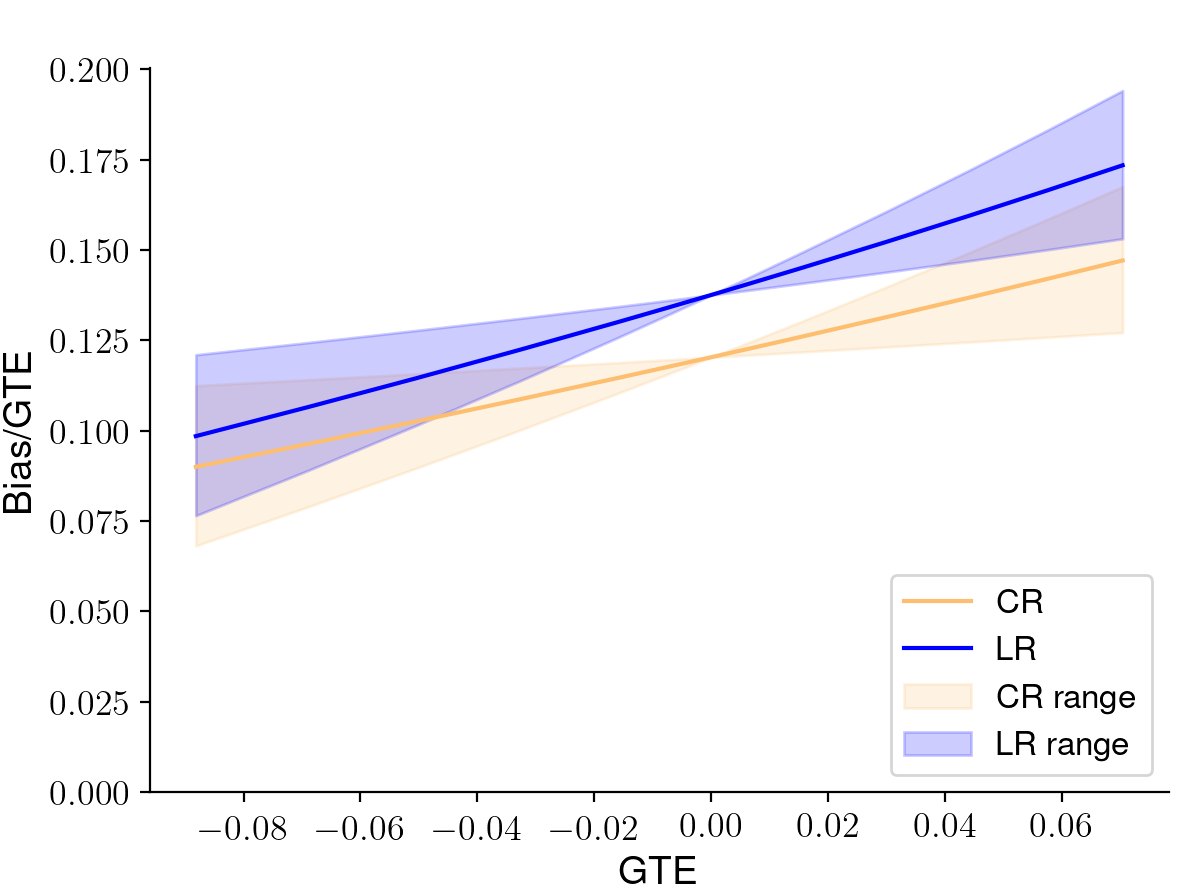}
    \caption{Numerics showing the dependence of asymptotic bias (normalized by $\GTE$) of the $\CR$ and $\LR$ estimators on the size of the treatment effect. Market is homogeneous with $\lambda=1$ in the mean field limit, with consideration probability calibrated so that $20\%$ of listings are booked under global control ($\conrate\approx 0.253$). Intervention is multiplicative with parameter $\alpha$ (i.e., $\tildeconrate = \alpha \conrate$) and $\GTE$ is varied by changing $\alpha$. Solid curves indicate the relative bias in the corresponding experiment with $50\%$ treatment. Shaded bands indicate the range of values that can be achieved through varying the allocation parameters between $a_C,
    \,a_L\in(0,1)$. 
    }
    \label{fig:lrcr_bias_vary_treatment}
\end{figure}

Of course, a platform may not know upfront the size of the bias in a given market experiment. To this end, we identify three factors that affect the bias:
\begin{enumerate}[label=\alph*)]
    \item the type of experiment run;
    \item relative demand in the market; and 
    \item the size of the treatment effect.
\end{enumerate}
Figure \ref{fig:vary_lambda_bias_sd_mse} shows the achievable $\LR$ and $\CR$ bias, standard deviation, and allocation, for treatment allocation in the range $[0.1, 0.9]$. The $\LR$ estimator is more sensitive to changes in the treatment allocation than the $\CR$ estimator.
When running an $\LR$ experiment, the bias differential between the best and worst allocation can be significant across a large range of relative demand $\lambda$. On the other hand, when running a $\CR$ experiment, the difference in bias may be relatively small for $\lambda>1$, and so a 50-50 split may be appropriate. The $\CR$ bias is somewhat more sensitive to $a_C$ for $\lambda<1$,
and so, depending on the variance, a platform may want to deviate from the 50-50 split.  Regarding the size of the treatment effect, Figure \ref{fig:lrcr_bias_vary_treatment} shows that the bias differential increases with the multiplicative lift $\alpha$ in consideration probabilities, thus also increases with the $\GTE$.

The combination of contextual knowledge and modeling can be useful in estimating the size of the bias. Many platforms may have some prior knowledge on the reasonable range for $\GTE$ and can use these bounds, along with a model calibrated to the appropriate market size and relative demand level, to estimate the size of the resulting bias. Additionally, platforms may also have estimates of bias obtained by running clustered experiments \cite{Saveski17}\cite{holtz2020reducing}. Both types of information can inform whether a platform should adjust their treatment allocation to reduce bias.

\subsection{Choice of experiment type vs. choice of treatment allocation}

While both the experiment type and treatment allocation can impact the bias in the resulting estimators, Figure \ref{fig:vary_lambda_bias_sd_mse} as well as Figures \ref{fig:vary_treat_less_demand}-\ref{fig:vary_treat_het_listings} show that, depending on the relative demand in the market, the choice of experiment type may be more important in minimizing bias. For example, in the regime where $\lambda$ is small, the estimate from a $\CR$ experiment with suboptimal allocation still has much smaller bias than the estimate from an $\LR$ experiment with optimal allocation. Likewise it is more important to run an $\LR$ experiment when $\lambda$ is large than it is to optimize the allocation in a $\CR$ experiment. However, it may be the case that many platforms exist in a scenario where supply and demand are more balanced; here the choice of allocation becomes an important lever for reducing bias. 

In the aforementioned figures, we see that in many cases the variance of the $\LR$ and $\CR$ estimators are similar (though this depends on both the relative demand and the heterogeneity on the platform). In these cases, the allocation is important for minimizing variance.

\section{Discussion and Future Work}
\label{sec:discussion}

Our findings have implications for experiment design in practice beyond solely minimizing bias and variance As one example, an important factor for platforms is the risk involved in an experiment. If the platform believes that the intervention has a chance of harming a metric of interest, then it might choose to allocate fewer individuals to the treatment group to start out.  Depending on the performance on this initial set, the platform either increases the treatment allocation (if the intervention seems promising) or stops the experiment (if key metrics are harmed). This type of experiment is often referred to as a ``ramp-up'' experiment \cite{xu2018sqr}. Though the choices made in these ramp-up experiments are generally made independently of the concerns we study about interference and bias, our work can be used to show that these decisions may happen to be optimal for reducing bias.

We illustrate these implications in a homogeneous market with one listing type and one customer type.\footnote{We conjecture that similar findings hold in heterogeneous markets, as long as the treatment effect is not too heterogeneous across types.} Our results show that the bias of $\CR$ and $\LR$ estimators are monotonic in the treatment proportion. For a $\CR$ experiment with probability of treatment $a_C$, Theorem \ref{Thm_opt_bias_ac} shows that if the $\GTE$ is positive, then the bias decreases as $a_C$ increases and if the $\GTE$ is negative, bias decreases as $a_C$ increases. For an $\LR$ experiment with probability of treatment $a_L$, Theorem \ref{thm:opt_bias_aL} shows that the bias is monotonic in $a_L$, although the direction of change depends on both the $\GTE$ and whether the relative demand $\lambda$ is less than some cutoff $\lambda^*$. However, we find that in many scenarios, as long as the number of customers is not too much larger than the number of listings, 
$\lambda$ is less than the cutoff and the $\LR$ bias behaves similarly to the $\CR$ bias; that is, the bias of $\LR$ is decreasing in $a_L$ when $\GTE>0$ and increasing in $a_L$ when $\GTE<0$.  
We first consider this case of a reasonably small $\lambda$ less than the cutoff, where both $\CR$ and $\LR$ bias are decreasing in the treatment proportion.

When a platform deems an intervention as potentially ``risky" and runs a ramp-up experiment, it is implicitly stating that there is a non-negligible chance that the $\GTE<0$. Note that when $\GTE<0$, a smaller treatment allocation actually reduces bias and helps the platform more accurately ascertain the drop in bookings. In other words, the initial allocation is beneficial for precisely the scenario that the platform is worried about.  On the other hand, suppose that in the ramp-up experiment, we actually have $\GTE>0$. This means that the initial allocation will lead us to overestimate the benefit of the intervention. However, in this case, the bias is not necessarily detrimental: upon seeing positive changes in bookings, the platform will increase the allocation to treatment and thereby decrease the bias in the estimator.  Thus when an intervention is risky and the platform chooses a ramp up experiment, the preceding discussions suggests that the adaptive sequential increase in allocation is beneficial {\em both} for measuring a negative effect if the $\GTE<0$ {\em and} for measuring a positive effect if $\GTE>0$. 

However, the cautionary note is that if the platform is running an $\LR$ experiment and $\lambda$ is sufficiently large, then the sequential increase will have the opposite effect. The initial allocation will overestimate the effect of a detrimental intervention if $\GTE<0$. If $\GTE>0$, then the final allocations with an increased proportion of listings randomized to treatment will lead to a greater overestimate of the $\GTE$. 

Our work introduces a model through which many practical designs and considerations can be studied. The model captures marketplace competition effects and interference, and yet is simple enough that the bias and variance can be fully characterized. Future directions of study include a richer class of estimators, beyond the standard $\CR$ and $\LR$ difference-in-means estimators studied here. Additionally, the model can be used to study other experimental designs that,  for example, randomize at both sides of the market simultaneously  \cite{bajari2019double, johari2021experimental} or randomize on clusters of individuals \cite{chamandy16, holtz2020reducing}. Finally, we hope that the tractability of our model can shed light on the joint optimization of design and analysis in the context of marketplace experiments.

\section*{Acknowledgement}
	This work was supported by the National Science Foundation under grants 1931696 and 1839229 and the Dantzig-Lieberman Operations Research Fellowship.

\bibliographystyle{abbrv}
\bibliography{bibliography}

\begin{thebibliography}{10}

\bibitem{Athey18}
S.~Athey, D.~Eckles, and G.~W. Imbens.
\newblock Exact p-values for network interference.
\newblock {\em Journal of the American Statistical Association},
  113(521):230--240, 2018.

\bibitem{bajari2019double}
P.~Bajari, B.~Burdick, G.~Imbens, J.~McQueen, T.~Richardson, and I.~Rosen.
\newblock Multiple randomization designs for interference.
\newblock 2019.

\bibitem{Basse19}
G.~W. Basse, A.~Feller, and P.~Toulis.
\newblock {Randomization tests of causal effects under interference}.
\newblock {\em Biometrika}, 106(2):487--494, 02 2019.

\bibitem{Basse16}
G.~W. Basse, H.~A. Soufiani, and D.~Lambert.
\newblock Randomization and the pernicious effects of limited budgets on
  auction experiments.
\newblock In A.~Gretton and C.~C. Robert, editors, {\em Proceedings of the 19th
  International Conference on Artificial Intelligence and Statistics, {AISTATS}
  2016, Cadiz, Spain, May 9-11, 2016}, volume~51 of {\em {JMLR} Workshop and
  Conference Proceedings}, pages 1412--1420. JMLR.org, 2016.

\bibitem{Blake14}
T.~Blake and D.~Coey.
\newblock Why marketplace experimentation is harder than it seems: The role of
  test-control interference.
\newblock In {\em Proceedings of the Fifteenth ACM Conference on Economics and
  Computation}, EC ’14, page 567–582, New York, NY, USA, 2014. Association
  for Computing Machinery.

\bibitem{bojinov2021design}
I.~Bojinov, D.~Simchi-Levi, and J.~Zhao.
\newblock Design and analysis of switchback experiments, 2021.

\bibitem{burdett2001}
K.~Burdett, S.~Shi, and R.~Wright.
\newblock Pricing and matching with frictions.
\newblock {\em Journal of Political Economy}, 109(5):1060--1085, 2001.

\bibitem{chamandy16}
N.~Chamandy.
\newblock Experimentation in a ridesharing marketplace, Dec 2016.

\bibitem{clavis2015definitive}
{clavisinsight.com}.
\newblock A definitive guide to optimizing brand performance on amazon.

\bibitem{Fradkin2015SearchFA}
A.~Fradkin.
\newblock Search frictions and the design of online marketplaces.
\newblock In {\em AMMA 2015}, 2015.

\bibitem{glynn2020adaptive}
P.~Glynn, R.~Johari, and M.~Rasouli.
\newblock Adaptive experimental design with temporal interference: A maximum
  likelihood approach, 2020.

\bibitem{hathuc2020counterfactual}
V.~Ha-Thuc, A.~Dutta, R.~Mao, M.~Wood, and Y.~Liu.
\newblock A counterfactual framework for seller-side a/b testing on
  marketplaces.
\newblock In {\em Proceedings of the 43rd International ACM SIGIR Conference on
  Research and Development in Information Retrieval}, SIGIR ’20, page
  2288–2296, New York, NY, USA, 2020. Association for Computing Machinery.

\bibitem{holtz2020limiting}
D.~Holtz and S.~Aral.
\newblock Limiting bias from test-control interference in online marketplace
  experiments, 2020.

\bibitem{holtz2020reducing}
D.~Holtz, R.~Lobel, I.~Liskovich, and S.~Aral.
\newblock Reducing interference bias in online marketplace pricing experiments,
  2020.

\bibitem{ImbensRubin15}
G.~W. Imbens and D.~B. Rubin.
\newblock {\em Causal Inference for Statistics, Social, and Biomedical
  Sciences: An Introduction}.
\newblock Cambridge University Press, USA, 2015.

\bibitem{johari2021experimental}
R.~Johari, H.~Li, I.~Liskovich, and G.~Weintraub.
\newblock Experimental design in two-sided platforms: An analysis of bias,
  2021.

\bibitem{kohavi2020trustworthy}
R.~Kohavi, D.~Tang, and Y.~Xu.
\newblock {\em Trustworthy Online Controlled Experiments: A Practical Guide to
  A/B Testing}.
\newblock Cambridge University Press, 2020.

\bibitem{Manski13}
C.~F. Manski.
\newblock Identification of treatment response with social interactions.
\newblock {\em The Econometrics Journal}, 16(1):S1--S23, 2013.

\bibitem{pouget2019variance}
J.~Pouget-Abadie, K.~Aydin, W.~Schudy, K.~Brodersen, and V.~Mirrokni.
\newblock Variance reduction in bipartite experiments through correlation
  clustering.
\newblock In {\em Advances in Neural Information Processing Systems}, pages
  13288--13298, 2019.

\bibitem{Saveski17}
M.~Saveski, J.~Pouget-Abadie, G.~Saint-Jacques, W.~Duan, S.~Ghosh, Y.~Xu, and
  E.~M. Airoldi.
\newblock Detecting network effects: Randomizing over randomized experiments.
\newblock In {\em Proceedings of the 23rd ACM SIGKDD International Conference
  on Knowledge Discovery and Data Mining}, KDD ’17, page 1027–1035, New
  York, NY, USA, 2017. Association for Computing Machinery.

\bibitem{sneider19}
C.~Sneider, Y.~Tang, and Y.~Tang.
\newblock Experiment rigor for switchback experiment analysis, Feb 2019.

\bibitem{Ugander13}
J.~Ugander, B.~Karrer, L.~Backstrom, and J.~Kleinberg.
\newblock Graph cluster randomization: Network exposure to multiple universes.
\newblock In {\em Proceedings of the 19th ACM SIGKDD International Conference
  on Knowledge Discovery and Data Mining}, KDD ’13, page 329–337, New York,
  NY, USA, 2013. Association for Computing Machinery.

\bibitem{Wager19}
S.~Wager and K.~Xu.
\newblock Experimenting in equilibrium.
\newblock 2019.

\bibitem{xu2018sqr}
Y.~Xu, W.~Duan, and S.~Huang.
\newblock Sqr: Balancing speed, quality and risk in online experiments, 2018.

\bibitem{zigler2018bipartite}
C.~M. Zigler and G.~Papadogeorgou.
\newblock Bipartite causal inference with interference.
\newblock {\em arXiv preprint arXiv:1807.08660}, 2018.

\end{thebibliography}

\appendix

\section{Proofs}
\label{Appendix:proofs}

\lemmaLimAppBookProb*
\begin{proof}
This result is essentially a consequence of the well-known fact that a sequence of Poisson binomial distributions (with appropriate parameters) converges to a Poisson distribution in total variance distance. In our lemma, since our primary focus is the expected number of bookings, we will present the following argument targeting the in expectation limit.

We will first prove the convergence \eqref{Eqn_limit_app_prob:Prop_limit_app_and_book_prob} for the application step, namely
\begin{equation}\label{Eqn_lim_app_prob_restate}
    N \appprob(\gamma,\theta) \to \conrate(\gamma,\theta) \frac{1-\exp(-\v{\conratematrix}(\gamma,\cdot)\cdot \bm{\tau})}{\v{\conratematrix}(\gamma,\cdot)\cdot \bm{\tau}}
\end{equation}
based on the convergence $N \conprob(\gamma,\theta) \to \conrate(\gamma,\theta)$ for all $\ginG$ and $\tinT$.
The convergence \eqref{Eqn_limit_book_prob:Prop_limit_app_and_book_prob} for the acceptance step can in turn be shown analogously from the convergence of $N\appprob(\gamma,\theta)$.

To show \eqref{Eqn_lim_app_prob_restate}, we first notice that by linearity of expectation
\[
    \mathbb{P}(\text{customer } c \text{ makes an application}) = \sum_{l=1}^n \appprob(\gamma_c,\theta_l) = \sum_{\tinT} t(\theta) \appprob(\gamma_c,\theta).
\]
Our model postulates that customer $c$ will make an application if and only if their consideration set $S_c$ is non-empty, or equivalently $\sum_{\tinT} C_{c,\theta}>0$, where random variable $C_{c,\theta}$ denotes the number of type-$\theta$ listings included in $\mathcal{S}_c$ and hence $C_{c,\theta}\sim\Binom(t(\theta), \conprob(\gamma_c,\theta))$. Thus,
\begin{equation}\label{Eqn_conv_sum_p_app:Prop_limit_app_and_book_prob}
    \sum_{\tinT} t(\theta) \appprob(\gamma_c,\theta) = \prod_{\tinT} \mathbb{P}(C_{c,\theta}=0) = \prod_{\tinT} (1-\conprob(\gamma_c,\theta))^{t(\theta)} \to 1-\exp(-\v{\conratematrix}(\gamma,\cdot)\cdot \bm{\tau})
\end{equation}
as $N\to\infty$, since $N\conprob(\gamma,\theta)\to \conrate(\gamma,\theta)$ and $t(\theta)/N \to \tau(\theta)$.

Now it suffices for us to show the convergence of the ratios between $\appprob(\gamma_i,\theta)$ and $\appprob(\gamma_i,\theta')$ for any two listing types.
Consider the following alternative criterion for each customer to decide which listing to apply to. Fix an arbitrary customer $c$ of type $\gamma$, and for each listing $l$ generate independently a random score $X_{cl} \sim \Unif(0, \conprob(\gamma,\theta_l)^{-1})$. A listing $l$ is considered available to $c$ if $X_{cl} < 1$, and among all the available listings (if any), the customer $c$ will apply to the one with minimal score.\footnote{The event of multiple listings having equal score has probability zero and hence tie breaking does not affect our analysis.} It is straightforward to verify that this procedure yields a result consistent with the original framework. Further, this now allows us to express $\appprob(\gamma,\theta)$ conveniently in an integral form, namely
\begin{align}
    \appprob(\gamma,\theta) &= \mathbb{P}(\text{customer } c \text{ applies to listing } l \text{ of type } \theta) \nonumber \\
        &= \mathbb{P}(X_{cl}=\min_{l'} X_{cl'}\text{ and }X_{cl}<1) \nonumber \\
        &= \int_0^1 \mathbb{P}(X_{cl'}>x \;\forall\, l'\neq l \;|\;X_{cl}=x)f_{X_{cl}}(x)dx \nonumber \\
        &= \conprob(\gamma,\theta) \int_0^1 (1-x\conprob(\gamma,\theta))^{t(\theta)-1}\prod_{\substack{\theta'\in\Theta\\\theta'\neq\theta}} (1-x\conprob(\gamma,\theta'))^{t(\theta')-1} dx. \label{Eqn_p_app_exact_integral_form:Prop_limit_app_and_book_prob}
\end{align}
For two listing types $\theta_1,\theta_2\in\Theta$, the ratio of their application probabilities (to a single listing of the corresponding type) is
\[
    \frac{\appprob(\gamma,\theta_1)}{\appprob(\gamma,\theta_2)} = \frac{\conprob(\gamma,\theta_1)}{\conprob(\gamma,\theta_2)} \cdot \frac{\int_0^1 (1-x\conprob(\gamma,\theta_1))^{t(\theta_1)-1}\prod_{\substack{\theta'\in\Theta\\\theta'\neq\theta_1}} (1-x\conprob(\gamma,\theta'))^{t(\theta')-1} dx}{\int_0^1 (1-x\conprob(\gamma,\theta_2))^{t(\theta_2)-1}\prod_{\substack{\theta'\in\Theta\\\theta'\neq\theta_2}} (1-x\conprob(\gamma,\theta'))^{t(\theta')-1} dx}.
\]
Focusing on the ratio of the two integrals, we notice that the ratio of the integrands is simply
\begin{equation}\label{Eqn_ratio_p_app_theta1_theta2:Prop_limit_app_and_book_prob}
    \frac{1-x\conprob(\gamma,\theta_2)}{1-x\conprob(\gamma,\theta_1)}
\end{equation}
for $x\in (0,1)$. Recall that as $N\to\infty$, $N\conprob^{(N)}(\gamma,\theta)$ converges to some constant for each pair of $\gamma$ and $\theta$. Thus, $\conprob(\gamma,\theta_1)$ and $\conprob(\gamma,\theta_2)$ both approaches 0 as $N\to\infty$. As a result, the integrand ratio \eqref{Eqn_ratio_p_app_theta1_theta2:Prop_limit_app_and_book_prob} converges to 1 uniformly on the domain, and hence the ratio between the integrals also converges to 1. In other words,
\begin{equation}\label{Eqn_conv_ratio_p_app:Prop_limit_app_and_book_prob}
    \frac{\appprob(\gamma,\theta_1)}{\appprob(\gamma,\theta_2)} \to \frac{\conrate(\gamma,\theta_1)}{\conrate(\gamma,\theta_2)}.
\end{equation}
The fact that \eqref{Eqn_conv_ratio_p_app:Prop_limit_app_and_book_prob} holds for any pair of listing types combined with \eqref{Eqn_conv_sum_p_app:Prop_limit_app_and_book_prob} implies that
\[
    t(\theta) \appprob(\gamma,\theta) \to \frac{\tau(\theta)\conrate(\gamma,\theta)}{\v{\conratematrix}(\gamma,\cdot)\cdot \bm{\tau}} \left(1-\exp(-\v{\conratematrix}(\gamma,\cdot)\cdot \bm{\tau})\right),
\]
which immediately implies that $N\appprob(\gamma,\theta) \to \apprate(\gamma,\theta)$ as we claimed.

For the acceptance step, the limit
\[
    \bookprob(\gamma,\theta) \to \apprate(\gamma,\theta) \frac{1-\exp(-\lambda\bsigma\cdot\v{\appratematrix}(\cdot,\theta))}{\bsigma\cdot\v{\appratematrix}(\cdot,\theta)}
\]
as $N\to\infty$ can be established in the exact same way as above, this time in turn using the fact that $N\appprob(\gamma,\theta) \to \apprate(\gamma,\theta)$ that we proved just now.
\end{proof}

\corLimBookingRate*
\begin{proof}
By linearity of expectation
\[
    \E [Q] = \sum_{i=1}^M\sum_{j=1}^N \bookprob(\gamma_i,\theta_j) = \sum_{\ginG,\tinT} s(\gamma) t(\theta) \bookprob(\gamma,\theta).
\]
By Proposition~\ref{Prop_limit_app_and_book_prob}, we have $N\bookprob(\theta,\gamma)\to \bookrate(\theta,\gamma)$. Recall that we assume $N^{-1}t(\theta)\to\tau(\theta)$ and $M^{-1}s(\gamma)\to\sigma(\gamma)$ for each $\ginG$ and $\tinT$, and $M/N\to\lambda$ as $N\to\infty$. Thus,
\begin{align*}
    N^{-1} \E \left[Q\right] &= N^{-1} \sum_{\ginG,\tinT} s(\gamma) t(\theta) \bookprob(\gamma,\theta) \\
    &= \frac{M}{N} \sum_{\ginG,\tinT} \frac{s(\gamma)}{M} \frac{t(\theta)}{N} \cdot N \bookprob(\gamma,\theta) \\
    &\to \lambda \sum_{\ginG,\tinT} \sigma(\gamma) \tau(\theta) \bookrate(\gamma,\theta).
\end{align*} \end{proof}

\begin{restatable}[Limits of expectations of $\CR$ and $\LR$ estimators]{proposition}{propLimEstimatorExpectation}
\label{Prop_limit_estimator_expectation}
Along the sequence as $N\to\infty$, the naive estimators for $\CR$ and $\LR$ converge in expectation
to the following limits
\begin{align}\label{Eqn_limit_cr_est_expectation:Prop_limit_estimator_expectation}
    \E \gtehatcr^{(N)}(a_C) &\to \lambda \sum_{\tinT} \tau(\theta) \bsigma \cdot \left(\v{\tildeappratematrix}(\cdot,\theta) - \v{\appratematrix}(\cdot,\theta)\right) F\left(\lambda \bsigma \cdot \big(a_C \v{\tildeappratematrix}(\cdot,\theta) + (1-a_C) \v{\appratematrix}(\cdot,\theta)\big)\right) \nonumber\\
    &= \lambda \sum_{\tinT} \tau(\theta) \left(\sum_{\ginG}\sigma(\gamma) \left(\tildeconrate(\gamma,\theta) F(\v{\tau}\cdot\v{\tildeconratematrix}(\gamma,\cdot)) - \conrate(\gamma,\theta) F(\v{\tau}\cdot\v{\conratematrix}(\gamma,\cdot))\right)\right) \nonumber\\
    &\qquad F\left(\lambda \sum_{\ginG}\sigma(\gamma) \left(a_C \tildeconrate(\gamma,\theta) F(\v{\tau}\cdot\v{\tildeconratematrix}(\gamma,\cdot)) + (1-a_C) \conrate(\gamma,\theta) F(\v{\tau}\cdot\v{\conratematrix}(\gamma,\cdot))\right)\right),
\end{align}

\begin{multline}\label{Eqn_limit_lr_est_expectation:Prop_limit_estimator_expectation}
    \E \gtehatlr^{(N)}(a_L) \to \sum_{\tinT} \tau(\theta) \left(
    \exp\bigg(
    -\lambda \sum_{\ginG}\sigma(\gamma) \conrate(\gamma,\theta)
    F\left(\v{\tau}\cdot\big(a_L\v{\tildeconratematrix}(\gamma,\cdot) + (1-a_L)\v{\conratematrix}(\gamma,\cdot)\big)\right)
    \bigg)\right.
    \\
    \left. -
    \exp\bigg(
    -\lambda \sum_{\ginG}\sigma(\gamma) \tildeconrate(\gamma,\theta)
    F\left(\v{\tau}\cdot\big(a_L\v{\tildeconratematrix}(\gamma,\cdot) + (1-a_L)\v{\conratematrix}(\gamma,\cdot)\big)\right)
    \bigg)
    \right).
\end{multline}
\end{restatable}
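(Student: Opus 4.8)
The plan is to recognize that each design is an instance of the heterogeneous market of Lemma~\ref{Prop_limit_app_and_book_prob} after enlarging the relevant type space, and then to read off the limiting application and booking rates and assemble the estimators by linearity of expectation. For the $\CR$ design I would work in the market whose customer type space is $\Gamma\times\{0,1\}$, where a type-$(\gamma,1)$ (treatment) customer uses consideration rate $\tildeconrate(\gamma,\cdot)$ and a type-$(\gamma,0)$ (control) customer uses $\conrate(\gamma,\cdot)$, with the listing side unchanged; under the randomization the limiting proportions are $\sigma(\gamma,1)=a_C\sigma(\gamma)$ and $\sigma(\gamma,0)=(1-a_C)\sigma(\gamma)$. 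For the $\LR$ design I would symmetrically enlarge the listing type space to $\Theta\times\{0,1\}$, with treated listings $(\theta,1)$ considered at rate $\tildeconrate(\cdot,\theta)$ and control listings $(\theta,0)$ at rate $\conrate(\cdot,\theta)$, and proportions $\tau(\theta,1)=a_L\tau(\theta)$, $\tau(\theta,0)=(1-a_L)\tau(\theta)$.

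For $\CR$ the key observation to exploit is that a customer's application rate depends only on its own consideration rates and on the (unchanged) listing distribution $\v{\tau}$: by \eqref{Eqn_redef_app_rate_using_F} a treatment customer has application rate $\tildeapprate(\gamma,\theta)=\tildeconrate(\gamma,\theta)F(\v{\tau}\cdot\v{\tildeconratematrix}(\gamma,\cdot))$ and a control customer $\apprate(\gamma,\theta)=\conrate(\gamma,\theta)F(\v{\tau}\cdot\v{\conratematrix}(\gamma,\cdot))$, whereas the acceptance factor in \eqref{Eqn_redef_book_rate_using_F} depends only on the listing type and on the \emph{aggregate} application intensity $\lambda\,\bsigma\cdot(a_C\v{\tildeappratematrix}(\cdot,\theta)+(1-a_C)\v{\appratematrix}(\cdot,\theta))$ at a type-$\theta$ listing, and is therefore identical for treatment and control customers. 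I would then express $\E[Q_\CR^{(N)}(1|a_C)]$ and $\E[Q_\CR^{(N)}(0|a_C)]$ via linearity of expectation as sums of $\bookprob$ over the appropriate customers and all listings, normalize by $\tfrac{M}{N M_1}$ and $\tfrac{M}{N M_0}$, and pass to the limit using Lemma~\ref{Prop_limit_app_and_book_prob} together with $M/N\to\lambda$. Because the shared acceptance factor survives both terms, taking the difference collapses the treatment and control booking rates into the single product $\lambda\sum_\theta\tau(\theta)\,\bsigma\cdot(\v{\tildeappratematrix}(\cdot,\theta)-\v{\appratematrix}(\cdot,\theta))\,F(\cdots)$, which is exactly \eqref{Eqn_limit_cr_est_expectation:Prop_limit_estimator_expectation}; substituting the definitions of $\apprate,\tildeapprate$ yields the second expression.

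For $\LR$ the roles reverse: a customer's consideration set now mixes treated and control listings, so the application rate carries the \emph{blended} argument, and \eqref{Eqn_redef_app_rate_using_F} in the enlarged market gives, for a treated listing, $\tildeconrate(\gamma,\theta)F\big(\v{\tau}\cdot(a_L\v{\tildeconratematrix}(\gamma,\cdot)+(1-a_L)\v{\conratematrix}(\gamma,\cdot))\big)$, and the analogue with $\conrate(\gamma,\theta)$ for a control listing. The relevant quantity for $\LR$ is the probability that a given listing is booked, i.e.\ receives at least one application; since the applications arriving at a fixed type-$(\theta,1)$ listing are asymptotically Poisson with rate $\lambda\,\bsigma\cdot\v{\tildeappratematrix}(\cdot,\theta)$, this probability converges to $1-\exp(-\lambda\,\bsigma\cdot\v{\tildeappratematrix}(\cdot,\theta))$ (equivalently, summing $\bookrate$ over customers and using $xF(x)=1-e^{-x}$, as in the proof of Corollary~\ref{Corollary_limit_booking_rate}, produces the same $1-\exp(-\cdot)$ form). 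Dividing $\E[Q_\LR^{(N)}(1|a_L)]$ by $N_1$ and $\E[Q_\LR^{(N)}(0|a_L)]$ by $N_0$, the factors $t^{(N)}(\theta,z)/N_z\to\tau(\theta)$, the constant $1$'s cancel in the difference, and the two exponential terms assemble into \eqref{Eqn_limit_lr_est_expectation:Prop_limit_estimator_expectation}.

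The main obstacle is not the algebra but justifying the use of Lemma~\ref{Prop_limit_app_and_book_prob}, which is stated for \emph{deterministic} type counts, in the enlarged market whose per-type counts $s^{(N)}(\gamma,z)$ (resp.\ $t^{(N)}(\theta,z)$) are themselves \emph{random}, being produced by the completely randomized assignment. I would handle this by conditioning on the assignment vector: given the assignment the counts are deterministic, and by the law of large numbers for the randomized assignment the realized proportions converge almost surely to $a_C\sigma(\gamma),(1-a_C)\sigma(\gamma)$ (resp.\ the listing analogues), so the conditional limits are exactly those computed above. Since the limiting booking rates are continuous functions of these proportions and all booking probabilities are uniformly bounded by $1$, bounded convergence lets me interchange the outer expectation over assignments with the $N\to\infty$ limit, giving the stated unconditional limits. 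A secondary point to track carefully is the bookkeeping of normalizations---the extra $M/N$ in $\gtehatcr$ versus its absence in $\gtehatlr$---which is precisely what makes the $\lambda$ prefactor appear in the $\CR$ limit while the $\LR$ limit is expressed directly through per-listing booking probabilities.
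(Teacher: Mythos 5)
Your proposal is correct and follows essentially the same route as the paper's proof: both apply Lemma~\ref{Prop_limit_app_and_book_prob} to the enlarged type spaces $\Gamma\times\{0,1\}$ and $\Theta\times\{0,1\}$, assemble the estimators by linearity of expectation, and handle the randomness of the per-type counts by conditioning on the assignment, invoking the strong law of large numbers for the realized proportions, and using boundedness to interchange the limit with the outer expectation. The observations you highlight (the acceptance factor being common to treatment and control customers in $\CR$, and the blended consideration argument in $\LR$) are exactly the structural facts the paper's computation exploits.
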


\begin{proof}
These expressions are a direct consequence of Proposition~\ref{Prop_limit_app_and_book_prob} when we consider types of customer and listing in the product spaces $\Gamma\times\{0,1\}$ and $\Theta\times\{0,1\}$.

Let us first consider the treatment booking rate in $\LR$ experiments. In this completely randomized design, we know that the fractional size $N^{-1}t(\theta,1)$ of the treated portion of type-$\theta$ listings converges to $a_L\tau(\theta)$ for all listing types with probability 1. Here the underlying probability space is all possible realizations of how treatment is assigned to the listings of different types in the sequence of markets indexed by $N$. This now allows us to condition on each realization of treatment assignment, and apply Proposition~\ref{Prop_limit_app_and_book_prob} using $\Theta\times\{0,1\}$ as the listing type space. We then have the following convergence for the probability for a customer of type $\gamma$ to book a treated type-$\theta$ listing
\begin{equation}\label{Eqn_p_book_treatment:Prop_limit_estimator_expectation}
    N\bookprob(\gamma,(\theta,1)) \to \apprate(\gamma,(\theta,1)) \frac{1-\exp(-\lambda\bsigma\cdot\v{\appratematrix}(\cdot,(\theta,1)))}{\bsigma\cdot\v{\appratematrix}(\cdot,(\theta,1))},
\end{equation}
where $\apprate(\gamma,(\theta,1))$ is the application rate to treated type-$\theta$ listings defined in analogy to \eqref{Eqn_redef_app_rate_using_F} as
\[
    \apprate(\gamma,(\theta,1)) = \tildeconrate(\gamma,\theta) F\left(\v{\tau}\cdot\left((1-a_C)\v{\conratematrix}(\gamma,\cdot) + a_C\v{\tildeconratematrix}(\gamma,\cdot)\right)\right).
\]
Substituting $\apprate(\gamma,(\theta,1))$ into \eqref{Eqn_p_book_treatment:Prop_limit_estimator_expectation} and summing over all customers and treated listings, we obtain the following through basic arithmetic operations
\begin{align}
    \E\left[Q_{\LR}^{(N)}(1|a_L)\middle|t(\theta,1),\,\tinT\right]
    &= \frac{M}{N} \sum_{\ginG,\tinT} \frac{t(\theta,1)}{N}\frac{s(\gamma)}{M}\cdot N \bookprob(\gamma,(\theta,1)) \nonumber\\
    &\to \lambda\sum_{\ginG,\tinT} a_L\tau(\theta)\sigma(\gamma)
    \apprate(\gamma,(\theta,1)) \frac{1-\exp(-\lambda\bsigma\cdot\v{\appratematrix}(\cdot,(\theta,1)))}{\bsigma\cdot\v{\appratematrix}(\cdot,(\theta,1))} \nonumber\\
    &= a_L \sum_{\tinT} \tau(\theta)\left(1-\exp(-\lambda\bsigma\cdot\v{\appratematrix}(\cdot,(\theta,1)))\right)\nonumber\\
    &= a_L - a_L\sum_{\tinT} \tau(\theta) \nonumber\\
    &\qquad \exp\left(
    -\lambda \sum_{\ginG}\sigma(\gamma) \tildeconrate(\gamma,\theta) F\left(\v{\tau}\cdot\left(a_L\v{\tildeconratematrix}(\gamma,\cdot) + (1-a_L)\v{\conratematrix}(\gamma,\cdot)\right)\right)
    \right). \label{Eqn_limit_lr_q1_est_expectation:Prop_limit_estimator_expectation}
\end{align}
Note that the expectation above is taken over the random realization of the consideration, application, and booking steps, conditioned on the sequence of treatment assignments. However, the treatment assignment satisfies $t(\theta,1)/N\to a_L\tau(\theta)$ for all $\theta\in\Theta$, which occurs with probability 1 by the strong law of large numbers. By the boundedness of $N^{-1}Q_{\LR}^{(N)}(1|a_L)$, we conclude that \eqref{Eqn_limit_lr_q1_est_expectation:Prop_limit_estimator_expectation} indeed holds when we include the random treatment assignments into our probability space.

Similarly, for the control listings in $\LR$ experiments, we have
\begin{multline}\label{Eqn_limit_lr_q0_est_expectation:Prop_limit_estimator_expectation}
    \frac{1}{(1-a_L)N}\E\left[Q_{\LR}^{(N)}(0|a_L)\right] \to 1 - \sum_{\tinT} \tau(\theta) \\
    \exp\left(
    -\lambda \sum_{\ginG}\sigma(\gamma) \conrate(\gamma,\theta) F\left(\v{\tau}\cdot\left(a_L\v{\tildeconratematrix}(\gamma,\cdot) + (1-a_L)\v{\conratematrix}(\gamma,\cdot)\right)\right)
    \right).
\end{multline}
Combining \eqref{Eqn_limit_lr_q1_est_expectation:Prop_limit_estimator_expectation} with \eqref{Eqn_limit_lr_q0_est_expectation:Prop_limit_estimator_expectation} gives our claimed limit of \eqref{Eqn_limit_lr_est_expectation:Prop_limit_estimator_expectation}.

For $\CR$ experiments, the proof of \eqref{Eqn_limit_cr_est_expectation:Prop_limit_estimator_expectation} goes analogously. Again, by Proposition~\ref{Prop_limit_app_and_book_prob}, we have
\begin{equation}\label{Eqn_limit_cr_q1_est_expectation:Prop_limit_estimator_expectation}
    \frac{1}{a_C N}Q_{\CR}^{(N)}(1|a_C) \to \lambda \sum_{\tinT} \tau(\theta) \bsigma \cdot \tildeappratematrix(\cdot,\theta) F\left(\lambda \bsigma \cdot \left(a_C \v{\tildeappratematrix}(\cdot,\theta) + (1-a_C) \v{\appratematrix}(\cdot,\theta)\right)\right),
\end{equation}
and
\begin{equation}\label{Eqn_limit_cr_q0_est_expectation:Prop_limit_estimator_expectation}
    \frac{1}{(1-a_C)N}Q_{\CR}^{(N)}(0|a_C) \to \lambda \sum_{\tinT} \tau(\theta) \bsigma \cdot \v{\appratematrix}(\cdot,\theta) F\left(\lambda \bsigma \cdot \left(a_C \v{\tildeappratematrix}(\cdot,\theta) + (1-a_C) \v{\appratematrix}(\cdot,\theta)\right)\right),
\end{equation}
where $\apprate$ and $\tildeapprate$ take the exact same form as defined in \eqref{Eqn_redef_app_rate_using_F}. Combining \eqref{Eqn_limit_cr_q1_est_expectation:Prop_limit_estimator_expectation} and \eqref{Eqn_limit_cr_q0_est_expectation:Prop_limit_estimator_expectation} gives the limit \eqref{Eqn_limit_cr_est_expectation:Prop_limit_estimator_expectation}.
\end{proof}

\lemmaUpwardBias*

\begin{proof}
By Corollary~\ref{Corollary_limit_booking_rate}, the limit booking rate under global control is
\begin{align}
    N^{-1}Q_{GC}^{(N)} \to \lambda \bm{\sigma}^T \bookratematrix \bm{\tau}
    &= \sum_{\tinT} \tau(\theta) \left(1-\exp(-\lambda\bsigma\cdot\v{\appratematrix}(\cdot,\theta))\right) \nonumber\\
    &= 1 - \sum_{\tinT} \tau(\theta) \exp\left(-\lambda \sum_{\ginG}\sigma(\gamma) \conrate(\gamma,\theta) F(\v{\tau}\cdot\v{\conratematrix}(\gamma,\cdot))\right) \nonumber
\end{align}
and the similar convergence applies to the limit booking rate under global treatment.
Compare this with the expression \eqref{Eqn_limit_lr_est_expectation:Prop_limit_estimator_expectation}, we notice that the difference is only in the argument inside $F(\cdot)$, where
\[
    \v{\tau}\cdot\v{\conratematrix}(\gamma,\cdot) \leq \v{\tau}\cdot\left(a_L\v{\tildeconratematrix}(\gamma,\cdot) + (1-a_L)\v{\conratematrix}(\gamma,\cdot)\right) \leq \v{\tau}\cdot\v{\tildeconratematrix}(\gamma,\cdot).
\]
Since $F$ is monotone decreasing and the function $x\mapsto 1-e^{-x}$ is monotone increasing, we find
\begin{equation}\label{Eqn_comp_qgc_qlr0:Lemma_upward_bias}
    \lim_{N\to\infty} N^{-1}Q_{GC}^{(N)} \geq \lim_{N\to\infty} N_0^{-1}Q_{\LR}^{(N)}(0|a_L),
\end{equation}
and similarly one can show
\begin{equation}\label{Eqn_comp_qgt_qlr1:Lemma_upward_bias}
    \lim_{N\to\infty} N^{-1}Q_{GT}^{(N)} \leq \lim_{N\to\infty} N_1^{-1}Q_{\LR}^{(N)}(1|a_L).
\end{equation}
Combining \eqref{Eqn_comp_qgc_qlr0:Lemma_upward_bias} and \eqref{Eqn_comp_qgt_qlr1:Lemma_upward_bias} gives the asymptotic positivity of $\LR$ bias.

To consider the limit expectation of the $\CR$ estimator, we may write the limit booking rate under global control as
\[
    \lambda \sum_{\tinT} \tau(\theta) \bsigma \cdot \v{\appratematrix}(\cdot,\theta) F\left(\lambda \bsigma \cdot \v{\appratematrix}(\cdot,\theta)\right).
\]
It is easy to verify from \eqref{Eqn_def_phi_app:Prop_limit_app_and_book_prob} and the multiplicative assumption that
\begin{align}
    \tildeapprate(\gamma,\theta) &= \tildeconrate(\gamma,\theta) \frac{1-\exp(-\v{\tau}\cdot\v{\tildeconratematrix}(\gamma,\cdot))}{\v{\tau}\cdot\v{\tildeconratematrix}(\gamma,\cdot)} \nonumber\\
    &= \alpha\conrate(\gamma,\theta) \frac{1-\exp(-\v{\tau}\cdot\alpha\v{\conratematrix}(\gamma,\cdot))}{\v{\tau}\cdot\alpha\v{\conratematrix}(\gamma,\cdot)} \nonumber\\
    &\geq \conrate(\gamma,\theta) \frac{1-\exp(-\v{\tau}\cdot\v{\conratematrix}(\gamma,\cdot))}{\v{\tau}\cdot\v{\conratematrix}(\gamma,\cdot)} = \apprate(\gamma,\theta) \nonumber
\end{align}
since $\alpha > 1$. That is $\tildeappratematrix > \appratematrix$ component-wise. Thus,
\[
    F\left(\lambda \bsigma \cdot \v{\appratematrix}(\cdot,\theta)\right) \geq F\left(\lambda \bsigma \cdot \left(a_C \v{\tildeappratematrix}(\cdot,\theta) + (1-a_C) \v{\appratematrix}(\cdot,\theta)\right)\right)
\]
and
\[
    \lim_{N\to\infty} N^{-1}Q_{GC}^{(N)} \geq \lim_{N\to\infty} \frac{1}{(1-a_C)N}Q_{\CR}^{(N)}(0|a_C).
\]
In the same way, we have
\[
    \lim_{N\to\infty} N^{-1}Q_{GT}^{(N)} \leq \lim_{N\to\infty} \frac{1}{a_CN}Q_{\CR}^{(N)}(1|a_C),
\]
completing our proof of the asymptotic positivity of $\CR$ bias.
\end{proof}

\thrmUnbiasedExtremes*

\begin{proof}
We denote the limit expectations of the naive $\CR$ and $\LR$ estimators by
\[
E_\CR(a_C) = \lim_{N\to\infty}N^{-1}\gtehatcr(a_C) \;\text{ and }\; E_\LR(a_L) = \lim_{N\to\infty}N^{-1}\gtehatlr(a_L).
\]
We will derive the results in the market extremes from our expressions for the $E_\CR$ and $E_\LR$ given by \eqref{Eqn_limit_cr_est_expectation:Prop_limit_estimator_expectation} and \eqref{Eqn_limit_lr_est_expectation:Prop_limit_estimator_expectation}.

First consider the limit with $\lambda\to 0$. In this case, since the customer side limits the number of bookings, we focus on the treatment effect measured from the customer side, i.e. $(\lambda N)^{-1}\widehat{\GTE}$. From \eqref{Eqn_limit_cr_est_expectation:Prop_limit_estimator_expectation}, we have
\[
    \lim_{\lambda\to 0} \frac{E_\CR(a_C)}{\lambda} = \sum_{\tinT} \tau(\theta) \bsigma \cdot \left(\v{\tildeappratematrix}(\cdot,\theta) - \v{\appratematrix}(\cdot,\theta)\right)
\]
as $\lim_{\lambda\to 0} F\left(\lambda \bsigma \cdot \left(a_C \v{\tildeappratematrix}(\cdot,\theta) + (1-a_C) \v{\appratematrix}(\cdot,\theta)\right)\right) = 0$ for any $\tinT$, and by L'Hospital's rule
\[
    \lim_{\lambda\to 0} \frac{E_\LR(a_L)}{\lambda} = \sum_{\tinT,\ginG} \tau(\theta)
    \sigma(\gamma) (\tildeconrate(\gamma,\theta) - \conrate(\gamma,\theta))
    F\left(\v{\tau}\cdot\left(a_L\v{\tildeconratematrix}(\gamma,\cdot) + (1-a_L)\v{\conratematrix}(\gamma,\cdot)\right)\right).
\]
By Corollary~\ref{Corollary_limit_gte}, for the true $\GTE$,
\begin{align*}
    \lim_{N\to\infty} \frac{\GTE}{\lambda N} &= \sum_{\tinT}  \tau(\theta) \lambda^{-1} \left( \exp(-\lambda\bsigma\cdot\v{\appratematrix}(\cdot,\theta)) - \exp(-\lambda\bsigma\cdot\v{\tildeappratematrix}(\cdot,\theta)) \right) \\
    &\to \sum_{\tinT} \tau(\theta) \bsigma \cdot \left(\v{\tildeappratematrix}(\cdot,\theta) - \v{\appratematrix}(\cdot,\theta)\right)
\end{align*}
as $\lambda\to 0$,
which verifies that the naive $\CR$ estimator is asymptotically unbiased in the supply-constrained market extreme when the naive $\LR$ estimator is not.

For the market extreme as $\lambda\to\infty$, we similarly compute the limit of GTE
\[
    \lim_{N\to\infty} \frac{\GTE}{ N} = \sum_{\tinT} \tau(\theta)  \left( \exp(-\lambda\bsigma\cdot\v{\appratematrix}(\cdot,\theta)) - \exp(-\lambda\bsigma\cdot\v{\tildeappratematrix}(\cdot,\theta)) \right) \to 0
\]
as $\lambda\to\infty$. Taking limit of \eqref{Eqn_limit_lr_est_expectation:Prop_limit_estimator_expectation} as $\lambda\to\infty$ reveals that for $\LR$ we also have $\lim_{\lambda\to \infty} E_\LR(a_L) = 0$. For the $\CR$ case, however,
\[
    \lim_{\lambda\to 0} E_\CR(a_C) = \sum_{\tinT} \tau(\theta) \frac{ \bsigma \cdot \left(\v{\tildeappratematrix}(\cdot,\theta) - \v{\appratematrix}(\cdot,\theta)\right)}{\bsigma \cdot \left(a_C \v{\tildeappratematrix}(\cdot,\theta) + (1-a_C) \v{\appratematrix}(\cdot,\theta)\right)},
\]
which is in general nonzero.
Thus, the naive $\LR$ estimator is asymptotically unbiased in the demand-constrained market extreme while the naive $\CR$ estimator is not.
\end{proof}

\thrmCROptBiasAC*

\begin{proof}
We prove the result in the case where $\alpha>1$. The case where $\alpha<1$ holds by symmetry.
By Lemma~\ref{Lemma_upward_bias}, we know the asymptotic bias of the $\CR$ estimator is positive for any fixed choice of $a_C\in(0,1)$ when the intervention is positive and multiplicative on consideration probability. Thus, minimizing the asymptotic bias is equivalent to minimizing the limit of the expectation of $\gtehatcr$, given in Proposition~\ref{Prop_limit_estimator_expectation} as
\[
    \lim_{N \rightarrow \infty}  \E\left[\widehat{\GTE}_\CR^{(N)}(a_C)\right] = \lambda \sum_{\tinT} \tau(\theta) \bsigma \cdot \left(\v{\tildeappratematrix}(\cdot,\theta) - \v{\appratematrix}(\cdot,\theta)\right) F\left(\lambda \bsigma \cdot \left(a_C \v{\tildeappratematrix}(\cdot,\theta) + (1-a_C) \v{\appratematrix}(\cdot,\theta)\right)\right).
\]
Taking partial derivative with respect to $a_C$, we find
\begin{equation}\label{Eqn_partial_derivative_ac:Thm_opt_bias_ac}
    \frac{\partial}{\partial a_C} B_{\CR}(a_C) = \lambda^2 \sum_{\tinT} \tau(\theta) \left(\bsigma \cdot \left(\v{\tildeappratematrix}(\cdot,\theta) - \v{\appratematrix}(\cdot,\theta)\right)\right)^2 F'\left(A\right),
\end{equation}
where $A$ is the short-hand for the argument in $F(\cdot)$ above. This value is strictly negative since $F'(x) < 0$ for any $x > 0$. Thus, $B_{\CR}(a_C)$ is positive strictly decreasing on $(0,1)$.
\end{proof}

\thrmLROptBiasAL*

\begin{proof}
We assume that $\tildeconprob > \conprob$ and hence $\tildeconrate > \conrate$ in the limit. The opposite case holds by symmetry.

Taking partial derivative of the limit in \eqref{Eqn_homogen_limit_lr_est_expectation:Corollary_homogen_limit_estimator_expectation} with respect to $a_L$ gives
\[
    \frac{\partial}{\partial a_L}  B_\LR(a_L) = -\tildeconrate \exp\left(-\lambda \tildeconrate F(a_L\tildeconrate + (1-a_L)\conrate)\right) + \conrate \exp\left(-\lambda \conrate F(a_L\tildeconrate + (1-a_L)\conrate)\right).
\]
This means that
\[
    \frac{\partial}{\partial a_L}  B_\LR(a_L) > 0 \;\Longleftrightarrow\; \frac{\tildeconrate}{\conrate} < \exp\left(\lambda(\tildeconrate-\conrate) F(a_L\tildeconrate + (1-a_L)\conrate)\right).
\]
Notice that the left hand side is constant, and the right hand side increases in $\lambda$ and decreases in $a_L$. Hence the function $B_\LR(a_L)$ admits at most one local maximum in the interior of $(0,1)$. This implies that the infimum of $B_\LR$ must be achieved on the boundary of the interval. We can verify the statement on $\lambda^*$ through an algebraic comparison of $B_\LR(0)$ and $B_\LR(1)$, both can be defined by continuity.
\end{proof}
\section{Additional Simulations}
\label{app:figures}

In this section, we compare the bias, standard deviation, and $\MSE$ of the $\CR$ and $\LR$ estimators in different market settings. 

In Figures \ref{fig:vary_treat_less_demand} - \ref{fig:vary_treat_more_demand}, we show the behavior of the estimators as we vary the multiplicative treatment lift $\alpha$ (and hence varying the $\GTE$). Each figure shows this behavior at a different level of relative demand. All three figures depict a homogeneous market.

\begin{figure}[H]
    \centering
    \includegraphics[scale=0.5]{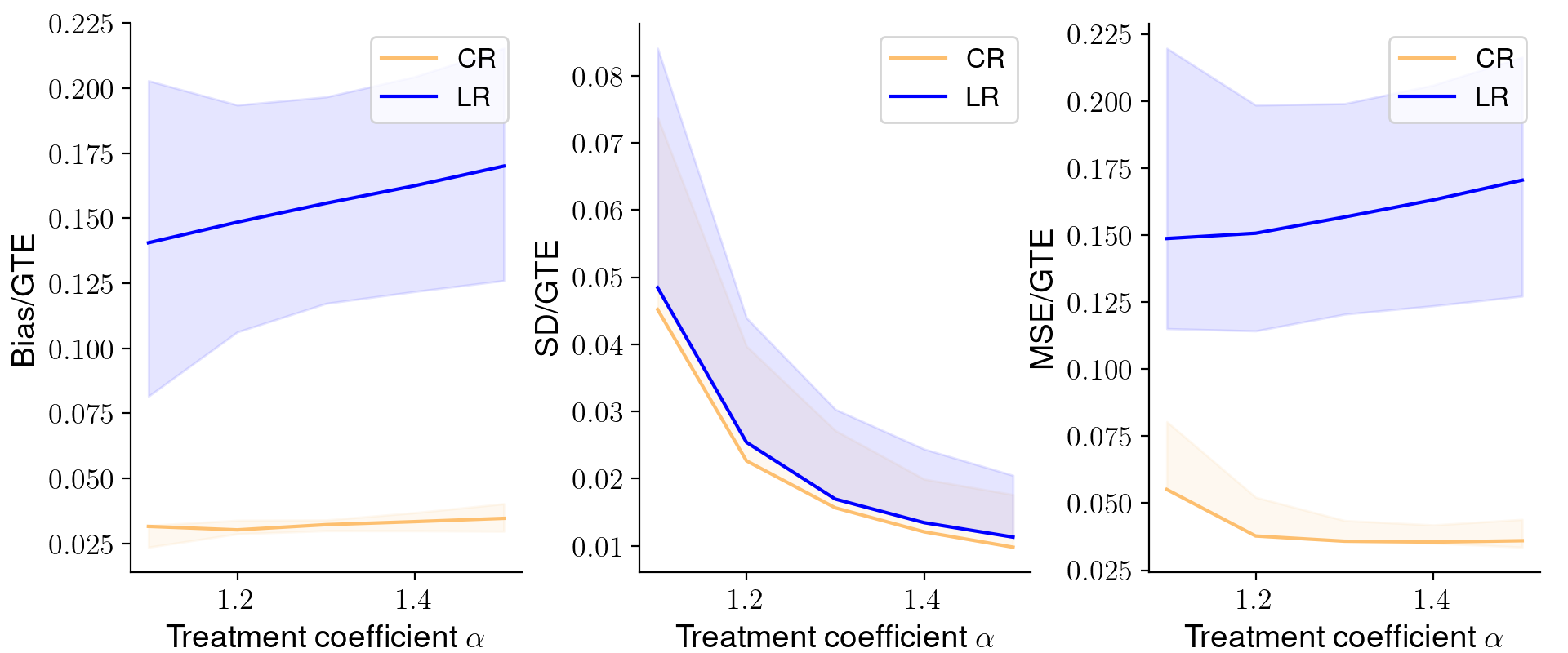}
    \caption{Demand constrained market. In this example, we consider $M=2^{20}\approx 1.0$ million homogeneous customers and $N=4M=2^{22}\approx 4.2$ million homogeneous listings, with $\conrate=N\conprob=0.253$.}
    \label{fig:vary_treat_less_demand}
\end{figure}

\begin{figure}[H]
    \centering
    \includegraphics[scale=0.5]{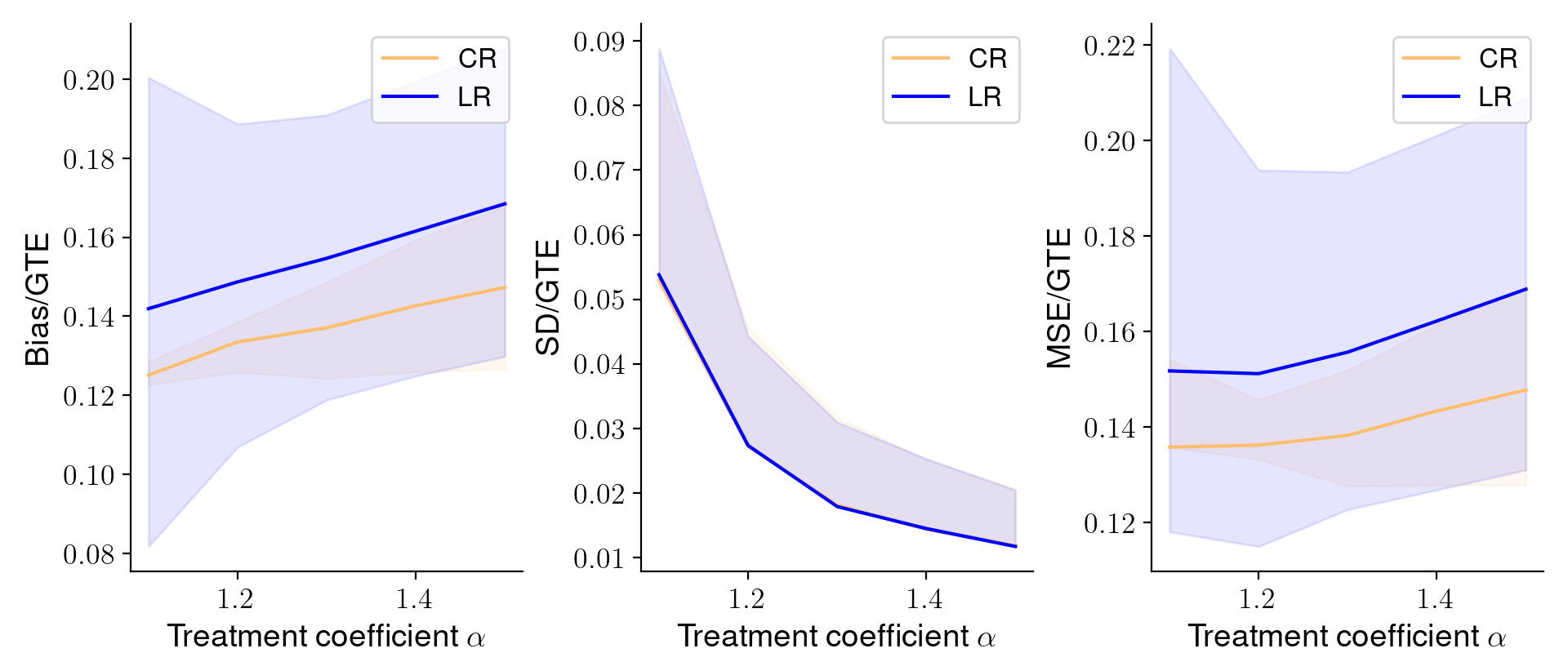}
    \caption{Balanced market ($M=N$). In this example, we consider $M=N=2^{20}\approx 1.0$ million homogeneous customers and listings, with $\conrate=N\conprob=0.253$, giving a global control booking probability of $20\%$.}
    \label{fig:vary_treat_balanced_demand}
\end{figure}

\begin{figure}[H]
    \centering
    \includegraphics[scale=0.5]{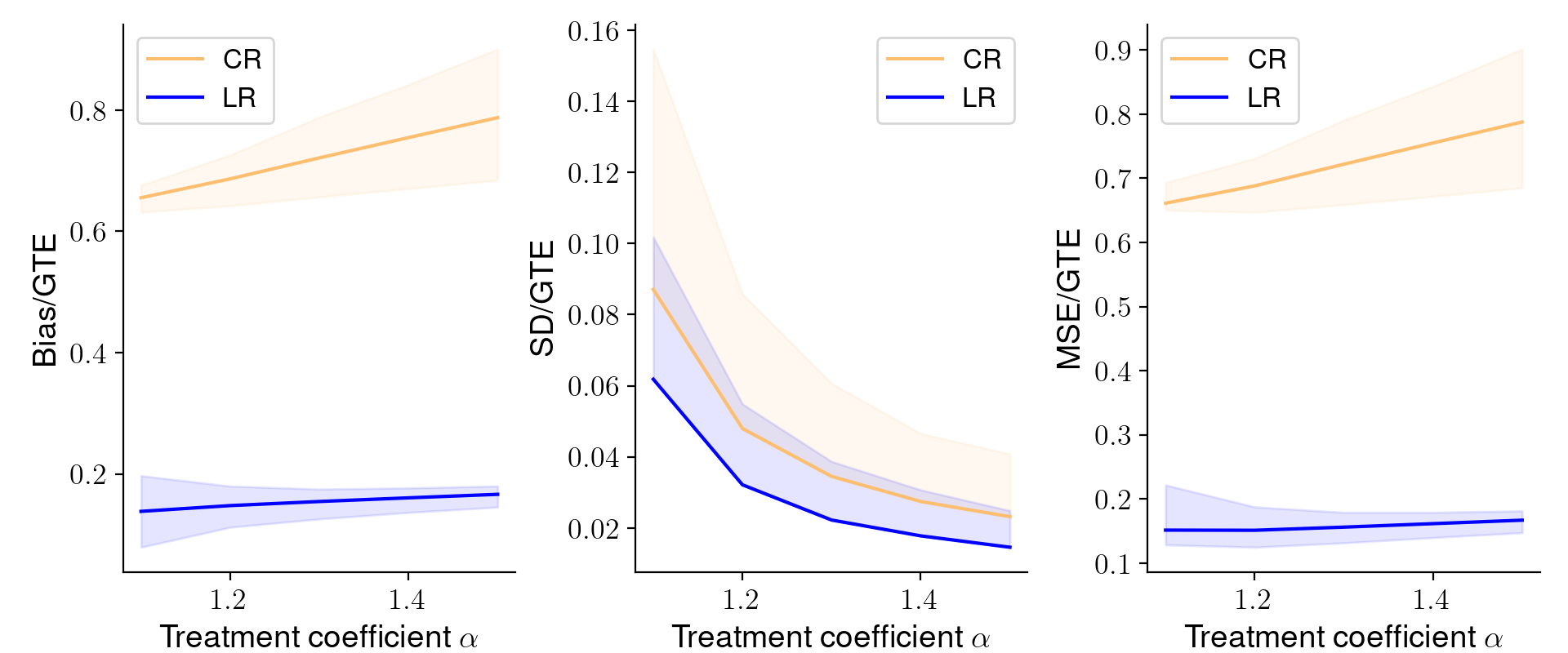}
    \caption{Supply constrained market. In this example, we consider $M=2^{20}\approx 1.0$ million homogeneous customers and $N=M/4=2^{18}\approx 267$ thousand homogeneous listings, with $\conrate=N\conprob=0.253$.}
    \label{fig:vary_treat_more_demand}
\end{figure}

We then consider heterogeneous markets. Figure \ref{fig:vary_treat_het_customers} shows simulations in a market where heterogeneous customer types and homogeneous listings. Figure \ref{fig:vary_treat_het_listings} shows simulations in a market with homogeneous customers and heterogeneous listings. 

\begin{figure}[H]
    \centering
    \includegraphics[scale=0.5]{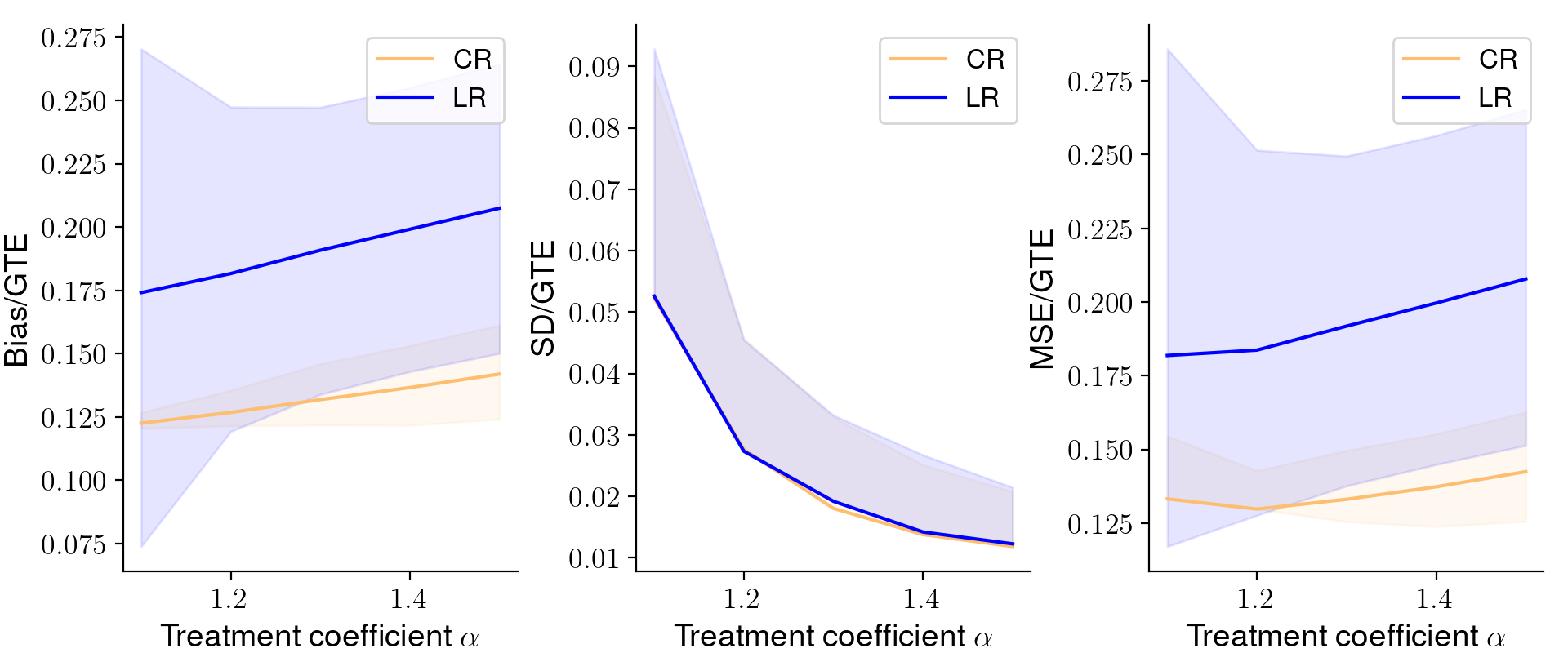}
    \caption{Heterogeneous customers. We consider a balanced market consisting of $M=N=2^{20}\approx 1$ million listings and customers, where the listings are homogeneous, and customers are split into two types $\gamma_1$ and $\gamma_2$. The customer type proportions are given by $\v{s} = (0.4, 0.6)$ and the consideration probabilities are defined with $\v{\conrate}^T = N\v{\conprob}^T = (0.101, 0.354)$.}
    \label{fig:vary_treat_het_customers}
\end{figure}

\begin{figure}[H]
    \centering
    \includegraphics[scale=0.5]{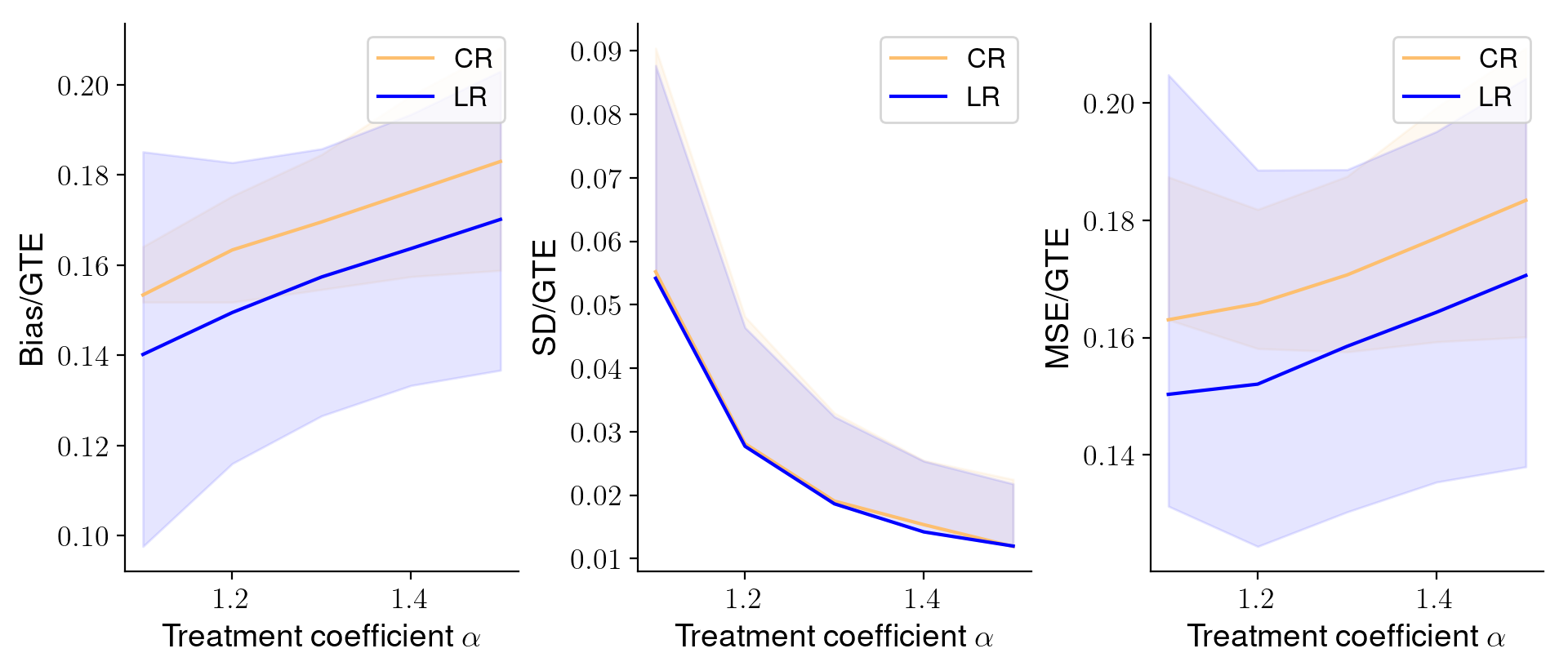}
    \caption{Heterogeneous listings. We consider a balanced market consisting of $M=N=2^{20}\approx 1$ million listings and customers, where the customers are homogeneous, and listing are split into two types $\theta_1$ and $\theta_2$. The listing type proportions are given by $\v{t} = (0.4, 0.6)$ and, as in the previous example, the consideration probabilities are defined with $\v{\conrate} = N\v{\conprob} = (0.101,\, 0.354)$.}
    \label{fig:vary_treat_het_listings}
\end{figure}

We then turn our attention to the dependence of bias, standard deviation, and $\MSE$ on the treatment allocation. Figures \ref{fig:vary_alloc_small_market}-\ref{fig:vary_alloc_large_market} show simulations in a balanced market with $M=N$. In a small market (Figure \ref{fig:vary_alloc_small_market}), the allocation that minimizes $\MSE$ is closer to the variance-optimizing allocation, whereas in a large market (Figure \ref{fig:vary_alloc_large_market}) the bias-optimizing allocation minimizes $\MSE$. In each of the three figures, the parameters are as follows: global control booking percentage of $20\%$ and global treatment booking percentage of $22\%$, meaning $\conrate=N\conprob=0.253$ and $\tildeconrate=N\tildeconprob=0.286$.

\begin{figure}[H]
    \centering
    \includegraphics[scale=0.5]{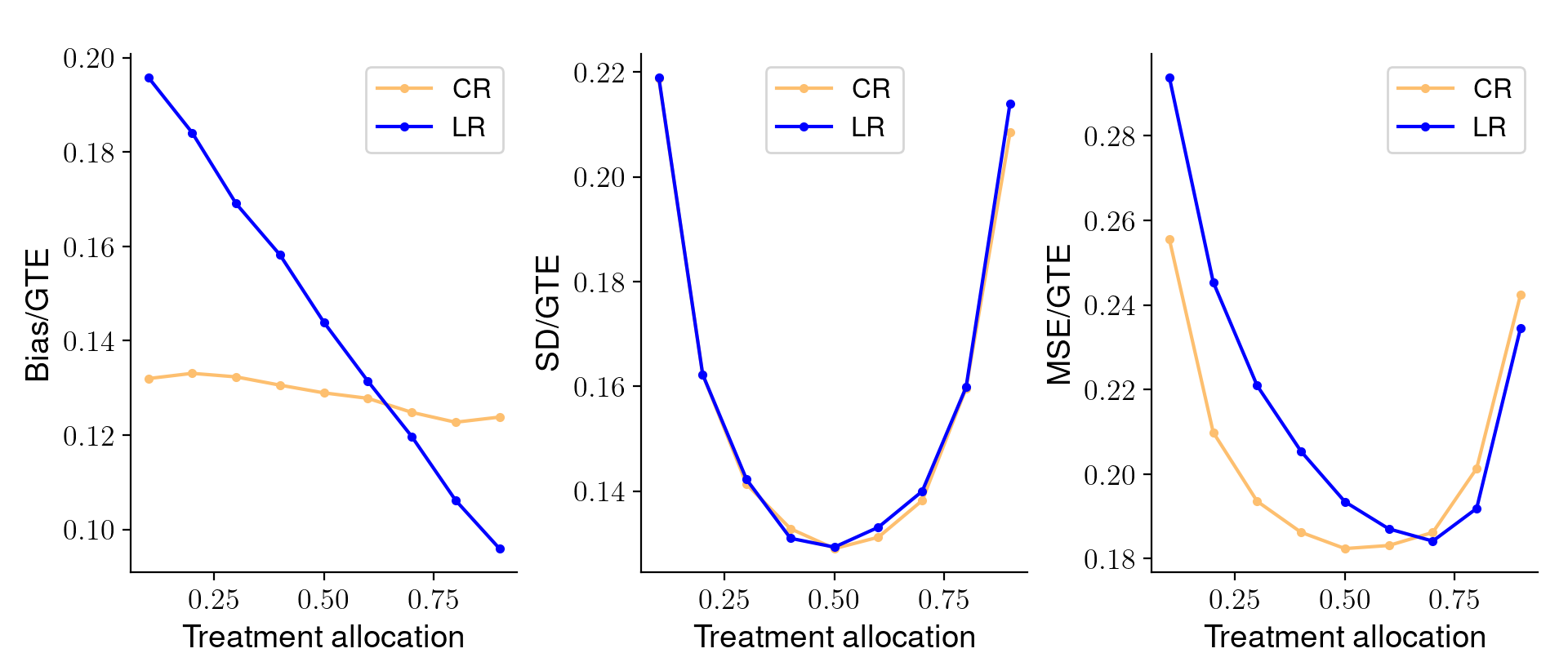}
    \caption{Small market ($M=N=10^5$)}
    \label{fig:vary_alloc_small_market}
\end{figure}

\begin{figure}[H]
    \centering
    \includegraphics[scale=0.5]{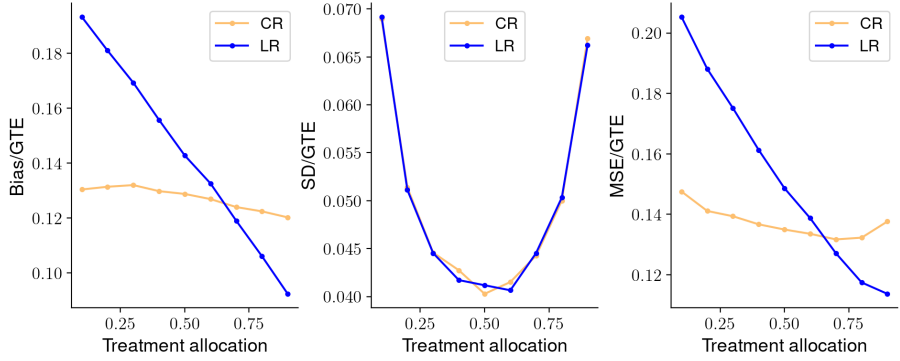}
    \caption{Medium market ($M=N=10^6$)}
    \label{fig:vary_alloc_med_market}
\end{figure}

\begin{figure}[H]
    \centering
    \includegraphics[scale=0.5]{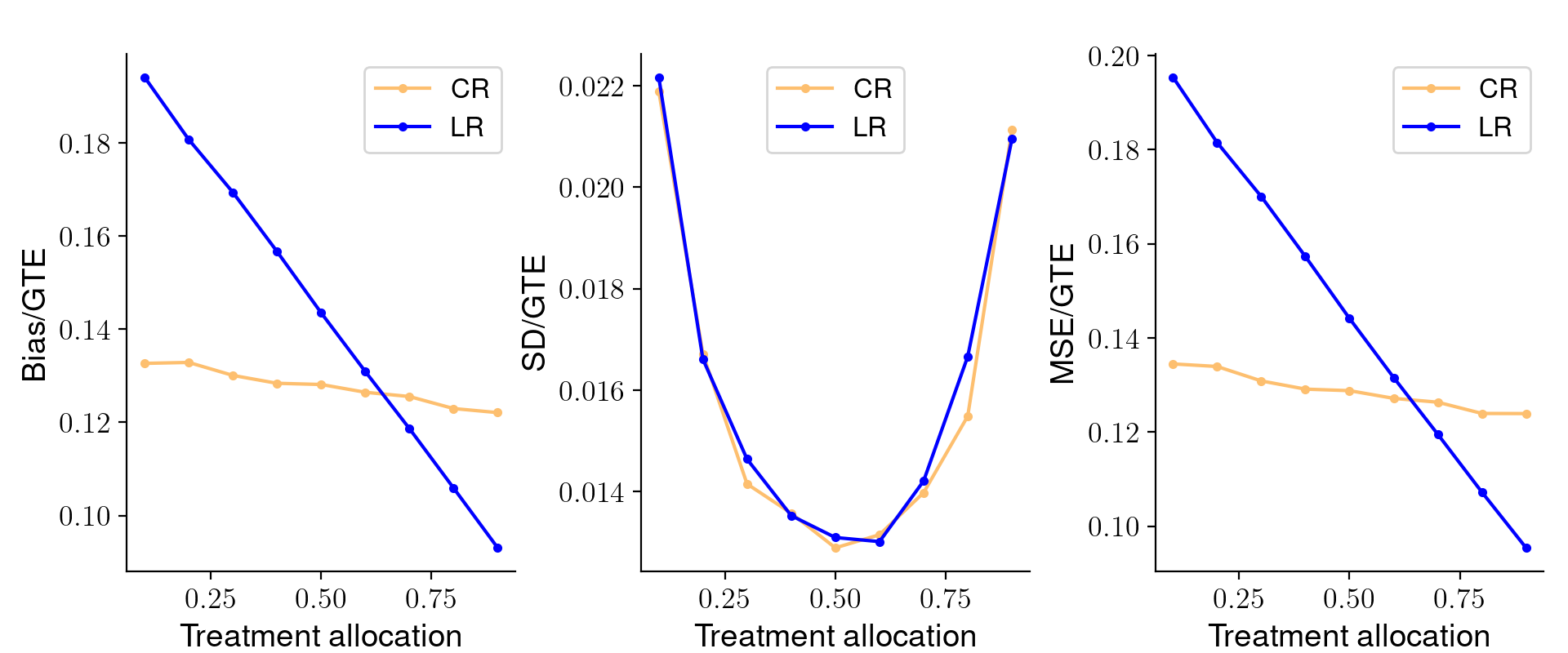}
    \caption{Large market ($M=N=10^7$)}
    \label{fig:vary_alloc_large_market}
\end{figure}

The plots below show how variance changes as we increase the treatment allocation, in a balanced market. The different curves represent different levels of $\GTE$. We find that for reasonable $\GTE$, the variance-minimizing allocation is roughly $0.5$. 

\begin{figure}[H]
    \centering
    \begin{subfigure}[b]{0.495\textwidth}
        \centering
        \includegraphics[width=\textwidth]{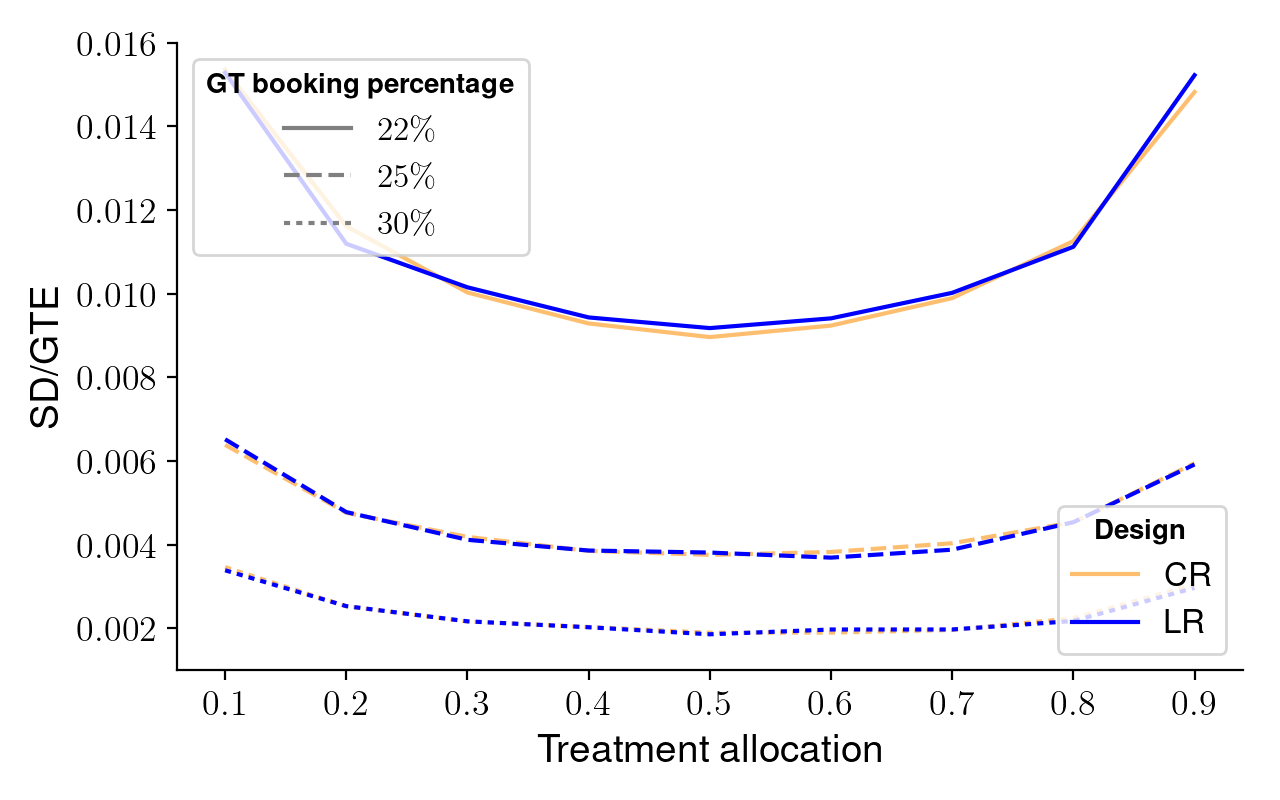}
        \caption{Simulation results}
        \label{fig:vary_alloc_vary_gte_sim}
    \end{subfigure}
    \hfill
    \begin{subfigure}[b]{0.495\textwidth}
        \centering
        \includegraphics[width=\textwidth]{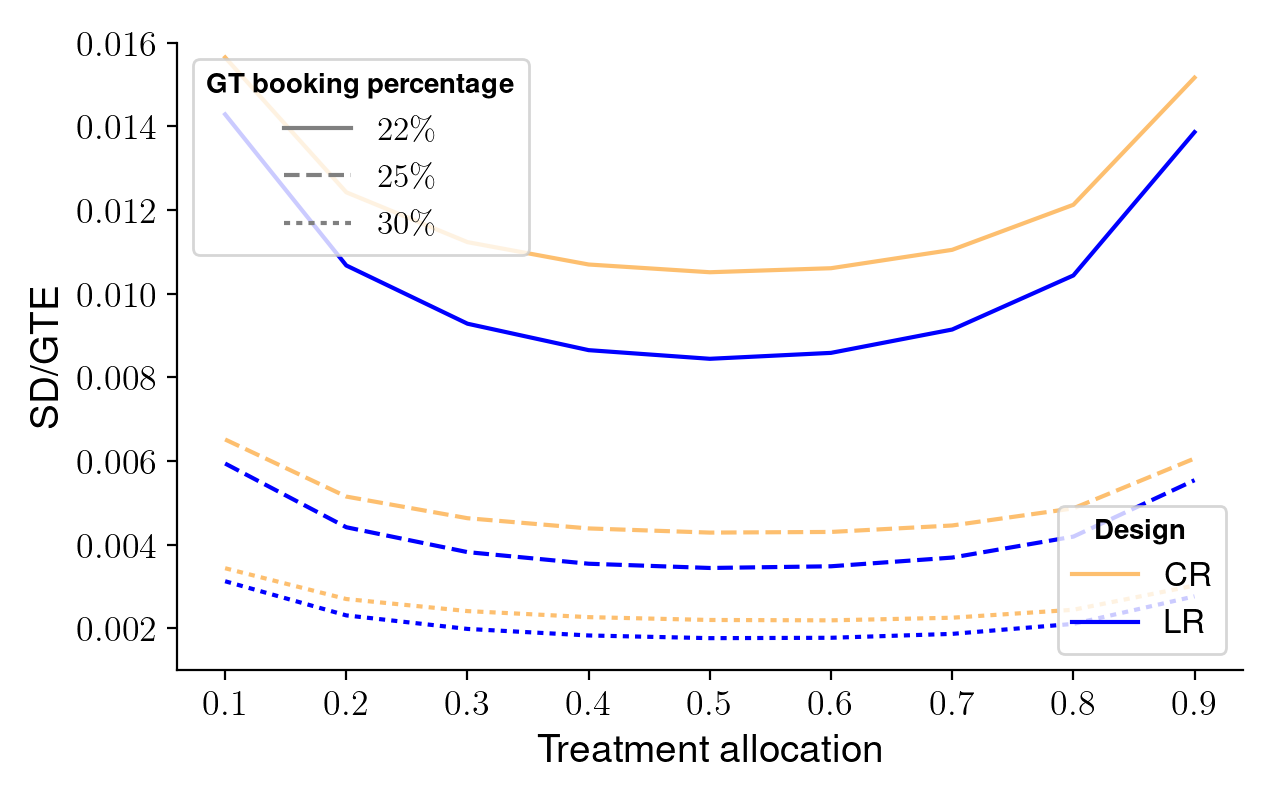}
        \caption{Numeric results}
        \label{fig:vary_alloc_vary_gte_num}
    \end{subfigure}
    \caption{Simulation and numeric results showing how standard deviation changes as treatment allocation proportion changes. Colors correspond to $\CR$ and $\LR$ designs and different curves indicate different levels of global treatment booking percentages. Left: Simulations in a homogeneous market of 20 million listings and customers with $20\%$ booking percentage in global control ($\conrate=N\conprob=0.253$). Right: Numeric evaluation of the asymptotic limit of scaled variance with $20\%$ booking rate in global control ($\conrate=0.253$), scaled by market size $N=M=20$ million to match the simulation scale.}
    \label{fig:vary_alloc_vary_gte}
\end{figure}
\section{Asymptotic results for variances of the $\LR$ and $\CR$ estimators}\label{sec_variance_appendix}

In this section, we will develop basic results to characterize the variances of the naive $\CR$ and $\LR$ estimators. The lemmas will immediately imply the variance scales at order $1/N$. Further, the results offer a closed form for the (constant) limits of the scaled variances, i.e., $N\Var\big(\gtehatlr(a_L)\big)$ and $N\Var\big(\gtehatcr(a_C)\big)$.

In the derivation below, we will assume for simplicity that the market is homogeneous; the techniques and results should generalize to heterogeneous settings as well, although the expressions become much more complicated.

We begin with expressing the variance of the naive $\LR$ estimator as
\begin{align}
    \Var\left(\gtehatlr(a_L)\right)
    &= \Var\left(N_1^{-1}\sum_{l:\;Z_l=1} Y_l - N_0^{-1}\sum_{l':\;Z_{l'}=0} Y_l'\right) \nonumber \\
    &= \frac{1}{N_1}\Var(Y_T) + \frac{1}{N_0}\Var(Y_C) \nonumber \\
    & \qquad + \frac{N_1-1}{N_1}\Cov(Y_T,Y_{T'}) + \frac{N_0-1}{N_0}\Cov(Y_{C},Y_{C'}) - 2\Cov(Y_T,Y_{C}),\label{Eqn_LR_var_expanded_Yl}
\end{align}
where $Y_l$ denotes the indicator random variable that listing $l$ is booked, and $Y_T$ (resp. $Y_C$) denotes the indicator that a generic treatment (resp. control) listing is booked, due to the symmetry we imposed on the listings.

\begin{lemma}[Limit of scaled variance of the $\LR$ estimator]\label{Lemma_limit_scaled_lr_var}
In homogeneous markets, as $N\to\infty$ we have
\begin{equation}\label{Eqn_limit_scaled_lr_var:Lemma_limit_scaled_lr_var}
    N\Var\big(\gtehatlr(a_L)\big) \to VT_\LR + VC_\LR + CV_\LR,
\end{equation}
where $VT_\LR, VC_\LR$, and $CV_\LR$ represent the total contribution from the variance of $Y_l$ for all treated listings $l$, the variance of $Y_l$ for all control listings $l$, and the covariance between all pairs different listings, given by
\[
    VT_\LR = \frac{1}{a_L} \exp\big(-\lambda \tildeconrate F((1-a_L)\conrate + a_L \tildeconrate)\big) \left(1-\exp\big(-\lambda \tildeconrate F((1-a_L)\conrate + a_L \tildeconrate)\big)\right),
\]
\[
    VC_\LR = \frac{1}{1-a_L} \exp\big(-\lambda \conrate F((1-a_L)\conrate + a_L \tildeconrate)\big) \left(1-\exp\big(-\lambda \conrate F((1-a_L)\conrate + a_L \tildeconrate)\big)\right),
\]
\[
    CV_\LR = -\lambda \left( \tildeconrate \exp\big(-\lambda \tildeconrate F((1-a_L)\conrate + a_L \tildeconrate)\big)- \conrate \exp\big(-\lambda \tildeconrate F((1-a_L)\conrate + a_L \tildeconrate)\big) \right)^2.
\]
\end{lemma}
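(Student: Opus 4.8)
The plan is to start from the variance decomposition in \eqref{Eqn_LR_var_expanded_Yl} and evaluate each of its five terms in the large-market limit. The crucial structural observation is that a listing is booked if and only if it receives at least one application, so $Y_l = \mathbbm{1}[A_l \geq 1]$, where $A_l$ is the (random) number of applications received by listing $l$. For a single treated listing, Lemma~\ref{Prop_limit_app_and_book_prob} gives that $A_l$ is asymptotically Poisson, so $\E[Y_T] = 1 - \exp(-\mu_T)$ with $\mu_T = \lambda\tildeconrate F((1-a_L)\conrate + a_L\tildeconrate)$, and likewise $\E[Y_C] = 1-\exp(-\mu_C)$ with $\mu_C = \lambda\conrate F((1-a_L)\conrate + a_L\tildeconrate)$. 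Since $Y_T, Y_C$ are Bernoulli, $\Var(Y_T) = \exp(-\mu_T)(1-\exp(-\mu_T))$ and similarly for $Y_C$; combined with the scalings $N/N_1 \to 1/a_L$ and $N/N_0 \to 1/(1-a_L)$, the first two terms of \eqref{Eqn_LR_var_expanded_Yl} yield $VT_\LR$ and $VC_\LR$.

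The substance of the proof lies in the three covariance terms. Because $Y_l = 1 - \mathbbm{1}[A_l = 0]$, one has the exact identity $\Cov(Y_l, Y_{l'}) = \Cov(\mathbbm{1}[A_l=0], \mathbbm{1}[A_{l'}=0]) = P(A_l=0, A_{l'}=0) - P(A_l=0)P(A_{l'}=0)$, which reduces everything to the joint ``no-application'' probabilities. Applications from distinct customers are independent, and for a single customer the events of applying to $l$ and to $l'$ are mutually exclusive, so $P(c \text{ applies to neither}) = 1 - q_c(l) - q_c(l')$ exactly. Hence $P(A_l=0,A_{l'}=0) = \prod_c (1 - q_c(l) - q_c(l'))$ while $P(A_l=0) = \prod_c(1-q_c(l))$. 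I would then expand the logarithm of the ratio $P(A_l=0,A_{l'}=0)/[P(A_l=0)P(A_{l'}=0)]$ to second order, observing that the quadratic self-terms $\sum_c q_c(l)^2$ and $\sum_c q_c(l')^2$ cancel and only the cross term $-\sum_c q_c(l)q_c(l')$ survives, to obtain $\Cov(Y_l,Y_{l'}) = -\exp(-\mu_l - \mu_{l'})\sum_c q_c(l)q_c(l') + o(1/N)$. Using $N q_c(\cdot) \to \apprate(\cdot)$ from Lemma~\ref{Prop_limit_app_and_book_prob} and $M \approx \lambda N$, the quantity $N\sum_c q_c(l)q_c(l')$ converges to $\lambda$ times the product of the two limiting application rates.

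Finally, I would assemble the three scaled covariances. Writing $\tildeapprate = \tildeconrate F((1-a_L)\conrate + a_L\tildeconrate)$ and $\apprate = \conrate F((1-a_L)\conrate + a_L\tildeconrate)$, the treatment--treatment, control--control, and treatment--control contributions enter \eqref{Eqn_LR_var_expanded_Yl} with coefficients $+1$, $+1$, and $-2$, and they combine into the perfect square $-\lambda\big(\tildeapprate\, e^{-\mu_T} - \apprate\, e^{-\mu_C}\big)^2$, which upon substituting the application rates is $CV_\LR$. Throughout I would condition on the treatment assignment; because the listings are homogeneous and exchangeable, the conditional law of $(Y_1,\dots,Y_N)$ (hence the variance) does not depend on which listings are treated, so no additional averaging over the assignment is required. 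The main obstacle is precisely this covariance computation: to zeroth order $P(A_l=0,A_{l'}=0)$ and $P(A_l=0)P(A_{l'}=0)$ both converge to $e^{-\mu_l-\mu_{l'}}$ and cancel, so the entire covariance is a second-order effect in $1/N$; care is needed to control the logarithmic expansion uniformly, to confirm that the self-interaction terms cancel while the cross term is retained, and to justify multiplying by $N$ and passing to the limit.
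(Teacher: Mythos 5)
Your proposal is correct and follows essentially the same route as the paper: both start from the decomposition \eqref{Eqn_LR_var_expanded_Yl}, obtain the marginal booking probabilities $1-(1-q)^M$ via Lemma~\ref{Prop_limit_app_and_book_prob} applied with listing type space $\{0,1\}$, and extract the covariance as the second-order difference between $P(A_l=0,A_{l'}=0)=(1-q_l-q_{l'})^M$ and $P(A_l=0)P(A_{l'}=0)=\big((1-q_l)(1-q_{l'})\big)^M$ --- your logarithmic expansion is exactly the paper's elementary limit $N\big((1-a/N+b/N^2)^N-(1-a/N)^N\big)\to be^{-a}$ in different clothing, and your exchangeability remark about conditioning on the assignment is a point the paper leaves implicit. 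One remark: your final covariance term $-\lambda\big(\tildeapprate e^{-\lambda\tildeapprate}-\apprate e^{-\lambda\apprate}\big)^2$, written in terms of the experiment-time application rates $\tildeapprate=\tildeconrate F(\cdot)$ and $\apprate=\conrate F(\cdot)$, is what the computation actually yields; the displayed $CV_\LR$ in the lemma statement omits the factor $F(\cdot)^2$ and repeats $\tildeconrate$ in the second exponent, which appear to be typos in the statement rather than an error in your argument.
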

\begin{remark}
Notice that the contribution from covariance terms is always non-positive in the limit.
\end{remark}
\begin{proof}
For a given value of $N$, $\E[Y_l]$ for a listing $l$ is a Bernoulli random variable with expectation of the following form
\[
    \E [Y_l] = \begin{cases}1 - (1-\appprob_0)^{M}, &\text{ if } Z_l=0, \\ 1 - (1-\appprob_1)^M, &\text{ if } Z_l=1,\end{cases}
\]
where $\appprob_0$ (resp. $\appprob_1$) denotes the probability that a customer applies to a certain control (resp. treatment) listing. By Lemma~\ref{Prop_limit_app_and_book_prob} applied with listing type space $\{0,1\}$ (indicating treatment condition), we have as $N\to\infty$
\[
    N\appprob_0 \to \conrate F((1-a_L)\conrate + a_L \tildeconrate),
\]
\[
    N\appprob_1 \to \tildeconrate F((1-a_L)\conrate + a_L \tildeconrate).
\]
From the convergence $\lim_{N\to\infty}(1-a/N)^N = e^{-a}$, we immediately obtain the limits for $N\Var(Y_T)\to a_L VT_\LR$ and $N\Var(Y_C)\to (1-a_L) VC_\LR$ in \eqref{Eqn_LR_var_expanded_Yl}.

Similarly, we express the expectation $\E[Y_lY_{l'}]$ for two different listings $l$ and $l'$ in closed form, and from the convergence
\[
    \lim_{N\to\infty} N\left( \Big(1 - \frac{a}{N} + \frac{b}{N^2}\Big)^{N} - \Big(1 - \frac{a}{N}\Big)^{N} \right) = be^{-a},
\]
we recover the combined contribution to $N\Var\big(\gtehatlr(a_L)\big)$ of the covariance terms in \eqref{Eqn_LR_var_expanded_Yl}.
\end{proof}

Now let's switch to the variance of $\gtehatcr$. While it is tempting to apply the similar analysis by breaking the estimator into indicator random variables for each control and treatment customer, it appears nontrivial to handle the covariance caused when listings pick one application each to accept. We instead define indicator random variables $Y_{l,0}$ and $Y_{l,1}$ for each listing $l$ to indicate whether $l$ is matched to a control or treatment customer. In this notation, we may express the naive $\CR$ estimator as
\[
    \gtehatcr(a_C) = \frac{M}{N} \sum_{l} \left(\frac{Y_{l,1}}{M_1} - \frac{Y_{l,0}}{M_0}\right),
\]
and hence its variance as
\begin{align}
    \Var\left(\gtehatcr(a_C)\right) &= \frac{M^2}{N^2}\left(\frac{N}{M_1^2}\Var(Y_{l,1} ) + \frac{N}{M_0^2}\Var(Y_{l,0})\right. \nonumber \\
    &\qquad + \frac{N^2-N}{M_1^2} \Cov(Y_{l,1}, Y_{l',1}) + \frac{N^2-N}{m_0^2} \Cov(Y_{l,0}, Y_{l',0}) \nonumber \\
    &\qquad \left. - \frac{2N^2-2N}{M_0M_1} \Cov(Y_{l,1}, Y_{l',0}) - \frac{2N}{M_0M_1} \Cov(Y_{l,1}, Y_{l,0})\right). \label{Eqn_Var_CR_hat_full_expansion}
\end{align}
The limit for the scaled variance of the $\CR$ estimator turns out rather lengthy, and a term-by-term breakdown is presented in the following lemma.

\begin{lemma}[Limit of scaled variance of the $\CR$ estimator]\label{Lemma_limit_scaled_cr_var}
In homogeneous markets, as $N\to\infty$, we have
\[
    N\Var\left(\gtehatcr(a_C)\right) \to VT_\CR + VC_\CR + CVTT_\CR + CVCC_\CR + CVTC_\CR.
\]
Recall that we let $\apprate$ and $\tildeapprate$ denote the limit of scaled application probability under global control and global treatment. with the terms representing, respectively, the total contribution from the variances of $Y_{l,1}$, the variances of $Y_{l,0}$, the covariances between $Y_{l,1}$ and $Y_{l',1}$ and between $Y_{l,0}$ and $Y_{l',0}$ for pairs of different listings, and finally the covariances between $Y_{l_1,1}$ and $Y_{l_2,0}$ where $l_1,l_2$ may be the same or different listings. With the notation of average application rate $\bar{\apprate}=(1-a_C)\apprate + a_C\tildeapprate$, these terms can be expressed as follows
\begin{equation}
    VT_\CR = \frac{1}{a_C} \frac{ \tildeapprate}{\bar{\apprate}}(1-\exp(-\lambda\bar{\apprate})) \left(1 - \frac{ a_C\tildeapprate}{\bar{\apprate}}(1-\exp(-\lambda\bar{\apprate}))\right),
\end{equation}
\begin{equation}
    VC_\CR = \frac{1}{1-a_C} \frac{\apprate}{\bar{\apprate}}(1-\exp(-\lambda\bar{\apprate})) \left(1 - \frac{(1-a_C)\apprate}{\bar{\apprate}}(1-\exp(-\lambda\bar{\apprate}))\right),
\end{equation}
\begin{align}
    CVTT_\CR &= \frac{1}{a_C} \left( -\frac{(1-a_C)\apprate \tildeapprate}{\lambda\bar{\apprate}^3}(1-\exp(-\lambda\bar{\apprate}))^2 - \frac{a_C\tildeapprate^2}{\bar{\apprate}} \lambda\exp(-2\lambda\bar{\apprate}) \right.  \nonumber\\
    &\qquad\qquad + (1-a_C) a_C\lambda \apprate(1-\apprate) \cdot
    \left(\frac{- \tildeapprate(1-\exp(-\lambda\bar{\apprate})) + \lambda \tildeapprate \bar{\apprate} \exp(-\lambda\bar{\apprate})}{\lambda\bar{\apprate}^2}\right)^2  \nonumber\\
    &\qquad\qquad \left. + \lambda  \tildeapprate(1- \tildeapprate) \cdot
    \left(\frac{(1-a_C)\apprate(1-\exp(-\lambda\bar{\apprate})) + \lambda a_C\tildeapprate \bar{\apprate} \exp(-\lambda\bar{\apprate})}{\lambda\bar{\apprate}^2}\right)^2 \right),
\end{align}
\begin{align}
    CVCC_\CR &= \frac{1}{1-a_C} \left( -\frac{a_C\apprate \tildeapprate}{\lambda\bar{\apprate}^3}(1-\exp(-\lambda\bar{\apprate}))^2 - \frac{(1-a_C)\apprate^2}{\bar{\apprate}} \lambda\exp(-2\lambda\bar{\apprate}) \right.  \nonumber\\
    &\qquad\qquad + \lambda \apprate(1-\apprate) \cdot
    \left(\frac{ a_C\tildeapprate(1-\exp(-\lambda\bar{\apprate})) + \lambda(1-a_C)\apprate \bar{\apprate} \exp(-\lambda\bar{\apprate})}{\lambda\bar{\apprate}^2}\right)^2  \nonumber\\
    &\qquad\qquad \left. + a_C(1-a_C)\lambda  \tildeapprate(1- \tildeapprate) \cdot
    \left(\frac{-\apprate(1-\exp(-\lambda\bar{\apprate})) + \lambda\apprate \bar{\apprate} \exp(-\lambda\bar{\apprate})}{\lambda\bar{\apprate}^2}\right)^2 \right),
\end{align}
\begin{align}
    CVTC_\CR &= - 2\left( -\frac{\apprate \tildeapprate}{\bar{\apprate}} \lambda\exp(-2\lambda\bar{\apprate}) \right.  \nonumber\\
    &\qquad\qquad + \lambda \apprate(1-\apprate) \cdot
    \frac{- \tildeapprate(1-\exp(-\lambda\bar{\apprate})) + \lambda \tildeapprate \bar{\apprate} \exp(-\lambda\bar{\apprate})}{\lambda\bar{\apprate}^2} \nonumber\\
    &\qquad\qquad\qquad \cdot\frac{ a_C\tildeapprate(1-\exp(-\lambda\bar{\apprate})) + \lambda(1-a_C)\apprate \bar{\apprate} \exp(-\lambda\bar{\apprate})}{\lambda\bar{\apprate}^2}  \nonumber\\
    &\qquad\qquad + \lambda  \tildeapprate(1- \tildeapprate) \cdot \frac{(1-a_C)\apprate(1-\exp(-\lambda\bar{\apprate})) + \lambda a_C\tildeapprate \bar{\apprate} \exp(-\lambda\bar{\apprate})}{\lambda\bar{\apprate}^2} \nonumber\\
    &\qquad\qquad\qquad \left. \cdot
    \frac{-\apprate(1-\exp(-\lambda\bar{\apprate})) + \lambda\apprate \bar{\apprate} \exp(-\lambda\bar{\apprate})}{\lambda\bar{\apprate}^2} \right) \nonumber\\ 
    &\qquad +  \frac{2\apprate \tildeapprate}{\bar{\apprate}^2}(1-\exp(-\lambda\bar{\apprate}))^2 .
\end{align}

\end{lemma}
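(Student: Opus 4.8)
The plan is to reduce the claim to computing the $N\to\infty$ limit of each variance and covariance term in the expansion \eqref{Eqn_Var_CR_hat_full_expansion}, in the same spirit as the $\LR$ computation in Lemma~\ref{Lemma_limit_scaled_lr_var} but now accounting for the acceptance step. The structural fact that makes everything exact is that each customer applies to at most one listing: fixing two listings $l\neq l'$, every treatment customer lands in exactly one of the categories ``applies to $l$'', ``applies to $l'$'', or ``applies to neither,'' with probabilities $q_1,q_1,1-2q_1$. Hence the treatment application counts $(K_1^{(l)},K_1^{(l')})$ are exactly multinomial and independent of the control counts $(K_0^{(l)},K_0^{(l')})$, which are multinomial with parameters $q_0,q_0$. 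Lemma~\ref{Prop_limit_app_and_book_prob}, applied with customer type space $\{0,1\}$, provides the scalings $Nq_0\to\apprate$ and $Nq_1\to\tildeapprate$ that drive all limits, and I will write $\mu_0=\lambda(1-a_C)\apprate$, $\mu_1=\lambda a_C\tildeapprate$, and $\mu=\lambda\bar{\apprate}$.

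Next I would express every needed moment of the acceptance indicators through the counts. Since a listing accepts a uniformly random applicant, conditional on the counts the acceptances at $l$ and $l'$ are independent and $\E[Y_{l,1}\mid\text{counts}]=\frac{K_1^{(l)}}{K^{(l)}}\mathbbm{1}(K^{(l)}>0)$ with $K^{(l)}=K_0^{(l)}+K_1^{(l)}$. The device $\frac{1}{K}\mathbbm{1}(K>0)=\int_0^1 x^{K-1}\,dx$ then converts each single and joint acceptance moment into an integral of the multinomial generating functions $G_T(x,y)=(1-q_1(2-x-y))^{M_1}$ and $G_C(x,y)=(1-q_0(2-x-y))^{M_0}$ and their derivatives; for example $\E[Y_{l,1}Y_{l',1}]=\int_0^1\!\int_0^1\partial_x\partial_y G_T\cdot G_C\,dx\,dy$, $\E[Y_{l,1}Y_{l',0}]=\int_0^1\!\int_0^1\partial_x G_T\cdot\partial_y G_C\,dx\,dy$, and $\E[Y_{l,0}Y_{l',0}]$ symmetrically, while the single-listing moments collapse to one-dimensional integrals yielding $\E[Y_{l,1}]\to\frac{\mu_1}{\mu}(1-e^{-\mu})$ and $\E[Y_{l,0}]\to\frac{\mu_0}{\mu}(1-e^{-\mu})$.

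Three of the five terms then follow quickly. The single-listing variances are those of Bernoulli variables with the means just computed, and multiplying by the limiting coefficient $N\cdot\frac{M^2}{N^2}\frac{N}{M_1^2}\to\frac{1}{a_C^2}$ produces $VT_\CR$ and $VC_\CR$. The same-listing covariance is exact and trivial: because a listing cannot accept both a treatment and a control customer, $Y_{l,1}Y_{l,0}\equiv 0$, so $\Cov(Y_{l,1},Y_{l,0})=-\E[Y_{l,1}]\E[Y_{l,0}]$, which contributes the isolated final term $\frac{2\apprate\tildeapprate}{\bar{\apprate}^2}(1-e^{-\lambda\bar{\apprate}})^2$ of $CVTC_\CR$.

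The main obstacle is the three cross-listing covariances, which vanish at leading order and must be computed to order $1/N$. Because $M_zq_z\to\mu_z$ while $Nq_z\to\apprate$ or $\tildeapprate$, the integrands have to be expanded to second order, and an $O(1/N)$ correction arises from three sources: (i) the prefactor discrepancy $M_z(M_z-1)q_z^2=(M_zq_z)^2-M_zq_z^2$, whose correction scaled by $N$ gives $-\mu_z\apprate$ or $-\mu_z\tildeapprate$; (ii) the subleading term of $(1-q_z(2-x-y))^{M_z}=e^{-\mu_z(2-x-y)}\bigl(1+O(1/N)\bigr)$; and (iii) the negative correlation between the two application counts, carried by the shared argument $2-x-y$. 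I would expand $\partial_x\partial_y G_T\cdot G_C$ and its analogues to this order, integrate the resulting Laplace-type kernels $e^{-\mu(2-x-y)}$ and their polynomial-weighted variants over $[0,1]^2$, subtract the product of marginals to isolate $\lim_N N\,\Cov$, and finally multiply the whole variance by $N$; since each such covariance is $O(1/N)$, its contribution is the finite limit $\lim_N N\,\Cov$ times a coefficient such as $\frac{M^2}{N^2}\frac{N^2-N}{M_1^2}\to\frac1{a_C^2}$ (the residual $a_C$ being absorbed into $\lim_N N\,\Cov$, matching the $\frac1{a_C}$ prefactors in the statement). The mixed factors $\apprate(1-\apprate)$ and $\tildeapprate(1-\tildeapprate)$ are precisely what results from combining the $-\apprate^2$-type prefactor correction with the $+\apprate$-type exponent correction. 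Keeping every $O(1/N)$ contribution consistent across these sources while carrying out the integrations is the only delicate part; the remaining steps are substitution of the Lemma~\ref{Prop_limit_app_and_book_prob} scalings together with $\lim_N(1-a/N)^N=e^{-a}$, exactly as in the $\LR$ case.
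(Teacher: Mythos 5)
Your plan is correct and would yield the stated limits, but it follows a genuinely different route from the paper's (admittedly abbreviated) argument. The paper conditions on the \emph{aggregate} application totals $A_0,A_1$ across the whole market and exploits exchangeability of applications over listings to write $\E[Y_{l,0}\mid A_0,A_1]=\frac{A_0}{A_0+A_1}\bigl(1-(1-N^{-1})^{A_0+A_1}\bigr)$, then passes to the limit by the law of large numbers and bounded convergence; it explicitly omits the covariance computations, remarking only that they ``follow the similar analysis by conditioning on the state of the applications.'' You instead work with \emph{per-listing-pair} application counts, observing (correctly) that for fixed $l\neq l'$ the treatment and control counts are independent multinomials, and you convert the uniform-acceptance step into integrals of probability generating functions via $\frac{1}{K}\mathbbm{1}(K>0)=\int_0^1x^{K-1}\,dx$. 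This buys you a mechanical, fully explicit handle on exactly the terms the paper skips: the cross-listing covariances are $O(1/N)$, and your identification of the three sources of the $O(1/N)$ correction --- the $M_z(M_z-1)q_z^2$ versus $(M_zq_z)^2$ prefactor discrepancy, the subleading term in $(1-q_z(2-x-y))^{M_z}\to e^{-\mu_z(2-x-y)}$, and the coupling through the shared argument $2-x-y$ --- is the complete list, since all corrections enter at order $M_zq_z^2=O(1/N)$. Your treatment of the easy pieces checks out against the statement: the Bernoulli variances with coefficient $M^2/M_1^2\to a_C^{-2}$ give $VT_\CR$ and $VC_\CR$, and the exact identity $Y_{l,1}Y_{l,0}\equiv 0$ gives the isolated term $\frac{2\apprate\tildeapprate}{\bar{\apprate}^2}(1-e^{-\lambda\bar{\apprate}})^2$ in $CVTC_\CR$. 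The one caveat is that, like the paper, you stop short of actually carrying out the second-order expansions and the $[0,1]^2$ integrations for $CVTT_\CR$, $CVCC_\CR$, and the remaining part of $CVTC_\CR$; your framework makes that a finite, well-posed computation, but the claimed closed forms are not verified until it is done.
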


The full proof is, not surprisingly, tedious and not particularly insightful; we omit it in the interest of space, but briefly comment on the following derivation for the limits of the variance terms.

Since $\Var(Y_{l,0}) = \mathbb{E}[Y_{l,0}] - \left(\mathbb{E}[Y_{l,0}]\right)^2$, it suffices for us to show that
\begin{equation}\label{Eqn_prf_lemma_CR_var_part_I_var_Y_Cl}
    \E[Y_{l,0}] \to \frac{(1-a_C)\apprate}{\bar{\apprate}}(1-\exp(-\lambda\bar{\apprate})).
\end{equation}
We will condition on the outcome of the application step. Let $A_0$ and $A_1$ denote the total number of applications sent from control and treatment customers, respectively, and let $B$ denote the number of listings receiving at least one applications. We know that for any listing $l$,
\[
    \mathbb{E}[Y_{l,0} | A_0, A_1, B] = \frac{B}{N}\cdot \frac{A_0}{A_0+A_1},
\]
and
\[
    \mathbb{E}(Y_{l,1} | A_0, A_1, B) = \frac{B}{N}\cdot \frac{A_1}{A_0+A_1}.
\]
By the law of total expectation,
\[
    \mathbb{E}[Y_{l,0}] = \mathbb{E}\left[\mathbb{E}\left[Y_{l,0} \middle| A_0, A_1\right]\right].
\]
The inner conditional expectation is equal to the conditional probability that listing $\ell$ is matched to a control customer given that control and treatment customers make a total of $A_0$ and $A_1$ applications, respectively. Hence,
\[
    \mathbb{E}\left[Y_{l,0} \middle| A_0, A_1\right] = \mathbb{P}\left(Y_{l,0}=1 \middle| A_0, A_1,Y_\ell = 1\right) \mathbb{P}\left(Y_{\ell} \middle| A_0, A_1\right) = \frac{A_0}{A_0+A_1}\left(1-\left(1-N^{-1}\right)^{A_0+A_1}\right).
\]
As a result of the law of large numbers and that $A_0/M\overset{p}{\to}(1-a_C)\apprate$ and $A_1/M\overset{p}{\to} a_C\tildeapprate$. Therefore,
\[
    \mathbb{E}\left[Y_{l,0} \middle| A_0, A_1\right] \overset{p}{\to} \frac{(1-a_C)\apprate}{(1-a_C)\apprate+ a_C\tildeapprate}(1-\exp(-\lambda((1-a_C)\apprate+ a_C\tildeapprate))),
\]
and the convergence is also in expectation by bounded convergence theorem. This proves \eqref{Eqn_prf_lemma_CR_var_part_I_var_Y_Cl} and therefore proves the convergence of $\Var(Y_{l,1})\to VT_\CR$. The results for the other terms in Lemma~\ref{Lemma_limit_scaled_cr_var} follows the similar analysis by conditioning on the state of the applications.

\section{Numeric results for variances of the $\CR$ and $\LR$ estimators}

To complement our asymptotic results about variances of the estimators, we offer numeric evidence for the approximate variance-optimality with a treatment allocation ratio of $0.5$ in both $\CR$ and $\LR$ designs under most practical situations. We are interested in the asymptotic approximation ratios using $0.5$ allocation compared with the variance-optimal allocation in the mean field limit, namely
\[
\beta_\CR \vcentcolon= \lim_{N\to\infty} \frac{\Var\big(\gtehatcr(0.5)\big)}{\Var\big(\gtehatcr(a_C^*)\big)}
\qquad\text{ and }\qquad
\beta_\LR \vcentcolon= \lim_{N\to\infty} \frac{\Var\big(\gtehatlr(0.5)\big)}{\Var\big(\gtehatlr(a_L^*)\big)},
\]
where $a_C^* = \argmin\; \Var\big(\gtehatcr(a_C)\big)$ and $a_L^* = \argmin\; \Var\big(\gtehatlr(a_L)\big)$ are the corresponding variance-minimizing allocation ratios. 
These quantities are numerically computed using the formulas derived in the previous section. Figure~\ref{fig:cr_var_approx_ratio} and \ref{fig:lr_var_approx_ratio} show the approximation ratios in $\CR$ and $\LR$ designs, respectively. Note that the color scales are different in the two figures: namely, in Figure~\ref{fig:cr_var_approx_ratio} the largest ratio across the entire range of market balance and treatment effect is $1.004$. The approximation ratio only deteriorates noticeably for $\LR$ experiments when the treatment effect is large and when the market is largely supply constrained (i.e. $\lambda$ is large), a situation that we believe is rather uncommon in practical settings.

\begin{figure}[H]
    \centering
    \includegraphics[scale=0.55]{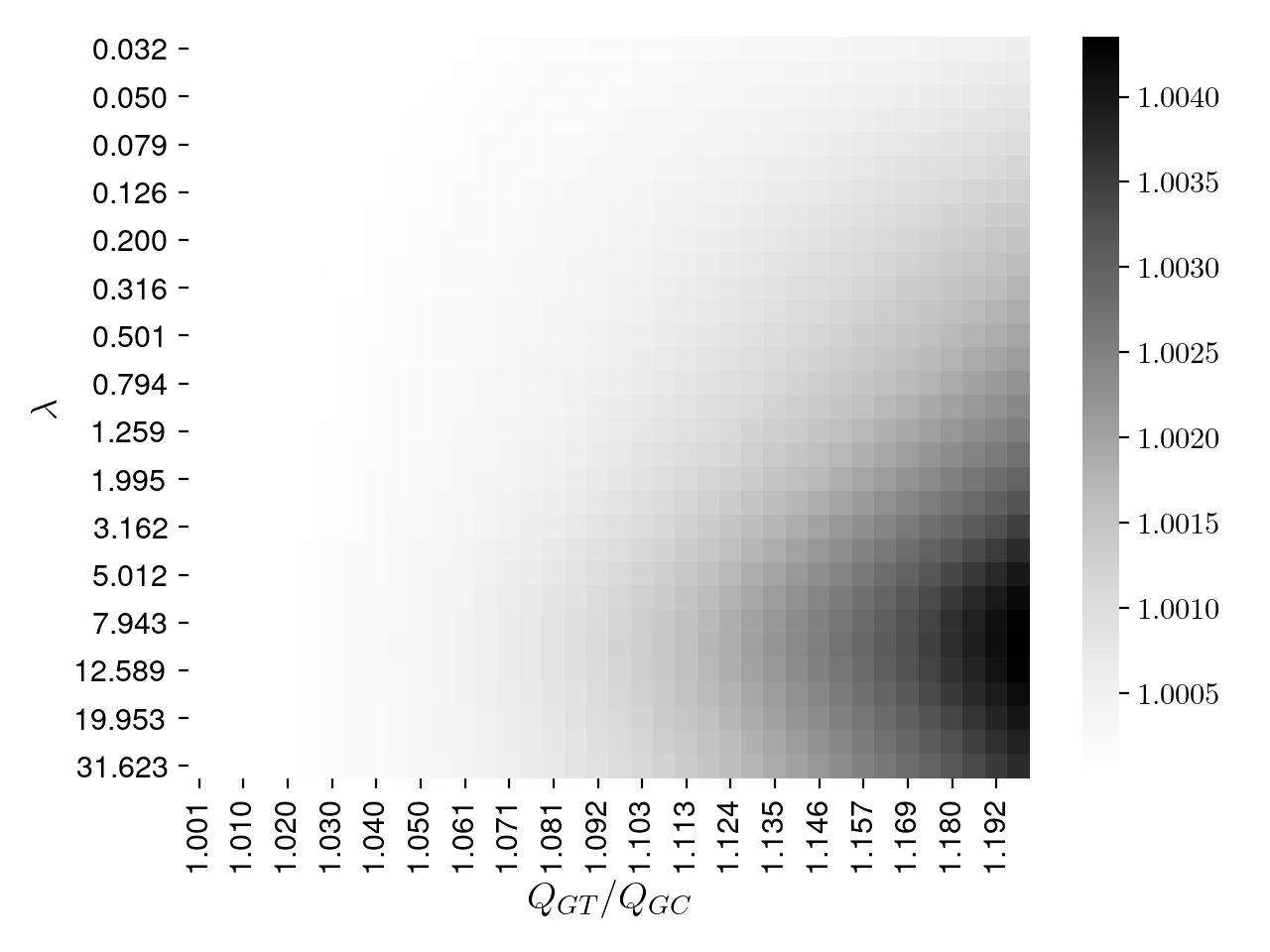}
    \caption{Approximation ratio $\beta_\CR$ in asymptotic variance of the allocation $a_C=0.5$ in $\CR$ experiments.}
    \label{fig:cr_var_approx_ratio}
\end{figure}

\begin{figure}[H]
    \centering
    \includegraphics[scale=0.55]{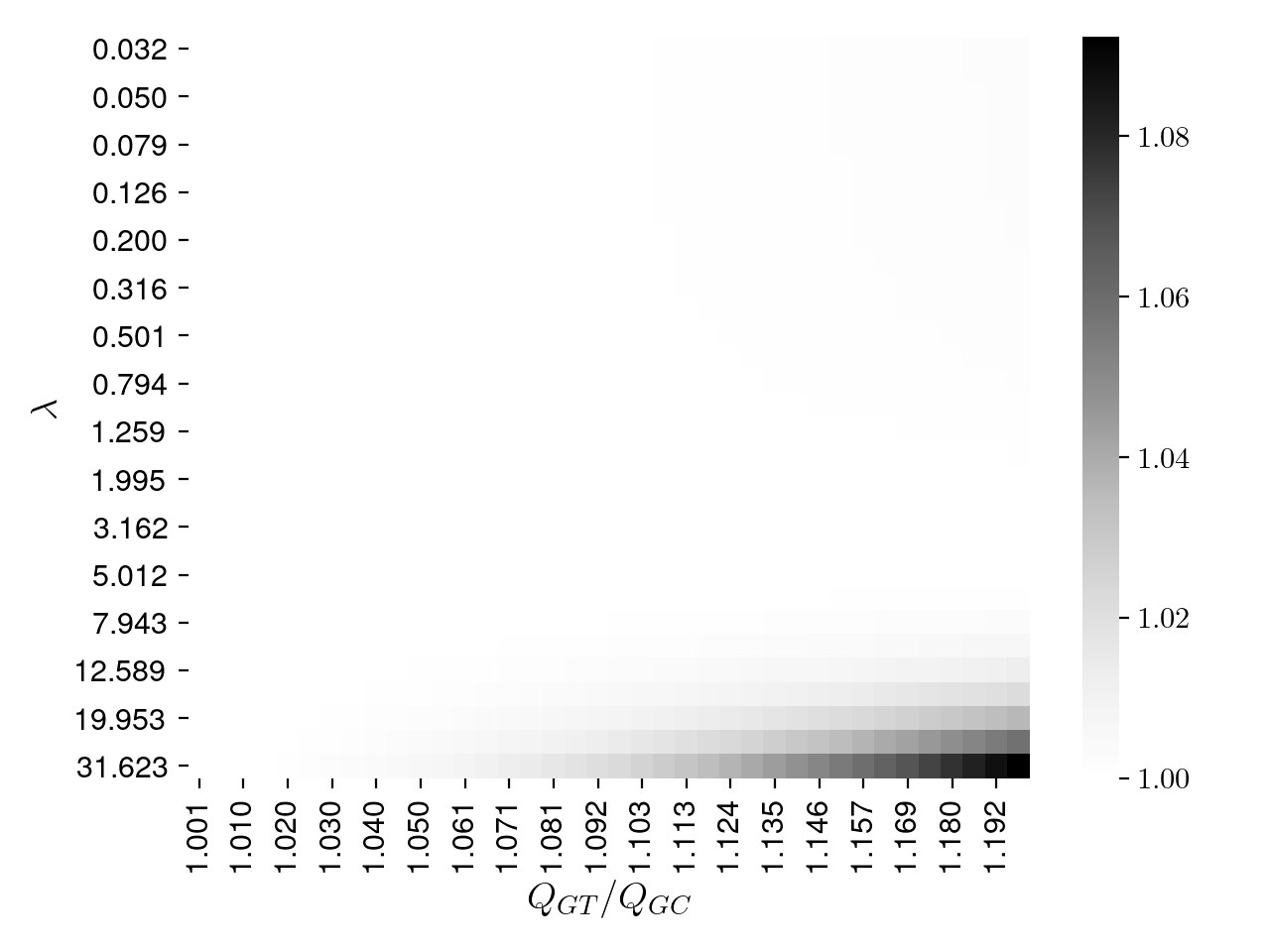}
    \caption{Approximation ratio $\beta_\LR$ in asymptotic variance of the allocation $a_L=0.5$ in $\LR$ experiments.}
    \label{fig:lr_var_approx_ratio}
\end{figure}

\end{document}